\documentclass[10pt,a4paper]{article}

\usepackage{epsfig}
\usepackage{microtype}

\usepackage{amsmath}

\usepackage{amssymb}
\usepackage{amsthm}
\usepackage{thmtools}
\usepackage{thm-restate}
\usepackage{relsize}
\usepackage{mathtools}
\usepackage{titlesec}
\usepackage{fancybox}
\usepackage{xcolor}
\usepackage{xspace}
\usepackage{enumerate}
\usepackage{graphicx}
\usepackage{epstopdf}

\usepackage{cleveref}
\usepackage{algorithm}
\usepackage{paralist}
\usepackage{url}

\usepackage{a4wide}

\usepackage{authblk}


\renewcommand{\epsilon}{\varepsilon}

\newcommand{\myurl}[1]{{\footnotesize \url{#1}}}

\newcommand{\donotshow}[1]{}

\newcommand{\ignore}[1]{}

\newcommand{\proofendswithequation}{\vspace*{-\baselineskip}\vspace*{-\belowdisplayskip}\par}

\newcommand{\argmin}{\mathop {\rm argmin}}
\providecommand{\R}{\mathbb{R}}
\newcommand{\Rplus}{\R_{\ge 0}}

\newcommand {\abs}[1] {| #1 |}

\newcommand{\set}[2]{ \left\{\, #1 \mbox{ {\rm :} } #2 \,\right\}  }
\providecommand{\set}[2]{ \left\{\, #1 \,\, \mbox{{\rm :}}\,\, #2 \, \right\}  }

\newcommand{\norm}[1]{\vert\!\vert #1 \vert\!\vert}
\newcommand{\onenorm}[1]{{\norm{#1}_{1}}}

\newcommand{\diag}{\mathrm{diag}}

\newcommand{\E}{\mathcal{E}}

\DeclareMathOperator{\supp}{{\mathrm{supp}}}

\makeatletter
\renewcommand{\paragraph}{%
 \@startsection{paragraph}{4}%
 {\z@}{1.2ex \@plus .5ex \@minus .1ex}{-.7em}%
 {\normalfont\normalsize\bfseries}%
}
\makeatother

\newtheorem{lemma}{Lemma}

\newtheorem{theorem}{Theorem}
\newtheorem{fact}{Fact}
\newtheorem{assumption}{Assumption}


\newcommand{\lref}[1]{(\ref{#1})}
\renewcommand{\epsilon}{\varepsilon}

\newcommand{\normtwo}[1]{\norm{#1}_2}
\newcommand{\twonorm}[1]{\normtwo{#1}}

  \newcommand{\V}{{\cal V}}

\newcommand{\F}{{\cal{F}}}
\newcommand{\C}{{{\cal{C}}}}
\renewcommand{\L}{{{\cal{L}}}}

\newcommand{\Tr}{{\mathrm{Tr}}}

\renewcommand{\Im}{{\mathrm{Im}}}
\newcommand{\Ker}{{\mathrm{Ker}}}

\newcommand{\row}[2][]{#2_{#1}}
\newcommand{\col}[2][]{#2^{#1}}
\newcommand{\entry}[2][]{#2_{#1}}

  \title{Physarum-Inspired Multi-Commodity Flow Dynamics}
  \author[1]{Vincenzo Bonifaci}
  \author[2]{Enrico Facca}
  \author[3]{Frederic Folz}
  \author[4]{\\Andreas Karrenbauer}
  \author[4]{Pavel Kolev}
  \author[4]{Kurt Mehlhorn} \author[3]{\\Giovanna Morigi}
 \author[5]{Golnoosh Shahkarami}
  \author[6]{Quentin Vermande}
 

  \affil[1]{Dipartimento di Matematica e Fisica, Universit{\`a} degli Studi Roma Tre, Roma, Italy} 
  \affil[2]{Scuola Normale Superiore, Pisa, Italy} \affil[3]{Fachbereich Physik, Universit\"at des Saarlandes, Saarbr\"ucken, Germany}
  \affil[4]{Max Planck Institute for Informatics, Saarbr\"ucken, Germany}
  \affil[5]{Max Planck Institute for Informatics and Fachbereich Informatik, Universit\"at des Saarlandes}
 \affil[6]{\'Ecole Normale Sup\'erieure, Paris, France}
\setcounter{Maxaffil}{0}

\begin{document}


\maketitle

\begin{abstract} In wet-lab experiments, the slime mold Physarum polycephalum has demonstrated its ability to tackle a variety of computing tasks, among them the computation of shortest paths and the design of efficient networks. For the shortest path problem, a mathematical model for the evolution of the slime is available and it has been shown in computer experiments and through mathematical analysis that the dynamics solves the shortest path problem. In this paper, we generalize the dynamics to the network design problem. We formulate network design as the problem of constructing a network that efficiently supports a multi-commodity flow problem. We investigate the dynamics in computer simulations and analytically. The simulations show that the dynamics is able to construct efficient and elegant networks. In the theoretical part we show that the dynamics minimizes an objective combining the cost of the network and the cost of routing the demands through the network. We also give alternative characterizations of the optimum solution. 
\end{abstract}

Keywords: Physarum; network design; multi-commodity flow; dynamical system



\section{Introduction}

Physarum polycephalum is a slime mold in the Mycetozoa group~\cite{Baldauf:1997}. Its cells can grow to considerable size and it can form networks.
In wet-lab experiments, the slime mold Physarum polycephalum was applied to a diverse variety of computing problems: computation of shortest paths in a network~\cite{Nakagaki-Yamada-Toth}, computation of minimum risk paths~\cite{PhysarumMinimumRiskPath},
design of efficient networks~\cite{Tero-Takagi-etal,Adamatzky-Belgium,Watanabe-Tero2011,Adamatzky-Martinez}, computation of Voronoi and Delaunay diagrams~\cite{AdamatzkyProximity,Adamatzky-Voronoi-Delauney}, computing circuits and electronics~\cite{Jones-Adamatzky} and many more. We refer the reader to~\cite{Adamatzky38,PhysarumBook} for a survey of the many problems that can be attacked using live Physarum polycephalum and for which the slime is able to find good or even optimal solutions to instances of limited size.  Figure~\ref{Wet-Lab Experiments} illustrates the shortest path and the network formation experiments in~\cite{Nakagaki-Yamada-Toth,Tero-Takagi-etal}.

There is also considerable work aimed at understanding the inner workings of Physarum polycephalum, for example, how global synchronisation can result from random peristaltics~\cite{Alim2013}, how information can be transported and a memory can exist in an organism without a nervous system~\cite{Alim2017,Alim2021}, and whether tubes of the mold can transfer electricity~\cite{Whiting}.

It is important to stress that the plasmodium of Physarum polycephalum is not an automaton.\footnote{There is a small community of researchers that think differently, see~\cite{Haughness} for example.}  The papers~\cite{Mayne:2018,Miyaji:2008} clearly demonstrate the limits of slime-mold computations even for the shortest path problem and~\cite{PhysarumBook,Mayne:2018} argue convincingly that conventional computing terminology should be applied with great care when discussing biological systems. After all, the solutions  constructed in wet-lab experiments strongly depend on the initial conditions, e.g., how much food is provided and how the plasmodium is distributed initially, the solutions are not strictly optimal but only approximately optimal, the outcomes of the experiments are not deterministic and hence hard to reproduce, and the maze in Figure~\ref{Wet-Lab Experiments} has fairly narrow edges and hence guides the slime towards building nearly straight connections. The papers~\cite{Stepney,Horstman-Stepney} discuss more generally the question what it means for a biological or physical device to compute. According to their definition, to which we subscribe, Physarum polycephalum does not compute. A comprehensive survey of analog computing models is given in~\cite{Bournez2018} and~\cite{Navlakha2011,Navlakha2015} discuss the differences and commonalities of biology and computing. 

\begin{figure}[ht]
  \begin{center} \includegraphics[height=6cm]{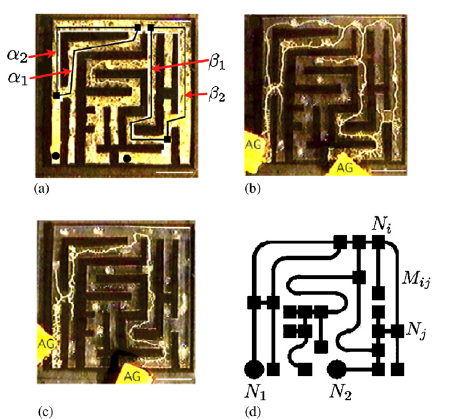}\quad\includegraphics[height=6cm]{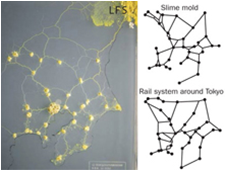}
  \end{center}
 \caption{\label{Wet-Lab Experiments} The figure on the left shows the shortest path experiment. It is reprinted from~\cite{Nakagaki-Yamada-Toth}. The edges of a graph were uniformly covered with Physarum (subfigure (a)) and food in the form of oat-meal was provided at the locations labeled AG in subfigure (b). After a while the slime retracted to the shortest path connecting the two food source (subfigure (c)). The underlying graph is shown in (d).\newline The figure on the right shows the network design experiment. It is reprinted from~\cite{Tero-Takagi-etal}. Food was provided at many places (the larger dots in the picture) and the slime was constrained to live in an area that looks similar to the greater Tokyo region. The large dot in the center corresponds to Tokyo and the empty region below it corresponds to Tokyo bay. The slime formed a network connecting the food sources. The two graphs on the right compare the network built by the slime with the railroad network in the Tokyo region. }
\end{figure}

The experimental work mentioned above instigated the development of \emph{Physarum-inspired algorithms} mimicking (parts of) the behavior of the slime mold. This is akin to algorithms mimicking ant colonies~\cite{Dorigo}, neural networks~\cite{Aggarwal-NeuralNetworks}, simulated annealing~\cite{Kirkpatrick}, and other bio-inspired computing paradigms. Physarum-inspired algorithms have been used to solve a variety of computational tasks, for example, the design of transportation networks~\cite{Tero-Takagi-etal,Watanabe-Tero2011,Yang-Mayne-Deng} and supply-chain networks~\cite{Zhang-Adamatzky2017}. For the shortest path problem, a mathematical model in the form of a coupled system of differential equations was given for the evolution of the slime, the biological relevance of the model was argued, and the model was shown to solve shortest path problems in computer simulations~\cite{Tero-Kobayashi-Nakagaki}. Mathematical proofs that the model solves (it is explained below what is meant by solves) shortest path problems can be found in~\cite{Miyaji-Ohnishi,Physarum}. The Physarum dynamics is also able to solve more general linear programs~\cite{SV-IRLS,SV-LP,Ito-Convergence-Physarum,Karrenbauer-Kolev-Mehlhorn:NonUniformPhysarum}. It is important to emphasize that the Physarum dynamics is inspired by the behavior of the mold, it captures -- at best -- parts of the behavior of the mold. 

\emph{The paper~\cite{Tero-Kobayashi-Nakagaki} is the starting point for this work.} Tero et al.\ model the slime network as an electrical network $G=(V,E)$  with time varying resistors. Each edge $e$ of the network has a fixed positive \emph{length} $\entry[e]{c}$ and a time-varying diameter $x_e(t)$. In this paper, we will refer to $c_e$ as the \emph{cost} of the edge and to $x_e$ as the \emph{capacity} of the edge. The \emph{resistance} of $e$ at time $t$ is then $r_e(t) = c_e/x_e(t)$. Let $s_0$ and $s_1$ be two fixed vertices in the network; they represent the two food sources.  One unit of electrical current is sent from $s_0$ to $s_1$. Let $q_e(t)$ be the current flowing across $e$. Then the capacity of $e$ evolves according to the differential equation
\begin{equation}\label{dynamics}  \dot{x}_e = \frac{d}{dt} x_e (t) = \abs{q_e} - x_e \quad\text{for all $e \in E$}, \end{equation}
i.e., the diameter of an edge grows (shrinks, stays unchanged) if the flow is larger than (smaller than, equal to) the current diameter. The equations for the different edges are coupled because the flow through an edge $e$ depends on the resistance of all other edges. As is customary, we write $\dot{x}$ for the derivative with respect to time and drop the time-argument of $x$ and $q$. Tero et al.~showed in computer simulations that a discretization of the model converges to the shortest path connecting the source and the sink in the following sense: $x_e(\infty) = 1$ for the edges on the shortest path and $x_e(\infty) = 0$ for the other edges. This assumes that the shortest path is unique. Bonifaci et al.~\cite{Physarum} proved that the dynamics converges to the shortest source-sink path, i.e., that the Physarum dynamics \lref{dynamics} solves the shortest path problem. A related dynamics \begin{equation*} \dot{x}_e = \abs{q_e}^\mu - x_e \quad\text{for all $e \in E$}, \end{equation*}
where $\mu$ is a constant larger than one, may converge to a path different from the shortest path depending on the initial conditions~\cite{Miyaji:2008}.

In this paper, we generalize the model of Tero et al.~\cite{Tero-Kobayashi-Nakagaki} to network design. We introduce a simple model, again in the form of a system of differential equations, 
  \begin{itemize}
  \item that for the case of the shortest path problem agrees with the model proposed in Tero et al.,
  \item that in computer simulations qualitatively reconstructs the behavior observed in the wet-lab network design experiments~\cite{Nakagaki-Yamada-Toth,Tero-Takagi-etal}, and
  \item that is amenable to theoretical analysis.
  \end{itemize}
  We do not argue biological plausibility and we do not claim any biological relevance. We also do not try to describe a general model of the Physarum that fits all experimental setups; less ambitiously, we focus on the network design experiments. This is a paper in algorithm design and analysis. 

The shortest path problem can be viewed as a network design problem. Given two vertices in a graph, the goal is to construct the cheapest network connecting the given vertices. The solution is the shortest path connecting the vertices. The shortest path problem can also be viewed as a minimum cost flow problem. We want to send one unit of flow between the given vertices and the cost of sending a certain amount across an edge is equal to the cost of the edge times the amount sent. The solution is the shortest path connecting the given vertices. 

Networks are designed for a particular purpose. For this paper, the purpose is multi-commodity flow. Suppose that we have many pairs of vertices between which we want to send flow.  We want to construct a network that satisfies the many demands in an economical way. Economical could mean many things: minimum cost of the network (that's the Steiner tree problem), shortest realization of each demand (then the network is the union of the shortest paths), or something in the middle, i.e., some combination of the total cost of the network and the cost of routing the demands in the network. We assume \emph{economies of scale}, i.e., that there is some benefit in sharing a connection, i.e., the cost of sending one unit each of two commodities across an edge is lower than two times the cost of sending one unit of one commodity across the edge. In Section~\ref{Case Studies}, we give examples of how sharing is encouraged by our model. The principles of our model are simple. As in Tero et al.\ each edge has a cost and a capacity. We have demands between pairs of vertices; this could be passengers entering the network at some station and leaving the network at some other station. The demand $i$ between vertices $s^{(1)}_i$ and $s^{(2)}_i$ leads to an electrical flow $q^i$. For each edge $e$, we aggregate the individual flows $q^i(e)$ to an overall flow $q(e)$. This flow is then used in equation~\lref{dynamics}. For the aggregation we use either the one-norm or the two-norm of the vector $(q^1(e), q^2(e), \ldots)$ and find that the two-norm aggregation is to be prefered. We mention that one-norm aggregation is used in~\cite{Watanabe-Tero2011}.

This paper is organized as follows. In Section~\ref{The Model} we introduce our model and in Section~\ref{Results} we review our results. In Section~\ref{Related Work} we discuss related work. In Section~\ref{Case Studies}, we report about paper-and-pencil and computer experiments. The analytical part starts with Section~\ref{Preliminaries}. We review basic facts about electrical flows. In subsequent sections, we prove the existence of a solution defined for $t \in [0,\infty)$, characterize the fixed points, introduce a Lyapunov function $\L$ for the dynamics, derive further properties of the Lyapunov minimum, show convergence to the Lyapunov minimum, and finally make a connection to mirror descent. Section~\ref{conclusions} offers conclusions.

\renewcommand{\pi}{p^{i}}
\newcommand{\qi}{q^{i}}
\newcommand{\bi}{b^{i}}
\newcommand{\iof}[2][]{{#2}^{#1}}
\newcommand{\mybi}[1]{\iof[#1]{b}}
\newcommand{\fluxi}{f^{i}}

\section{The Model}\label{The Model}

Before presenting our model we want to fix some notations.  Given a
matrix $M \in \R^{n\times m}$, we will denote with
$\row[i]{M}$, $\col[j]{M}$, and $\entry[i,j]{M}$ its $i^{th}$-row, its
$j^{th}$-column, and its entry $i,j$, respectively. Moreover, we
denote with $\Tr(M)=\sum_{i=1}^{n}\entry[i,i]{M}$ the trace of a square
matrix $M\in \R^{n,n}$.

Our model for the multi-commodity network design problem is inspired
by the Physarum model for the shortest path problem and its generalization to linear programming. Let $A \in
\R^{n\times m}$ be an arbitrary real
matrix  and let $\iof[1]{b}$ to $\iof[k]{b}$ in $\R^n$ be $k$ right-hand sides such that each of the linear systems $A f = \iof[i]{b}$ is solvable. 

The reader may want to think of $A$ as
  the node-arc incidence matrix of a connected undirected graph $G$
  with $n$ nodes and $m$ edges, i.e., for each $e = (u,v) \in E$, the
  column $\row[e]{(A^T)}$ has an entry $+1$ in position $u$ and entry $-1$ in
  position $v$; the orientation of the edge is arbitrary, but fixed.
  We have $k$ different source-sink pairs $(s^{(1)}_i, s^{(2)}_i)$, $1
  \le i \le k$. Let $\iof[i]{b} \in R^n$ be the vector with entry $+1$ in
  position $s^{(1)}_i$ and entry $-1$ in position $s^{(2)}_i$. All
  other entries of $\iof[i]{b}$ are zero.  Since $G$ is assumed to be
  connected, the linear system $A f = \iof[i]{b}$ admits solutions for all
  $i$. We refer to this setting as the \emph{multi-commodity flow setting}. 

  Now, for any non-negative vectors $x \in \R^m$ and $c \in \R^m$, we define the
following matrices
\begin{equation}
  X = \diag(x)
  \quad
  C = \diag(c)
  \quad
  L(x)=A X C^{-1} A^T.
\end{equation}
Given a solution $f$ of $Af = \iof[i]{b}$, we use
\[
E_x(f) = \begin{cases}   \sum_e c_e/x_e f_e^2  & \text{if $\supp f \subseteq \supp x$,}\\
  \infty & \text{if $\supp f \not\subseteq \supp x$.}
\end{cases}
\]
to denote the \emph{energy} of $f$ with respect to $x$.
Let $\iof[i]{q}(x)  \in \R^m$, or simply $\iof[i]{q}$, be the minimum energy solution
i.e.,
\newcommand{\Argmin}{\operatornamewithlimits{argmin\vphantom{q}}}
\begin{equation}\label{def of q}
  \qi=\Argmin_{f \in \R^{m}} \left\{ E_x(f) \, : \, A f = \iof[i]{b} \right\}.
\end{equation}
 The optimal solution of the optimization problem (\ref{def of q}) (see
 Section~\ref{Preliminaries} for details) is given by
\begin{equation}
  \qi(x)=X C^{-1} A^T \iof[i]{p}(x),\quad i=1,\ldots,k,
\end{equation}
where $\iof[i]{p}(x)$, or simply $\iof[i]{p}$, is defined as any solution to
\[     L(x) \iof[i]{p} = \iof[i]{b},\quad i=1,\ldots,k.\]
In the multi-commodity flow setting, the minimal energy solution $\qi$ is simply the electrical flow realizing the demand $\bi$ and $\pi$ are the corresponding node potentials. The node potentials are not unique; they can be made unique by defining a particular node as ground, i.e., giving it potential zero. The electrical flow is induced 
by the potential drops $A^T \iof[i]{p}$ multiplied by the conductivity $X C^{-1}$.  If we now define the matrix $B\in
\R^{n,k}$ by
\[
B=\left(\col[1]{B},\ldots,\col[k]{B}\right):=\left(\iof[1]{b},\ldots,\iof[k]{b}\right)\;, 
\]
we can express the potentials, the potential drops per unit cost, and the fluxes corresponding
to the different commodities in the following matrix form
\begin{equation}
\begin{aligned}
  P&=\left(\col[1]{P},\ldots,\col[k]{P}\right)&&:=\left(\iof[1]{p},\ldots,\iof[k]{p}\right)&&\text{ with } L(x) P = B\\
\Lambda&=\left(\col[1]{\Lambda},\ldots,\col[k]{\Lambda}\right)& &:=\left(\lambda^{1},\ldots,\lambda^{k}\right)&&=C^{-1} ATP\\
Q&=\left(\col[1]{Q},\ldots,\col[k]{Q}\right)&&:=\left(q^{1},\ldots,q^{k}\right)&&=X \Lambda.
\end{aligned}
\end{equation}
Note that we use $\col[i]{P}$ and $\iof[i]{p}$ interchangeably and similarly for $\Lambda$ and $Q$. 

 We are now ready to define our model. We let the
  vector $\row[e]{Q}$ of values $\entry[e,i]{Q}$ for any edge $e$ determine the
  capacity of an edge and study different ways of combining the
  individual solutions, in particular, one-norm and two-norm\footnote{In the multi-commodity flow setting, the $q_i$'s are flows in the network $G$. The fact that flows from different demand pairs on the same edge do not cancel each other (not even partially) seems a bit strange at the microscopic level. After all, physically, only the cytoplasm is being transported. How does an edge "distinguish" between the cytoplasm of pair $i$ and the cytoplasm of pair $i'$? For this reason, we do not claim biological plausibility for our model. When $k=1$, clearly this was not an issue.}.
%
This leads to the following dynamics:
\begin{align}
  \dot{x}_e &=  -x_{e}+\sum_{i}\abs{\entry[e,i]{Q}}=x_{e}\left(-1+\sum_{i}\frac{\abs{\entry[e,i]{Q}}}{x_{e}}\right)=x_{e}\left(-1+\sum_{i}\abs{\entry[e,i]{\Lambda}}\right)=x_{e}(\onenorm{\Lambda_{e}}-1), \label{one-norm dynamics}\\
  \dot{x}_e &= -x_{e}+\sqrt{\sum_{i}\entry[e,i]{Q}^{2}}=x_{e}\left(\sqrt{{\sum}_{i}\left(\frac{{\entry[e,i]{Q}}}{x_{e}}\right)^{2}}-1\right)=x_{e}\left(\sqrt{{\sum}_{i}\entry[e,i]{\Lambda}^{2}}-1\right)=x_{e}(\normtwo{\Lambda_{e}}-1).
  \label{two-norm dynamics}
\end{align}
In~\lref{one-norm dynamics}, we form the one-norm $\onenorm{\Lambda_e}$ of the different normalized potential drops across  any edge $e$, and in~\lref{two-norm dynamics}, we form the two-norm $\twonorm{\row[e]{\Lambda}}$. 
For $k = 1$, the one-norm and the two-norm dynamics coincide. The results of this paper suggest that the two-norm dynamics is the appropriate generalization to larger $k$.

The following \emph{generalized Physarum dynamics} introduced in~\cite{Bonifaci16-Generalized-Physarum} subsumes the two-norm
dynamics as a special case. For each $e \in E$, let $g_e$ be a non-negative, increasing and differentiable function with $g_e(1) = 1$: 
\begin{equation}\label{eq:GenDyn}
	\dot{x}_{e} = x_e \left(g_e\left(\twonorm{\Lambda_e}\right)  - 1\right). 
      \end{equation}
      The two-norm dynamics
      is a special case with $g_e(z) = 1 + \left( z- 1\right)$.
      Other examples are $g_e(z) = 1 + d_e\left( z - 1\right)$ and $g_e(z) = 1 + d_e\left( z^2 - 1\right)$ where $d_e > 0$ is the ``reactivity''~\cite{Karrenbauer-Kolev-Mehlhorn:NonUniformPhysarum} of edge $e$, $g_e(z) = z^{\mu_e}$ for some $\mu_e > 0$ or $g_e(z) = (1 + \alpha_e) z^{\mu_e}/(1 + \alpha_e z^{\mu_e})$ for some $\mu_e, \alpha_e > 0$.

      The right hand sides of~\lref{one-norm dynamics} to~\lref{eq:GenDyn} are defined for any
      \[ x \in \Omega = \set{
        x \in \R^m_{\ge 0}}{\exists P\in\R^{n,k} {\text{ solving } } 
        L(x) P=B}.\]

      \section{Our Results}\label{Results}

      In the analytical part of the paper, we ask and answer the
      following questions for the generalized Physarum dynamics. We
      have little to say about the one-norm dynamics.
      \begin{itemize}
        \item Does the dynamics have a solution $x(t)$ with $t \in [0,\infty)$?
      \item Does the dynamics converge?
      \item What are the fixed points and the limit points of the dynamics?
        \item What does the dynamics optimize?
      \item How can we characterize the limit points?
      \end{itemize}
      In the experimental part of the paper, we perform computer and pencil-and-paper simulations of the dynamics and address the following questions:
      \begin{itemize}
      \item How strong are the sharing effects of the dynamics? How far deviate individual flows from their shortest realization in order to benefit from sharing edges with other flows?
        \item Does the dynamics construct ``nice'' networks? Does it qualitatively reconstruct the wet-lab experiments in~\cite{Tero-Takagi-etal}?
        \end{itemize}

Our first result concerns the existence of solutions with domain $[0,\infty)$ for the generalized Physarum dynamics.

        \begin{theorem}\label{thm: Existence}Let $x(0) \in \R^m_{> 0}$. The generalized Physarum dynamics has a solution $t \mapsto x(t) \in \R^m_{> 0}$ for $t \in [0,\infty)$. \end{theorem}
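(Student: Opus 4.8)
The plan is the classical scheme \emph{local existence $+$ a priori bounds $\Rightarrow$ global existence}, exploiting the multiplicative form of the right-hand side of \lref{eq:GenDyn}. First I would check that this right-hand side is a well-defined, locally Lipschitz vector field on the open set $\R^m_{>0}$, which is contained in $\Omega$: for $x>0$ the diagonal matrix $XC^{-1}$ is positive definite, so $\Im L(x)=\Im(AXC^{-1}A^{T})=\Im A$, and since each $\bi\in\Im A$ by hypothesis, $L(x)P=B$ is solvable. The matrix $P$ is not unique, but $\Lambda=C^{-1}A^{T}P$ is, because two solutions of $L(x)P=B$ differ by an element of $\Ker L(x)=\Ker A^{T}$. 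On $\R^m_{>0}$ the rank of $L(x)$ equals the constant $\mathrm{rank}(A)$, so $x\mapsto L(x)^{+}$ (Moore--Penrose pseudoinverse) is smooth there, hence so is $x\mapsto\Lambda(x)=C^{-1}A^{T}L(x)^{+}B$; composing with the $1$-Lipschitz map $y\mapsto\twonorm{y}$ and the differentiable $g_e$, and multiplying by $x_e$, produces a locally Lipschitz field. Picard--Lindel\"of then gives a unique maximal solution $x\colon[0,T^{*})\to\R^m_{>0}$, and it remains to show $T^{*}=\infty$.

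By the standard escape lemma --- a maximal solution whose time interval is bounded must eventually leave every compact subset of its domain --- it suffices to show that, assuming $T^{*}<\infty$, the trajectory stays in one compact subset of $\R^m_{>0}$, i.e.\ to bound each $x_e(t)$ away from $0$ and from $\infty$ on $[0,T^{*})$. The lower bound is free from the structure: since $g_e\ge 0$ we have $\dot x_e=x_e(g_e(\twonorm{\Lambda_e})-1)\ge -x_e$, so $x_e(t)\ge x_e(0)e^{-t}\ge x_e(0)e^{-T^{*}}=:\delta_e>0$; this also re-confirms forward-invariance of the positive orthant. For the upper bound I would control the potential drops via the electrical energies: fix feasible flows $f^i_0$ with $Af^i_0=\bi$ once and for all; since $x(t)>0$,
\[
  E_{x(t)}(\qi)\;\le\;E_{x(t)}(f^i_0)\;=\;\sum_e \frac{c_e (f^i_{0,e})^2}{x_e(t)}\;\le\;\sum_e \frac{c_e (f^i_{0,e})^2}{\delta_e},
\]
a finite constant, so $\mathcal{E}(x(t)):=\sum_i E_{x(t)}(\qi)\le M<\infty$. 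From $E_x(\qi)=\sum_e c_e Q_{e,i}^2/x_e$ we get $Q_{e,i}^2\le x_e E_x(\qi)/c_e$, and since $Q=X\Lambda$ (hence $\Lambda_{e,i}=Q_{e,i}/x_e$),
\[
  \twonorm{\Lambda_e}^2=\frac{1}{x_e^2}\sum_i Q_{e,i}^2\;\le\;\frac{\mathcal{E}(x)}{c_e\,x_e}\;\le\;\frac{M}{c_e\,\delta_e}\;=:\;N_e^{2}.
\]
As $g_e$ is increasing, $g_e(\twonorm{\Lambda_e})\le g_e(N_e)<\infty$, hence $\dot x_e\le g_e(N_e)\,x_e$ and $x_e(t)\le x_e(0)e^{g_e(N_e)T^{*}}$. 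Thus the trajectory lies in the compact box $\prod_e[\delta_e,\,x_e(0)e^{g_e(N_e)T^{*}}]\subset\R^m_{>0}$, contradicting maximality of $T^{*}$; therefore $T^{*}=\infty$.

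I expect the only genuinely non-routine points to be: (i) checking that the field is well-defined and smooth \emph{despite the non-uniqueness of} $P$, which rests on the constant-rank observation on $\R^m_{>0}$; and (ii) the a priori bound on $\twonorm{\Lambda_e}$, which requires linking the potential drops to the electrical energies and then using the trivial-looking but essential estimate $x_e(t)\ge x_e(0)e^{-t}$ coming from the factor $x_e$ in front of the bracket in \lref{eq:GenDyn}. Everything else --- local existence, the escape lemma, and the exponential comparison for $x_e$ --- is routine. (In the multi-commodity flow setting point~(ii) is even easier: each $\qi$ is a unit electrical flow and hence path-decomposable, so $|Q_{e,i}|\le 1$ and $\twonorm{\Lambda_e}\le\sqrt{k}/x_e$, after which the lower bound on $x_e$ again bounds the right-hand side on finite time intervals.)
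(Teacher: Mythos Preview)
Your argument is correct and follows the same overall scheme as the paper --- local existence from local Lipschitz continuity, a lower bound $x_e(t)\ge x_e(0)e^{-t}$ from $g_e\ge 0$, an upper bound, and then the escape/continuation lemma. The one genuine difference is how the upper bound is obtained. The paper invokes, with a forward reference, the Lyapunov function $\L(x)=\tfrac{1}{2}(c^Tx+\mathcal{E}(x))$ shown later to be nonincreasing along trajectories; this immediately gives a \emph{time-uniform} bound $c^Tx(t)\le 2\L(x(0))$, so each $x_e$ stays bounded independently of the length of the interval. You instead bound $\twonorm{\Lambda_e}$ directly: fix feasible flows $f_0^i$, use energy minimality of $q^i$ together with the lower bound $x_e(t)\ge\delta_e=x_e(0)e^{-T^*}$ to get $\mathcal{E}(x(t))\le M$, deduce $\twonorm{\Lambda_e}^2\le M/(c_e\delta_e)$, and conclude $\dot x_e\le g_e(N_e)\,x_e$. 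This is self-contained --- it does not rely on the Lyapunov computation from a later section --- but the constants depend on $T^*$, so it yields only a finite-interval bound, which is of course still enough for the contradiction argument. Your treatment of well-definedness of the field (uniqueness of $\Lambda$ despite non-uniqueness of $P$, constant rank of $L(x)$ on $\R^m_{>0}$, smoothness of $L(x)^{+}$) is more explicit than the paper's one-line claim that the $q^i$ are rational functions of $x$; both are valid.
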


The \emph{cost of a capacity vector} $x$ is defined as
        \[    \C(x) = c^T x = \sum_e c_e x_e.\]
        The \emph{energy dissipation} for a single demand $b$ induced by a capacity vector $x$ is defined as
        \[  \min_{f;\ Af = b} E_x(f) = \sum_e r_e/x_e q_e^2 = b^Tp = p^T L(x) p,\]
        where $q$ is the minimum energy solution of $Af = b$ with respect to $x$ and $p$ is the corresponding node potential. We will show the second equality in Section~\ref{Preliminaries}. The last equality follows from $L(x) p = b$. The \emph{energy dissipation} $\E(x)$ for a set of $k$ demands $\iof[1]{b}$, \ldots, $\iof[k]{b}$ is the sum of the energy dissipations for the individual demands, i.e.,
        \[  \E(x) = \sum_i E_x(\qi)= \sum_i (\iof[i]{b})^T \iof[i]{p} = \Tr(P^T L(x) P), \]
where $\iof[i]{p}$ is the node potential with respect to the minimum energy solution $\qi$ to the $i$-th demand. 

The \emph{fixed points of a dynamics} are the points $x$ with $\dot{x}_e = 0$ for all $e$. We use $\F_1$ and $\F_g$ to denote the fixed points (also called equilibrium points) of the one-norm and the generalized dynamics. 

   \begin{lemma}[The fixed points of the one-norm dynamics] $x \in \F_1$ iff for all $e$ either $x_e = 0$ or $\onenorm{\row[e]{\Lambda}} = 1$. The latter condition is equivalent to $\onenorm{Q_e} = x_e$ as well as to $\onenorm{\row[e]{(A^T)} P} =c_{e}$. 
   \end{lemma}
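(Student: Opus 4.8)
\medskip
\noindent\emph{Proof plan.} The plan is to read the main characterization directly off the differential equation~\lref{one-norm dynamics} and then obtain the two reformulations by unfolding the definitions $Q = X\Lambda$ and $\Lambda = C^{-1}A^{T}P$.

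First I would note that at any $x \in \Omega$ the quantities $P$, $\Lambda$, $Q$ are defined (this is exactly what $x \in \Omega$ means), and by~\lref{one-norm dynamics} we have $\dot{x}_{e} = x_{e}\bigl(\onenorm{\row[e]{\Lambda}} - 1\bigr)$ for every edge $e$. Hence $x \in \F_{1}$ if and only if $x_{e}\bigl(\onenorm{\row[e]{\Lambda}} - 1\bigr) = 0$ for all $e$, i.e.\ for each $e$ either $x_{e} = 0$ or $\onenorm{\row[e]{\Lambda}} = 1$. This is already the first assertion.

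Next I would establish the two equivalent forms of the per-edge condition. Since $Q = X\Lambda$ and $x \ge 0$, for each $e$ we have $\onenorm{Q_{e}} = \sum_{i}\abs{x_{e}\entry[e,i]{\Lambda}} = x_{e}\,\onenorm{\row[e]{\Lambda}}$; therefore the disjunction ``$x_{e} = 0$ or $\onenorm{\row[e]{\Lambda}} = 1$'' is equivalent to the single equation $\onenorm{Q_{e}} = x_{e}$ (equivalently, $\dot{x}_{e} = \onenorm{Q_{e}} - x_{e}$ by the first equality in~\lref{one-norm dynamics}, which vanishes iff $\onenorm{Q_{e}} = x_{e}$). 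Similarly, $\Lambda = C^{-1}A^{T}P$ gives $\row[e]{\Lambda} = c_{e}^{-1}\,\row[e]{(A^{T})}P$ with $c_{e} > 0$, so $\onenorm{\row[e]{\Lambda}} = c_{e}^{-1}\onenorm{\row[e]{(A^{T})}P}$ and hence $\onenorm{\row[e]{\Lambda}} = 1 \iff \onenorm{\row[e]{(A^{T})}P} = c_{e}$. Combining these equivalences edge by edge finishes the proof.

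The one point that needs a word of care---the closest thing to an obstacle here---is the behaviour on edges with $x_{e} = 0$. On such an edge $\onenorm{Q_{e}} = x_{e}$ holds automatically, whereas $\onenorm{\row[e]{\Lambda}}$ and $\onenorm{\row[e]{(A^{T})}P}$ need not take the value $1$; moreover $\row[e]{(A^{T})}P$ is then not even determined by $L(x)P = B$, since two solutions of this system can differ in their $A^{T}P$-components only on edges $e$ with $x_{e} = 0$. This causes no problem because in the lemma these last two forms appear only inside the disjunction ``$x_{e} = 0$ or $\cdots$'', while the form $\onenorm{Q_{e}} = x_{e}$ already absorbs the disjunction; and $Q = XC^{-1}A^{T}P$ is well defined independently of the choice of $P$, since the ambiguous components of $A^{T}P$ are annihilated by the factor $X$. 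I would add one sentence to this effect so that the reformulation ``$x \in \F_{1} \iff \onenorm{Q_{e}} = x_{e}$ for all $e$'' is unambiguously well posed.
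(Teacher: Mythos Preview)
Your proposal is correct. The paper actually does not give an explicit proof of this one-norm lemma; it only proves the analogous statement for the generalized (two-norm) dynamics in Lemma~\ref{lem:GenPhyDin}. Your argument is the one-norm analogue of that proof and follows exactly the same pattern: read off the condition $x_e=0$ or $\onenorm{\row[e]{\Lambda}}=1$ directly from the dynamics~\lref{one-norm dynamics}, then translate via $Q=X\Lambda$ and $\Lambda=C^{-1}A^{T}P$ to the other two forms. Your extra paragraph on well-definedness at edges with $x_e=0$ (noting that $\row[e]{(A^T)}P$ may not be determined there but $\row[e]{Q}$ is) is a point of care the paper does not make explicit.
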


   The fixed points of the generalized dynamics have a remarkable property. \emph{For a fixed point $x \in \F_g$, the cost $\mathcal{C}(x)$ equals the dissipated energy $\mathcal{E}(x)$.}

	\begin{lemma}[The fixed points of the generalized Physarum dynamics]\label{lem:FixPoints}
		$x \in \F_g$ iff for all $e$ either $x_e = 0$ or $\twonorm{\row[e]{\Lambda}} = 1$. 
		The latter condition can be expressed equivalently by $\twonorm{\row[e]{(A^T)} P} =c_{e}$ 
		and also by $x_{e}= \twonorm{\row[e]{Q}}$.
		Further, for every $x\in\mathcal{F}_{g}$ we have 
		$x\geq0$, $AQ=B$, and
		\[
			\Tr(B^{T}P)=\mathcal{E}(x) = \mathcal{C}(x)=c^{T}x, 
                      \]
                      i.e., for fixed points of the generalized Physarum dynamics the cost equals the energy dissipation. 
                    \end{lemma}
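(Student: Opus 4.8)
The plan is to read every assertion directly off the defining identities $\Lambda = C^{-1}A^{T}P$, $Q = X\Lambda$ and $L(x)P = AXC^{-1}A^{T}P = B$, together with the hypotheses on $g_e$; nothing about the \emph{flow} of the dynamics is needed, only its right-hand side.

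First I would settle the fixed-point characterization, which is an easy ``iff'' in both directions. By definition $x \in \mathcal{F}_g$ iff $\dot x_e = x_e\bigl(g_e(\twonorm{\Lambda_e}) - 1\bigr) = 0$ for every $e$, i.e. $x_e = 0$ or $g_e(\twonorm{\Lambda_e}) = 1$; since $g_e$ is increasing with $g_e(1) = 1$ (so $g_e$ attains the value $1$ only at $1$), the second alternative is equivalent to $\twonorm{\Lambda_e} = 1$. (When $x_e > 0$ the row $\row[e]{(A^T)}P$, and hence $\Lambda_e$ and $Q_e$, does not depend on the particular solution $P$ of $L(x)P = B$: if $L(x)(P-P') = 0$ then $\Tr\bigl((P-P')^{T}L(x)(P-P')\bigr) = \sum_e \tfrac{x_e}{c_e}\twonorm{\row[e]{(A^T)}(P-P')}^2 = 0$, forcing $\row[e]{(A^T)}(P-P') = 0$ whenever $x_e > 0$; this can also be taken from the preliminaries.) The two reformulations are then immediate: $\Lambda_e = c_e^{-1}\,\row[e]{(A^T)}P$ gives $\twonorm{\Lambda_e} = 1 \iff \twonorm{\row[e]{(A^T)}P} = c_e$, and $Q_e = x_e\Lambda_e$ gives $\twonorm{Q_e} = x_e\twonorm{\Lambda_e}$, so — using that $x_e = 0$ already forces $Q_e = 0$ — the condition ``$x_e = 0$ or $\twonorm{\Lambda_e} = 1$ for every $e$'' is exactly ``$x_e = \twonorm{Q_e}$ for every $e$''.

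Next I would dispose of the three remaining statements. The inequality $x \ge 0$ holds because fixed points lie in the domain $\Omega$ of the dynamics, and $\Omega \subseteq \R^m_{\ge 0}$. For $AQ = B$ one computes $AQ = AX\Lambda = AXC^{-1}A^{T}P = L(x)P = B$, and similarly $\mathcal{E}(x) = \Tr(P^{T}L(x)P) = \Tr(P^{T}B) = \Tr(B^{T}P)$, both valid for every $x \in \Omega$ (using $L(x)P = B$ and invariance of the trace under transposition). The fixed-point property enters only in the last equality $\mathcal{E}(x) = \mathcal{C}(x)$: expanding the energy edge by edge,
\[
\mathcal{E}(x) \;=\; \sum_i E_x(q^i) \;=\; \sum_{e:\, x_e > 0}\frac{c_e}{x_e}\sum_i (q^i_e)^2 \;=\; \sum_{e:\, x_e > 0}\frac{c_e}{x_e}\,\twonorm{Q_e}^2 ,
\]
where edges with $x_e = 0$ contribute nothing since then every $q^i_e = 0$ — precisely the support convention in the definition of $E_x$. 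At a fixed point $\twonorm{Q_e} = x_e$ for every $e$ with $x_e > 0$, so the $e$-th summand equals $c_e x_e$; adding back the vanishing terms with $x_e = 0$ yields $\mathcal{E}(x) = \sum_e c_e x_e = c^{T}x = \mathcal{C}(x)$.

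The argument is essentially bookkeeping, so I do not expect a genuine obstacle. The two points that require a word of care — and which I would state explicitly rather than skip — are the well-definedness of $\Lambda_e$ on $\supp x$ when $L(x)$ is singular (handled above, and in any case covered by the preliminaries), and the reading of the formally indeterminate $0/0$ terms in the energy sum as $0$, which is forced by the support condition built into $E_x$. The one hypothesis genuinely used beyond linear algebra is that $g_e$ is increasing with $g_e(1)=1$, so that $g_e(z) = 1$ pins $z = 1$.
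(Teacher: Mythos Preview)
Your proposal is correct and follows essentially the same route as the paper's own proof: read off $\dot x_e=0\Leftrightarrow x_e=0$ or $g_e(\twonorm{\Lambda_e})=1$, use that $g_e$ is increasing with $g_e(1)=1$ to reduce to $\twonorm{\Lambda_e}=1$, rewrite via $\Lambda_e=c_e^{-1}\row[e]{(A^T)}P$ and $Q_e=x_e\Lambda_e$, and then compute $\mathcal{E}(x)=\sum_e (c_e/x_e)\twonorm{Q_e}^2=\sum_e c_e x_e=\mathcal{C}(x)$. If anything you are more careful than the paper, which leaves the well-definedness of $\Lambda_e$ on $\supp x$ and the $0/0$ convention in the energy sum implicit.
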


The beauty goes further. \emph{The dynamics follows a path along which the sum of cost and energy dissipation decreases and, under mild additional assumptions, minimizes the sum in the limit of $t \rightarrow \infty$}. 
Let 
\[     \L(x) = \frac{1}{2}( \C(x) + \E(x)) = \frac{1}{2} ( c^T x + \sum_i (\iof[i]{b})^T \iof[i]{p}) \]
be one-half of the sum of the cost and the enery dissipation of the network. 
We show in Section~\ref{Lyapunov Function} that the function $\L$
is a \emph{Lyapunov function} for the generalized Physarum dynamics, in particular, $\L(x(t))$ is a non-negative decreasing function of $t$. Formally, the conditions for a Lyapunov function are: $\L(x) \ge 0$ for $x \in \Omega$, $\frac{d}{dt}\L(x(t)) = \langle \nabla L, \dot{x} \rangle \le 0$ for all $t$. In the case $k = 1$, $\L$ is also a Lyapunov function for the one-norm dynamics as shown in~\cite{Karrenbauer-Kolev-Mehlhorn:NonUniformPhysarum}.
Let
\[    \V = \set{x}{\langle \nabla \L, \dot{x} \rangle = 0}\]
be the set of points in which the dynamics does not decrease the Lyapunov function any further. 
It follows from general theorems about dynamical systems that the dynamics converges to the set $\V$. We show that $\V$ is equal to the set of fixed points $\F_g$, and that under  mild additional assumptions, the dynamics converges to the minimizer of the Lyapunov function.

\begin{theorem} $\F_g = \V$ and the generalized Physarum dynamics converges to $\V$. Moreover, if the set $\F_g$ is finite and any two points in $\F_g$ have distinct values of $\L$, the dynamics $x(t)$ converges to $x^* = \argmin_{x \in \R^m_{\ge 0}} \L(x)$. \end{theorem}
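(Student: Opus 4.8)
\emph{Proof proposal.} The plan is to dispatch the three claims in order: $\F_g=\V$, then convergence of the dynamics to $\V$, and finally the sharpening to convergence to the global minimizer $x^{*}$.

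First I would compute $\nabla\L$. Since $\C(x)=c^{T}x$ we have $\partial\C/\partial x_{e}=c_{e}$, and since $\E(x)=\sum_{i}\min_{Af=b^{i}}E_{x}(f)$ is a sum of minima of the maps $f\mapsto\sum_{e}(c_{e}/x_{e})f_{e}^{2}$, the envelope theorem (or differentiating $\Tr(P^{T}L(x)P)$ directly and using $L(x)P=B$) yields $\partial\E/\partial x_{e}=-(c_{e}/x_{e}^{2})\sum_{i}(q^{i}_{e})^{2}=-c_{e}\|\Lambda_{e}\|_{2}^{2}$. Hence $\partial\L/\partial x_{e}=\tfrac{c_{e}}{2}\bigl(1-\|\Lambda_{e}\|_{2}^{2}\bigr)$ and, along any solution,
\[
 \tfrac{d}{dt}\L(x(t))=\sum_{e}\tfrac{c_{e}}{2}\,x_{e}\,\bigl(1-\|\Lambda_{e}\|_{2}^{2}\bigr)\bigl(g_{e}(\|\Lambda_{e}\|_{2})-1\bigr).
\]
Because each $g_{e}$ is increasing with $g_{e}(1)=1$, the factor $g_{e}(\|\Lambda_{e}\|_{2})-1$ has the same sign as $\|\Lambda_{e}\|_{2}-1$, which is the opposite sign of $1-\|\Lambda_{e}\|_{2}^{2}$; so every summand is $\le0$ (recovering that $\L$ is a Lyapunov function), and a summand vanishes precisely when $x_{e}=0$ or $\|\Lambda_{e}\|_{2}=1$, which by Lemma~\ref{lem:FixPoints} is exactly the equilibrium condition at edge $e$. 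Thus $\langle\nabla\L,\dot x\rangle=0\iff\dot x=0$, i.e.\ $\V=\F_{g}$.

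Next, for convergence to $\V$: by Theorem~\ref{thm: Existence} the solution exists on $[0,\infty)$ and stays strictly positive, hence in $\Omega$, and since $\L(x(t))$ is nonincreasing and $\E\ge0$ we get $\tfrac12 c^{T}x(t)\le\L(x(t))\le\L(x(0))$, so the trajectory is bounded (as $c>0$). Provided one also checks that it stays away from the part of $\partial\Omega$ on which $L(x)P=B$ is no longer solvable, the orbit lies in a compact subset of $\Omega$ on which $\L,P,\Lambda$ are continuous, and LaSalle's invariance principle applies: $\omega(x(0))$ is nonempty, compact, connected, invariant, and contained in the set where $\tfrac{d}{dt}\L$ vanishes, which by the previous step is $\V=\F_{g}$; in particular $\mathrm{dist}(x(t),\F_{g})\to0$. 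Finally, assume $\F_{g}$ is finite with pairwise distinct $\L$-values. Then the connected set $\omega(x(0))\subseteq\F_{g}$ is a single equilibrium $x^{\infty}$, so $x(t)\to x^{\infty}$. To identify $x^{\infty}$ with $x^{*}$, note that $\E$ is convex on $\R^{m}_{>0}$ (a minimum over the affine slices $\{Af=b^{i}\}$ of the jointly convex $E_{x}(f)=\sum_{e}(c_{e}/x_{e})f_{e}^{2}$), so $\L$ is convex; hence every minimizer of $\L$ over $\R^{m}_{\ge0}$ is a KKT point, i.e.\ satisfies $\|\Lambda_{e}\|_{2}=1$ where $x_{e}>0$ and $\|\Lambda_{e}\|_{2}\le1$ where $x_{e}=0$, and therefore lies in $\F_{g}$; as the minimizer set is convex and inside the finite set $\F_{g}$ it is the single point $x^{*}$. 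Now if $x^{\infty}\ne x^{*}$, then $x^{\infty}\in\F_{g}$ fails KKT, so some edge $e$ has $x^{\infty}_{e}=0$ and $\|\Lambda_{e}(x^{\infty})\|_{2}>1$; by continuity of $x\mapsto\|\Lambda_{e}(x)\|_{2}$ near $x^{\infty}$ this forces $\dot x_{e}(t)=x_{e}(t)\bigl(g_{e}(\|\Lambda_{e}(x(t))\|_{2})-1\bigr)>0$ for all large $t$, so $x_{e}(t)$ is eventually increasing and cannot tend to $0$, a contradiction. Hence $x^{\infty}=x^{*}$.

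I expect the main obstacle to be the boundary bookkeeping around $\Omega$: showing the trajectory stays in a compact subset of $\Omega$ (so that $\L,P,\Lambda$ are continuous on $\omega(x(0))$, Lemma~\ref{lem:FixPoints} applies there, and LaSalle is legitimate), and justifying the local continuity of $\Lambda_{e}$ at a boundary equilibrium used in the escape argument of the last step, which hinges on the rank of $L(x)$ being locally constant there. The remaining ingredients are either elementary calculus or standard facts about $\omega$-limit sets, convex optimization, and Lyapunov/LaSalle theory.
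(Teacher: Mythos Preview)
Your argument for $\F_g=\V$ and for convergence to $\V$ matches the paper's (Theorem~\ref{lem:GenEqSet} plus the Lyapunov/LaSalle reasoning). For the final step, however, you take a genuinely different route. The paper (Theorem~\ref{thm:GenPhyDyn_conv_to_OPT}) invokes the duality result $\MQ=\MP=\ML$: since $\hat P$ is feasible for the problem $\max\{\Tr(B^TP):\normtwo{\row[e]{(A^T)}P}\le c_e\ \text{for }e\in\hat E\}$ over the restricted edge set $\hat E=\{e:\normtwo{\row[e]{(A^T)}\hat P}\le c_e\}$ with value $\L(\hat x)>\L(x^*)=\MP$, the set $\hat E$ must be a proper subset of $E$, producing an edge with $\normtwo{\Lambda_e(\hat x)}>1$. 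You instead use the convexity of $\L$ directly: a non-minimizing fixed point must violate the KKT inequality $\normtwo{\Lambda_e}\le 1$ at some edge with $x_e=0$, which yields the same bad edge without appealing to the separate duality theorem. The escape argument that follows is essentially shared---the paper pushes it to $x_e(t)\to\infty$ via Gronwall, while your ``eventually increasing, hence cannot tend to $0$'' already suffices. Your approach is more self-contained; the paper's route leverages (and therefore requires) the duality theorem as an external ingredient. The boundary issues you correctly flag---continuity of $P$ and $\Lambda$ at a limit point on $\partial\Omega$, and applicability of LaSalle there---are real and are glossed over in the paper as well.
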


The minimum of the Lyapunov function can also be characterized in alternative ways.

                         \newcommand{\MQ}{\mathit{MinQ}}
                         \newcommand{\MP}{\mathit{MaxP}}
                         \newcommand{\ML}{\mathit{MinL}}

\begin{theorem} The following quantities $\MQ$, $\MP$, and $\ML$ are equal. 
\begin{align}
  \MQ &= \min_{Q \in \R^{m \times k}} \set{\sum_e c_e  \normtwo{\row[e]{Q}} } { A Q = B },\\
  \MP &= \max_{P \in \R^{n \times k}}\set{ \Tr(B^TP)}{ \normtwo {\row[e]{(A^T)} P} \le c_e \text{ for all $e$}},\\
  \ML &= \min_{x \in \R^m_{\ge 0}}  \L(x). 
  \end{align}
  Moreover, there are optimizers $Q^*$, $R^*$ and $x^*$ such that
  \begin{align*}  x^*_e &= \normtwo{\row[e]{Q^*}} \quad\text{ for all $e$},\\
    L(x^*) P^* &= B ,\\
    Q^* &= X^* C^{-1} A^T P^*.
  \end{align*}
\end{theorem}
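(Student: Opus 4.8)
The plan is to prove $\MQ=\MP=\ML$ by establishing $\ML\le\MQ=\MP$ and $\MQ\le\ML$, and to read off the claimed optimizer identities from the equality case of the first chain. Throughout I use that $C=\diag(c)$ is invertible (so $c>0$), that for an admissible $x$ the electrical flow is $Q(x)=XC^{-1}A^TP$ with $L(x)P=B$, that $\row[e]{\Lambda}=c_e^{-1}\,\row[e]{(A^T)}P$ and $\row[e]{Q(x)}=x_e\,\row[e]{\Lambda}$, that $AQ(x)=AXC^{-1}A^TP=L(x)P=B$, and that $\E(x)=\Tr(P^TL(x)P)=\Tr(B^TP)=\sum_e (x_e/c_e)\,\twonorm{\row[e]{(A^T)}P}^2$ whenever $L(x)P=B$.

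\emph{Step 1: $\MQ=\MP$.} I would treat this as Lagrangian duality for the second-order cone program defining $\MQ$. Dualizing $AQ=B$ with multiplier $P\in\R^{n\times k}$, the Lagrangian is $\Tr(B^TP)+\sum_e\bigl(c_e\twonorm{\row[e]{Q}}-\langle\row[e]{(A^T)}P,\row[e]{Q}\rangle\bigr)$; since the convex conjugate of $c_e\twonorm{\cdot}$ is the indicator of the Euclidean ball of radius $c_e$, the infimum over $Q$ equals $\Tr(B^TP)$ when $\twonorm{\row[e]{(A^T)}P}\le c_e$ for all $e$ and $-\infty$ otherwise. This gives weak duality $\MP\le\MQ$, and strong duality holds because the primal objective is finite and continuous on all of $\R^{m\times k}$ while the primal feasible set $\set{Q}{AQ=B}$ is a nonempty affine subspace (each $Af=\iof[i]{b}$ is solvable) — the standard constraint qualification. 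Both optima are attained: $\MQ$ because $Q\mapsto\sum_e c_e\twonorm{\row[e]{Q}}$ is coercive on $\set{Q}{AQ=B}$, and $\MP$ because its feasible region equals a compact set plus the lineality space $\set{N}{A^TN=0}$, on which $\Tr(B^TN)=0$ since the columns of $B$ lie in $\Im A=(\ker A^T)^{\perp}$. Fix optimizers $Q^*$ and $P^*$.

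\emph{Step 2: the optimizer identities and $\ML\le\MQ$.} Put $x^*_e:=\twonorm{\row[e]{Q^*}}\ge 0$. Using $\Tr(B^TP^*)=\Tr((AQ^*)^TP^*)=\sum_e\langle\row[e]{Q^*},\row[e]{(A^T)}P^*\rangle$ together with strong duality, one gets $\sum_e c_e\twonorm{\row[e]{Q^*}}=\sum_e\langle\row[e]{Q^*},\row[e]{(A^T)}P^*\rangle\le\sum_e\twonorm{\row[e]{Q^*}}\,\twonorm{\row[e]{(A^T)}P^*}\le\sum_e c_e\twonorm{\row[e]{Q^*}}$, so every inequality is tight termwise. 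The Cauchy--Schwarz equality case forces $\row[e]{(A^T)}P^*$ to be a nonnegative multiple of $\row[e]{Q^*}$, and $\twonorm{\row[e]{(A^T)}P^*}=c_e$ whenever $\row[e]{Q^*}\neq 0$; hence $\row[e]{Q^*}=(x^*_e/c_e)\,\row[e]{(A^T)}P^*$ for every $e$ (trivially when $x^*_e=0$), i.e.\ $Q^*=X^*C^{-1}A^TP^*$. Then $L(x^*)P^*=AX^*C^{-1}A^TP^*=AQ^*=B$, so $x^*$ is admissible, $P^*$ is a potential for $x^*$, and $Q^*$ is its electrical flow. Consequently $\E(x^*)=\Tr(P^{*T}L(x^*)P^*)=\Tr(B^TP^*)=\MP=\MQ$ and $c^Tx^*=\sum_e c_e\twonorm{\row[e]{Q^*}}=\MQ$, so $\L(x^*)=\tfrac12(c^Tx^*+\E(x^*))=\MQ$ and therefore $\ML\le\MQ$. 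This delivers all three displayed identities for the triple $x^*,Q^*,P^*$.

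\emph{Step 3: $\MQ\le\ML$ and conclusion.} Let $x\ge 0$ be admissible, let $P$ solve $L(x)P=B$, and let $Q(x)=XC^{-1}A^TP$; then $AQ(x)=B$, so $Q(x)$ is feasible for $\MQ$. Applying $2\sqrt{ab}\le a+b$ with $a=c_ex_e$ and $b=(x_e/c_e)\twonorm{\row[e]{(A^T)}P}^2$ gives $c_e\twonorm{\row[e]{Q(x)}}=x_e\twonorm{\row[e]{(A^T)}P}\le\tfrac12\bigl(c_ex_e+(x_e/c_e)\twonorm{\row[e]{(A^T)}P}^2\bigr)$; summing over $e$ and using the formula for $\E(x)$ yields $\MQ\le\sum_e c_e\twonorm{\row[e]{Q(x)}}\le\tfrac12(c^Tx+\E(x))=\L(x)$. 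Taking the infimum over admissible $x$ (and noting $\L\equiv+\infty$ elsewhere) gives $\MQ\le\ML$. Combining the three steps, $\MQ=\MP=\ML$, the point $x^*$ attains $\ML$, and $x^*,Q^*,P^*$ satisfy the stated relations. I expect the only non-routine point to be the duality step: verifying the constraint qualification for strong duality of this conic program and the attainment of $\MP$ (whose feasible region is unbounded, but only along directions on which the objective is constant). Everything else is Cauchy--Schwarz, AM--GM, and the electrical-flow identities recalled in Section~\ref{Preliminaries}.
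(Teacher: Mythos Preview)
Your proof is correct, but it takes a genuinely different route from the paper's. The paper proves the chain $\ML\le\MQ\le\MP\le\ML$: for $\ML\le\MQ\le\MP$ it reformulates the $\MQ$-problem as the smooth program $\min c^Tx$ subject to $AQ=B$, $x_e^2\ge\twonorm{\row[e]{Q}}^2$, $x_e\ge0$ and applies the KKT conditions, extracting from the multipliers the potential $P$, the identity $Q^*=X^*C^{-1}A^TP$, and dual feasibility $\twonorm{\row[e]{(A^T)}P}\le c_e$; for $\MP\le\ML$ it takes the Lagrange dual of $\MP$ with multipliers $x_e\ge0$ and computes the inner $\sup_P$ via the first-order condition $L(x)P=b$, obtaining $\L(x)$. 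Your argument is more direct and self-contained: you get $\MQ=\MP$ in one stroke via Fenchel/SOCP duality, read off the optimizer identities from the Cauchy--Schwarz equality case rather than from KKT multipliers of a smoothed reformulation, and obtain $\MQ\le\ML$ by the elementary AM--GM bound $x_e\twonorm{\row[e]{(A^T)}P}\le\tfrac12\bigl(c_ex_e+(x_e/c_e)\twonorm{\row[e]{(A^T)}P}^2\bigr)$ on the electrical flow. The paper's path has the advantage of tying the minimizer of $\L$ explicitly to the fixed-point set $\F_g$ (its $\MQ\le\ML$ step uses $\V=\F_g$ from Theorem~\ref{lem:GenEqSet}), while yours avoids any dependence on the dynamics or the Lyapunov property and is arguably cleaner convex analysis.
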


It is instructive to interpret the theorem for the case $k = 1$, $A$ the node-arc incidence matrix of a directed graph, and $b$ a vector with one entry $+1$ and one entry $-1$ and all other entries equal to zero. Then $\MQ = \min_{q \in \R^m} \set{\sum_e c_e \abs{q_e}}{A q = b}$ is the minimum cost of a flow realizing $b$ in the underlying undirected network and $\MP = \max_{p \in \R^n}\set{b^T p }{ \abs{p_v - p_u} \le c_e \text{ for all $e = (u,v)$}}$ is the maximum distance between the two nodes designated by $b$ for any distance function on the nodes satisfying the cost constraints imposed by $c$. Both values are equal to the cost of the minimum cost path connecting the two designated nodes and hence $\MQ = \MP$. The third characterization via 
$\ML = \min_{x \ge 0} \L(x)$ is non-standard. Note that $\L(x) = (c^T x + b^T p)/2$, where $p$ are node potentials driving a current of 1 between the nodes designated by $b$ in the network with edge resistances $c_e/x_e$. Then $b^Tp$ is the potential difference between the two designated nodes which, since the driven current is one, is the effective resistance between the two designated nodes. In Lemma~\ref{Gradient of phi}, we will show $\frac{\partial}{\partial x_{e}}\L(x) = \frac{c_{e}}{2}( 1- \twonorm{\Lambda_e}^2)$, i.e., the minimizer $x^*$ of $\L(x)$ must satisfy $x^*_e \not= 0 \Rightarrow \abs{\row[e]{(A^T)} p^*} = c_e$, where $p^*$ are node potentials corresponding to $x^*$. Note that for an edge $e = (u,v)$, $\abs{\row[e]{(A^T)} p^*} = \abs{p^*_v - p^*_u}$ is the potential drop on $e$. Orient all edges such that potential drops are positive and consider any path $W$ ($W$ for Weg) in $\supp(x^*)$  connecting the two designated nodes. Then
\[    b^T p^* = \sum_{e \in W} \row[e]{(A^T)} p^* = \sum_{e \in W} c_e, \]
since the potential difference between the two designated nodes is the sum of the potential drops along $W$. Thus any two paths in $\supp(x^*)$ connecting the two designated nodes must have the same cost and hence (assuming that any two such paths have distinct cost) $\supp(x^*)$ contains a single path connecting the two designated nodes. In fact, $\supp(x^*)$ is equal to such a path. Now $\sum_{e \in W} c_e x_e + \sum_{e \in W} c_e/x_e = \sum_{e \in W} (c_e x_e + c_e/x_e)$ is minimized for $x_e = 1$ for all $e \in W$ and then is equal to twice the cost of $W$. Of course, the cost of $W$ is minimized for the shortest undirected path connecting the two designated nodes.

We turn to the result of our computer experiments.  We performed three \emph{case studies}, two small and the third inspired by the wet-lab experiment by~\cite{Tero-Takagi-etal}. The first example (Section~\ref{Flow in a Ring}) can be treated analytically, we consider a ring with three nodes with a demand of one between any pair of nodes. We will see that a solution using all three edges is superior to a solution using only two edges. Also, we see confirmed that for fixed points of the two-norm dynamics the cost of the network and the total energy dissipation is the same. The second example (Section~\ref{Bow-Tie Graph}) concerns flow in the Bow-Tie graph shown in Figure~\ref{BowTie Simulation}. We will investigate the incentive for sharing links. In this example, the demands can share a link at the cost of increasing the distance between the terminals. We will see that sharing pays off. The third example~\ref{Tokyo Railroad} is based on the example in~\cite{Tero-Takagi-etal}. We will see that the dynamics forms nice networks similar to the networks in~\cite{Tero-Takagi-etal}.

\section{Related Work}\label{Related Work}

This paper is inspired by~\cite{Tero-Kobayashi-Nakagaki}, ~\cite{Watanabe-Tero2011}, and~\cite{Tero-Takagi-etal}. We already explained the connection to these papers in detail in the previous sections. 

Shortly after this work was posted on arXiv, a closely related paper~\cite{Lonardi2020optimal} was posted. It considers the multi-commodity transportation problem in graphs. Let $A$ be the node-arc incidence matrix of a directed graph and vectors $b_1$ to $b_k$ with $1^T b_i = 0$ for all $i$ be $k$ supply-demand vectors. Each arc of the graph has a fixed cost $c_e$ and a capacity $x_e$. This is what we called the multi-commodity flow setting in Section~\ref{The Model}. They model the interaction between the different commodities in exactly the same way as we do, i.e., for each $i$, a minimum energy solution $q_i$ is a minimum energy solution with respect to the resistances $c_e/x_e$ of the system $A q = b_i$.  The different flows on each edge $e$ are combined by forming their two-norm. The difference lies in the dynamics. The paper considers the dynamics
\begin{equation}\label{beta dynamics}  \dot{x}_e = x_e^\beta \twonorm{\Lambda_e}^2  - x_e, \end{equation}
where $\beta \in (0,2)$ is a parameter. For $\beta = 1$, this dynamics is a special case of our generalized dynamics obtained by setting $g_e(z) = 1 + (z^2 - 1)$.

The paper investigates the dynamics analytically and experimentally. For the experimental evaluation, the paper uses the Paris metro. In the analytical part, the paper shows that the fixed points satisfy
$x_e^{3 - \beta} = \twonorm{Q_e}^2$ and that the solution to the optimization problem
\[   \text{minimize} \sum_e \frac{c_e}{x_e} \twonorm{Q_e}^2 \text{ subject to } \sum_e c_e x_e^{2 - \beta} = K \text{ and } A Q = B, \]
where $K$ is a positive constant, satisfies $x_e^{3 - \beta} = C \cdot \twonorm{Q_e}^2$, where $C$ is a constant, i.e., fixed points and optimal solutions to the optimization problem exhibit the same relation between $x_e$ and $Q_e$. The paper also contains an extensive discussion of the simulation of the dynamics and of the numerical solution of the optimization problem above.

Convergence of the dynamics is not shown. However, a slight modification of the Lyapunov function used in this paper also works for their dynamics. Assume $\beta \in (0,2)$ and define 
\[
\L(x )=\frac{1}{2}\left(\frac{1}{2-\beta}c^{T}x^{2-\beta} +\sum_{i=1}^{k} (\bi)^{T} \pi \right).
\]


\begin{lemma}[Gradient of $\L$]\label{Gradient of L} For all $e \in E$,
  \begin{equation}
    \frac{\partial}{\partial x_{e}}\L(x) = \frac{c_{e}}{2}(x_e^{1 - \beta}- \twonorm{\Lambda_e}^2).\label{eq:gradL}
  \end{equation} \end{lemma}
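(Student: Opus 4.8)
The plan is to write $\L=\tfrac12(\L_1+\L_2)$ with $\L_1(x)=\tfrac1{2-\beta}\,c^{T}x^{2-\beta}=\tfrac1{2-\beta}\sum_{j}c_j x_j^{2-\beta}$ and $\L_2(x)=\sum_{i=1}^{k}(\bi)^{T}\pi=\E(x)$, and to differentiate the two parts separately. The first part is separable across coordinates, so $\partial\L_1/\partial x_e=\tfrac1{2-\beta}\,c_e(2-\beta)x_e^{1-\beta}=c_e x_e^{1-\beta}$, which already accounts for the $c_e x_e^{1-\beta}/2$ term in \lref{eq:gradL}. Everything nontrivial therefore sits in the derivative of $\L_2=\E$.

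For $\L_2$ I would invoke the variational description recalled in Section~\ref{Preliminaries}: $(\bi)^{T}\pi=E_x(\qi)=\min\{E_x(f):Af=\bi\}$, with the minimizer $\qi=\qi(x)$ the electrical flow for demand $\bi$. The key observation is that the feasible affine space $\{f:Af=\bi\}$ does \emph{not} depend on $x$; hence the envelope theorem (Danskin's theorem) lets one differentiate through the minimum while holding $\qi$ fixed, giving $\partial_{x_e}E_x(\qi)=\big[\partial_{x_e}\sum_{j}\tfrac{c_j}{x_j}f_j^{2}\big]_{f=\qi}=-\tfrac{c_e}{x_e^{2}}(\qi_e)^{2}$. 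Substituting $Q=X\Lambda$, i.e.\ $\qi_e=\entry[e,i]{Q}=x_e\Lambda_{e,i}$, this equals $-c_e\Lambda_{e,i}^{2}$; summing over the $k$ demands, $\partial_{x_e}\L_2=-c_e\sum_{i}\Lambda_{e,i}^{2}=-c_e\twonorm{\Lambda_e}^{2}$. Combining with the first part, $\partial_{x_e}\L=\tfrac12\big(c_e x_e^{1-\beta}-c_e\twonorm{\Lambda_e}^{2}\big)=\tfrac{c_e}{2}\big(x_e^{1-\beta}-\twonorm{\Lambda_e}^{2}\big)$, as claimed. (Equivalently, one may differentiate the closed form $(\bi)^{T}L(x)^{+}\bi$: writing $L(x)=\sum_{j}\tfrac{x_j}{c_j}\col[j]{A}(\col[j]{A})^{T}$ gives $\partial_{x_e}L(x)=\tfrac1{c_e}\col[e]{A}(\col[e]{A})^{T}$, so $\partial_{x_e}(\bi)^{T}L^{+}\bi=-(\pi)^{T}(\partial_{x_e}L)\pi=-\tfrac1{c_e}\big((\col[e]{A})^{T}\pi\big)^{2}=-c_e\Lambda_{e,i}^{2}$, using $\Lambda_{e,i}=\tfrac1{c_e}(\col[e]{A})^{T}\pi$; the correction terms in the derivative of a Moore--Penrose pseudoinverse vanish because $\bi\in\Im L(x)$.)

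The step deserving care — and the one I would treat as the main obstacle — is justifying that $\L$ is differentiable at $x$ and that these manipulations are legitimate there. One needs the potentials $\pi=\pi(x)$ (equivalently the restriction of $L(x)^{+}$ to $\Im L(x)$) to vary smoothly in a neighbourhood of $x$, which holds whenever $\mathrm{rank}\,L(x)$ is locally constant. This is the case on $\R^{m}_{>0}$, where $L(x)=A(XC^{-1})A^{T}$ has constant rank $\mathrm{rank}\,A$ and, in the multi-commodity flow setting, $\bi\in\Im A=\Im L(x)$ for all such $x$, so every system $L(x)\pi=\bi$ stays solvable and each minimum in the definition of $\E$ is attained at a unique, smoothly varying $\qi(x)$. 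One also checks that $(\bi)^{T}\pi$ does not depend on the chosen solution $\pi$ (since $L(x)$ is symmetric positive semidefinite, $\bi\in\Im L(x)=(\Ker L(x))^{\perp}$), so $\L$ is single-valued from the outset. Since the dynamics stays in $\R^{m}_{>0}$ by Theorem~\ref{thm: Existence}, differentiability there suffices for the uses of this lemma, and the computation above is then just bookkeeping.
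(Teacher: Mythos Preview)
Your proof is correct. The paper's own argument is the two-line reduction you would expect: differentiate the cost term directly to get $(2-\beta)c_e x_e^{1-\beta}$, and invoke Lemma~\ref{Gradient of phi} for the energy term. That lemma in turn differentiates the identity $L(x)p=b$ to obtain $L(x)\,\partial_{x_e}p=-\tfrac{1}{c_e}\col[e]{A}\row[e]{(A^T)}p$, then computes $\partial_{x_e}(b^Tp)=p^T L(x)\,\partial_{x_e}p=-c_e\big(\row[e]{(A^T)}p/c_e\big)^2$ and sums over commodities. This is precisely the calculation you sketch in your parenthetical alternative.

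Your main argument via the envelope theorem is a genuinely different route: rather than tracking how the potential $\pi$ moves with $x$, you exploit that the feasible set $\{f:Af=\bi\}$ is independent of $x$ and that the minimizer is unique, so Danskin lets you freeze $f=\qi$ and differentiate only the explicit $x$-dependence in $E_x$. This is cleaner conceptually (no linear-system perturbation, no implicit-function reasoning) and makes transparent why the answer is $-\tfrac{c_e}{x_e^2}(\qi_e)^2$. The paper's approach, by contrast, never appeals to the variational characterization and works purely at the level of the potentials; it is slightly more hands-on but avoids any discussion of envelope-theorem hypotheses. Your extra paragraph on differentiability and constant rank of $L(x)$ on $\R^m_{>0}$ is more careful than the paper, which takes these points for granted.
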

\begin{proof} The derivative $  \frac{\partial}{\partial x_{e}} \sum_{i=1}^{k} (\bi)^{T} \pi$ is computed in Lemma~\ref{Gradient of phi} and $ \frac{\partial}{\partial x_{e}} c^T x^{2 - \beta} = (2 - \beta) c_e x_e^{1 - \beta}$. 
\end{proof}

\begin{theorem}\label{Lyapunov for beta dynamics} 
	The function $\L: \Omega \mapsto\mathbb{R}$
	is a Lyapunov function for the dynamics (\ref{beta dynamics}), i.e., $\frac{d}{dt} \L(x(t)) \le 0$ for all $t$. Let
        \[      \V = \set{x \in \Omega}{\langle \nabla \L(x), \dot{x} \rangle = 0}. \]
        Then $\V$ is equal to the fixed points of (\ref{beta dynamics}).
\end{theorem}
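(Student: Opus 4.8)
The plan is to differentiate $\L$ along a trajectory of~\lref{beta dynamics} and exhibit the derivative as a sum of manifestly non-positive terms; write $\F_{\beta}$ for the fixed point set of~\lref{beta dynamics}. First I would note that the hyperplane $x_e=0$ is invariant for~\lref{beta dynamics} (its right-hand side vanishes there since $\beta>0$), so a solution started in $\R^m_{>0}$ stays in $\R^m_{>0}$; there $\L$ is $C^1$ and the chain rule gives $\frac{d}{dt}\L(x(t)) = \langle\nabla\L(x),\dot x\rangle = \sum_e \frac{\partial}{\partial x_e}\L(x)\,\dot x_e$. I would also record that $\L\ge 0$ on $\Omega$: for $x\in\Omega$ the potentials $\iof[i]{p}$ exist, $c^Tx^{2-\beta}\ge 0$ because $c,x\ge 0$ and $2-\beta>0$, and $(\bi)^T\pi = (\pi)^T L(x)\pi\ge 0$ because $L(x)$ is positive semidefinite.

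The crux is a short algebraic manipulation. I would substitute the gradient from Lemma~\ref{Gradient of L}, namely $\frac{\partial}{\partial x_e}\L(x) = \frac{c_e}{2}\bigl(x_e^{1-\beta} - \twonorm{\Lambda_e}^2\bigr)$, together with the dynamics $\dot x_e = x_e^\beta\twonorm{\Lambda_e}^2 - x_e$, and then factor $\dot x_e = x_e^\beta\bigl(\twonorm{\Lambda_e}^2 - x_e^{1-\beta}\bigr) = -x_e^\beta\bigl(x_e^{1-\beta} - \twonorm{\Lambda_e}^2\bigr)$. Each summand then becomes a perfect negative square:
\[
  \frac{\partial}{\partial x_e}\L(x)\,\dot x_e \;=\; -\frac{c_e}{2}\,x_e^{\beta}\bigl(x_e^{1-\beta} - \twonorm{\Lambda_e}^2\bigr)^2 \;\le\; 0,
\]
using $c_e>0$, $x_e>0$ and $\beta>0$. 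Summing over $e$ yields $\frac{d}{dt}\L(x(t))\le 0$, which together with $\L\ge 0$ shows that $\L$ is a Lyapunov function for~\lref{beta dynamics}.

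For the identification of $\V$ I would reuse the same formula: $\langle\nabla\L(x),\dot x\rangle = -\sum_e \frac{c_e}{2}x_e^{\beta}\bigl(x_e^{1-\beta}-\twonorm{\Lambda_e}^2\bigr)^2$ is a sum of non-positive terms, hence it vanishes iff every term vanishes, i.e.\ iff for each $e$ one has $x_e=0$ or $x_e^{1-\beta}=\twonorm{\Lambda_e}^2$. On the other hand $x\in\F_\beta$ iff $x_e^{\beta}\twonorm{\Lambda_e}^2 = x_e$ for all $e$, and since $\beta>0$ this says exactly $x_e=0$ or (dividing by $x_e^{\beta}$) $x_e^{1-\beta}=\twonorm{\Lambda_e}^2$. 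The two conditions coincide coordinatewise, so $\V = \F_\beta$.

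Everything above reduces to the displayed identity, so there is no deep obstacle; the only point that needs a word of care is the boundary of $\Omega$, where some $x_e=0$ and, when $\beta>1$, the factor $x_e^{1-\beta}$ in $\nabla\L$ diverges. I would dispose of this as in Lemma~\ref{lem:FixPoints}: a direct computation shows that the $e$-th product $\frac{c_e}{2}\bigl(x_e^{1-\beta}-\twonorm{\Lambda_e}^2\bigr)\bigl(x_e^{\beta}\twonorm{\Lambda_e}^2-x_e\bigr)$ tends to $0$ as $x_e\to 0^{+}$ for every $\beta\in(0,2)$, so under this limiting convention both the condition $\langle\nabla\L,\dot x\rangle=0$ and the condition $\dot x=0$ still reduce coordinatewise to the requirement that $x_e=0$ or $x_e^{1-\beta}=\twonorm{\Lambda_e}^2$ for every $e$, and $\V=\F_\beta$ persists on all of $\Omega$. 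This minor boundary bookkeeping is the main (and rather mild) obstacle; the substantive content is entirely in the choice of $\L$ and the resulting factorization.
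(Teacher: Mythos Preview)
Your proof is correct and follows essentially the same route as the paper: substitute the gradient from Lemma~\ref{Gradient of L} and the dynamics~\lref{beta dynamics} into $\langle\nabla\L,\dot x\rangle$, factor $\dot x_e = -x_e^{\beta}(x_e^{1-\beta}-\twonorm{\Lambda_e}^2)$ to obtain $-\sum_e \frac{c_e}{2}x_e^{\beta}(x_e^{1-\beta}-\twonorm{\Lambda_e}^2)^2 \le 0$, and read off that equality holds iff each $e$ satisfies $x_e=0$ or $x_e^{1-\beta}=\twonorm{\Lambda_e}^2$, which is precisely the fixed-point condition. Your extra remarks on $\L\ge 0$ and the boundary behavior for $\beta>1$ go slightly beyond what the paper writes out, but the substantive argument is identical.
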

\begin{proof} Since $\frac{d}{dt} \L(x(t)) = \langle \nabla \L(x), \dot{x} \rangle$, we obtain
  \begin{align*} \frac{d}{dt} \L(x(t))  = \sum_e \frac{c_e}{2} (x_e^{1 - \beta} - \twonorm{\Lambda_e}^2)\cdot (x_e^\beta \twonorm{\Lambda_e}^2 - x_e) 
    = -\sum_e \frac{c_e}{2}x_e^\beta(x_e^{1 - \beta} - \twonorm{\Lambda_e})^2 \le 0.
    \end{align*}
We have equality if and only if for all $e$ either $x_e = 0$ or $\twonorm{\Lambda_e}^2 = x_e^{1 - \beta}$. Thus $x \in \V$ if and only if $x$ is a fixed point of (\ref{beta dynamics}).
\end{proof}

\begin{lemma}For fixed points $x$ of (\ref{beta dynamics}), $\sum_e c_e x_e^{2 - \beta} = \sum_i (\bi)^T p_i$. \end{lemma}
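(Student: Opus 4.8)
The plan is to rewrite the left-hand side $\sum_e c_e x_e^{2-\beta}$ at a fixed point using the equilibrium condition, to recognize the resulting expression as the energy dissipation $\mathcal{E}(x)$, and then to invoke the chain of identities for $\mathcal{E}(x)$ recorded just before the statement.

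First I would use the fixed-point characterization obtained in the proof of Theorem~\ref{Lyapunov for beta dynamics}: if $x$ is a fixed point of (\ref{beta dynamics}), then for every edge $e$ either $x_e = 0$ or $\twonorm{\Lambda_e}^2 = x_e^{1-\beta}$. Multiplying through by $c_e x_e$ I obtain, edge by edge, the identity $c_e x_e^{2-\beta} = c_e x_e \twonorm{\Lambda_e}^2$: in the second case this is immediate, and in the first case both sides vanish. Here I need $2-\beta>0$, which holds because $\beta\in(0,2)$, so that $x_e^{2-\beta}=0$ when $x_e=0$; note that $\Lambda_e$ is still well defined (it equals $c_e^{-1}\,\row[e]{(A^T)}P$ with $P$ any solution of $L(x)P=B$, which exists since $x\in\Omega$), but it is multiplied by $x_e=0$. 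Summing over $e$ yields $\sum_e c_e x_e^{2-\beta} = \sum_e c_e x_e \twonorm{\Lambda_e}^2$.

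Second I would identify the right-hand side with $\mathcal{E}(x)$. Since $\Lambda = C^{-1}A^TP$ we have $\Lambda_e = c_e^{-1}\,\row[e]{(A^T)}P$, hence
\[
  \sum_e c_e x_e \twonorm{\Lambda_e}^2
  \;=\; \sum_e \frac{x_e}{c_e}\,\twonorm{\row[e]{(A^T)}P}^2
  \;=\; \Tr\bigl(P^T A X C^{-1} A^T P\bigr)
  \;=\; \Tr\bigl(P^T L(x) P\bigr),
\]
the middle step being the routine trace manipulation of the kind carried out in Section~\ref{Preliminaries}. By the identities stated before the lemma, $\mathcal{E}(x) = \Tr(P^T L(x) P) = \Tr(B^T P) = \sum_i (\bi)^T p_i$, where the equality $\Tr(P^T L(x)P) = \Tr(B^TP)$ uses $L(x)P=B$. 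Chaining this with the display from the previous paragraph gives $\sum_e c_e x_e^{2-\beta} = \sum_i (\bi)^T p_i$, as claimed.

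I do not expect any genuine obstacle: the argument is a one-line substitution of the equilibrium relation followed by a trace identity that has already appeared in the paper. The only point that deserves a moment's care is the degenerate case $x_e=0$, where one must check that both sides of the edgewise identity vanish and that this is precisely where the hypothesis $\beta<2$ is used; everything else is bookkeeping.
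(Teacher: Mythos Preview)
Your proof is correct and follows essentially the same route as the paper: both use the fixed-point relation $\twonorm{\Lambda_e}^2 = x_e^{1-\beta}$ (or $x_e=0$) together with the identity $\sum_e c_e x_e\,\twonorm{\Lambda_e}^2 = \mathcal{E}(x) = \sum_i (\bi)^T \pi$, only in opposite directions. Your treatment of the degenerate case $x_e=0$ and the role of $\beta<2$ is in fact more careful than the paper's one-line display.
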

\begin{proof} \[ \sum_i (\bi)^T \pi = \sum_e \sum_i \frac{c_e}{x_e} q_{ei}^2 = \sum_e \sum_i \frac{c_e}{x_e}  (x_e  \Lambda_{ei})^2
                                 = \sum_e c_e x_e \twonorm{\Lambda_e}^2 = \sum_e c_e x_e^{2 - \beta}.\]
                               \end{proof}

We mentioned in the result section that our generalized Physarum dynamics converges to a solution for which the cost $\sum_e c_e x_e$ is equal to the dissipated energy $\sum_i (\bi)^T p_i$. The dynamics \lref{beta dynamics} allows a wider choice of equilibrium points.

\section{Case Studies}\label{Case Studies}
\subsection{Multi-commodity Flow in a Ring}\label{Flow in a Ring}

Consider a graph consisting of three vertices $a$, $b$, and $c$ and three edges connecting them into a 3-cycle. All edges have cost one and we have a demand of one between any pair of nodes. 
An equilibrium uses either two edges or three edges. 

\subsubsection{Two Edge Solution}

We will see below that, for each of the dynamics, the solution is symmetric,  i.e., both edges have the same capacity in equilibrium, say $z$. The flow across both edges is two. For each demand, the potential drop on each edge is $1/z$. So the total energy spent is $\E = 2/z + 2 \cdot 1/z = 4/z$ (one demand uses two edges for a energy dissipation of $2/z$ and two demands use one edge for a energy dissipation of $1/z$ each) and the total cost $\C = 2z$. Thus $\C + \E= 4/z + 2z$. 

\paragraph{One-Norm Dynamics:} The current across each edge is 2 and hence $z = 2$ for each of the existing edges. Thus $\C = c^T z = 4$, $\E = \sum_i (\iof[i]{b})^T \iof[i]{p} = 4/z = 2$, and $\C + \E = 6$.

\paragraph{Two-Norm Dynamics} 
The current across each edge is $1 + 1$ and hence $z = \sqrt{2}$. Thus $\C = c^T z = 2 \sqrt{2}$, $\E = \sum_i (\iof[i]{b})^T \iof[i]{p} = 4/\sqrt{2} = 2\sqrt{2}$ and $\C + \E = 4 \sqrt{2}$.  Note that $\C = 2 \sqrt{2} = \E$. This is not a coincidence as we show in Lemma~\ref{lem:GenPhyDin}.

\paragraph{Optimum:} We have $\C + \E = 4/z + 2z$. The optimum is attained for $z = \sqrt{2}$. Note that this corresponds to the equilibrium of the two-norm. This is not a coincidence as we show in Theorem~\ref{lem:GenEqSet}.

\subsubsection{Three Edge Solution}

We will see below that, for each of the dynamics, the solution is symmetric,  i.e., all edges have the same capacity in equilibrium, say $z$, and hence the same resistance $1/z$. Then $\C = 3z$. Each demand is routed partly the short way and partly the long way. Since the long way has twice the resistance, the amount routed the short way is twice the amount routed the long way, i.e., $2/3$ of each demand is routed the short way and $1/3$ is routed the long way.

For each demand, let $\Delta$ be the potential drop between source and sink. 
The total energy spent is $3\Delta$. The potential drop $\Delta$ must be such that it can drive a current of $2/3$ across a wire of conductance  $z$. Thus $\Delta = 2/(3z)$. We obtain $\C + \E = 3z + 2/z$. 

\paragraph{One-Norm Dynamics:} $z$ is equal to the total current flowing across an edge and hence $z = 2/3 + 2 \cdot 1/3 = 4/3$ and $\Delta = 1/2$. So $\C = c^T z = 4$, $\E = \sum_i (\iof[i]{b})^T \iof[i]{p} = 3/2$, and $\C + \E = 11/2$. This is better than for the two-edge equilibrium.

\paragraph{Two-Norm Dynamics
}  For each edge, we have one flow of value $2/3$ and two flows of value $1/3$ and hence $z^2 = 4/9 + 2 \cdot 1/9 = 6/9$. Thus $z = \sqrt{2/3}$. $\Delta$ must be such that it can drive a current of $2/3$ across a wire of conductance $\sqrt{2/3}$ and hence $\Delta = \sqrt{2/3}$. 

Hence $\C = c^T z = 3 \cdot \sqrt{2/3} = \sqrt{6}$ and $\E = \sum_i (\iof[i]{b})^T \iof[i]{p} = 3 \cdot \sqrt{2/3} = \sqrt{6}$. Note that again we have the same value for the cost $C$ and the total energy spent $\E$. For the sum, we obtain $\C + \E = 2 \sqrt{6}$. This is better than the two-edge equilibrium.

\paragraph{Optimum:} For a general value of $z$, we have $\C + \E  = 3z + 2/z$. This is minimized for $z = \sqrt{2/3}$, i.e., the equilibrium of the two-norm 
is equal to the minimum combined cost solution.

\subsubsection{Computer Simulations}

Table~\ref{ring simulations} shows the results of a typical simulation. For the simulation we discretized the differential equation and applied an Euler forward scheme. 
\begin{table}[th]
  \centering
\begin{tabular}{|r|r|r|r|}
  \hline
  & \multicolumn{3}{|c|}{\parbox{3cm}{the final $z$-values of the three edges}}\\ \hline
two-norm dynamics   & 0.8160 & 0.8167 & 0.8166 \\ \hline
one-norm dynamics  & 1.331 & 1.327 & 1.342  \\ \hline
\end{tabular}
\caption{\label{ring simulations} The initial $z$-values were chosen randomly between $1/1000$ and $1$. 
In all cases, the system converged to the 3-edge equilibrium. Note that $0.82 \approx \sqrt{2/3}$ and $1.33 \approx 4/3$. 
}
\end{table}

\subsection{The Bow-Tie Graph}\label{Bow-Tie Graph}

\begin{figure}[t]
  \centering{\includegraphics[clip=true,trim=0 200 0 370, width=0.5\textwidth]{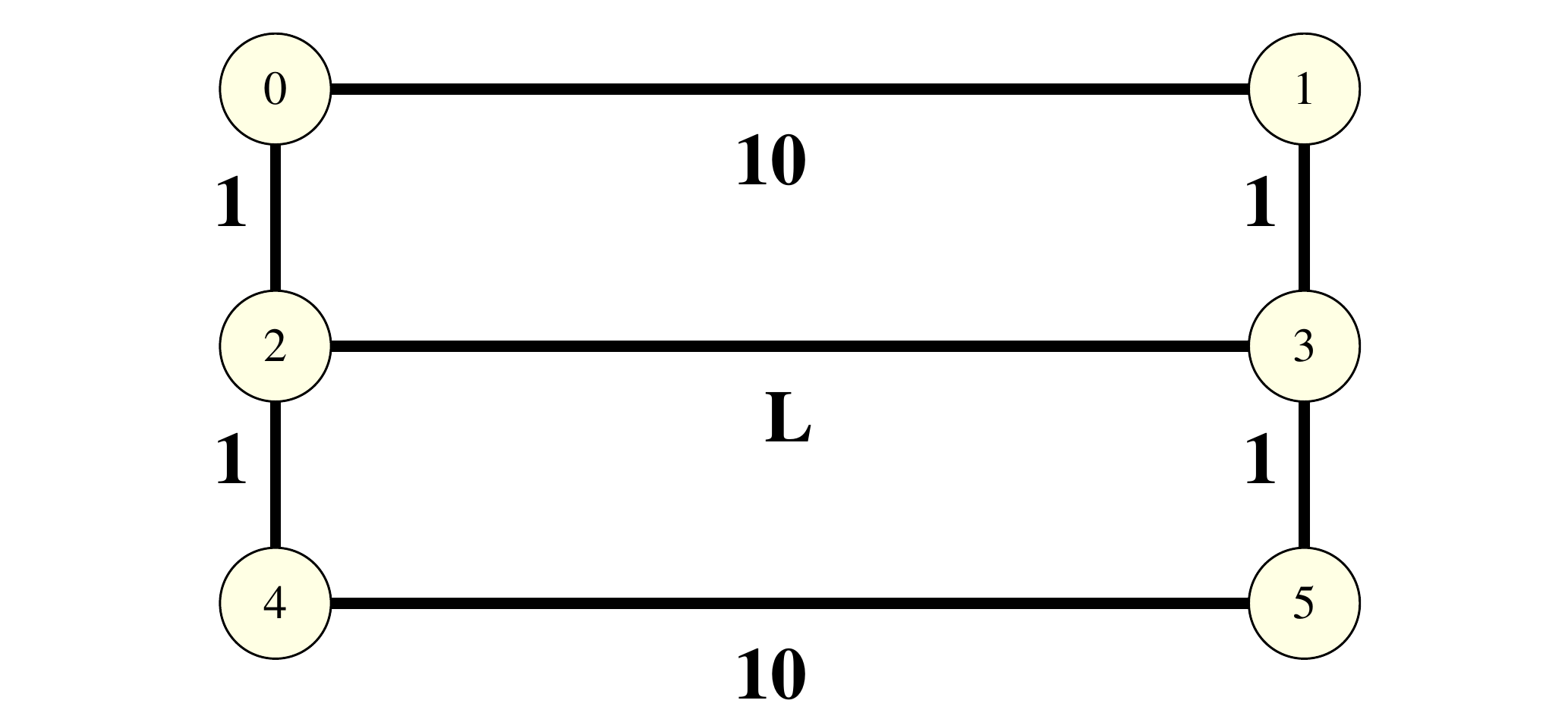}}
  \caption{\label{BowTie} The top and the bottom horizontal edge have cost 10, the middle horizontal edge has cost $L$, and all other edges have cost 1. We are sending one unit between nodes 0 and 1 and one unit between nodes 4 and 5. }
\end{figure}

      \begin{figure}[t]
        \begin{center}
         \includegraphics[width=0.9\textwidth,clip=, trim=0 100 200 0]{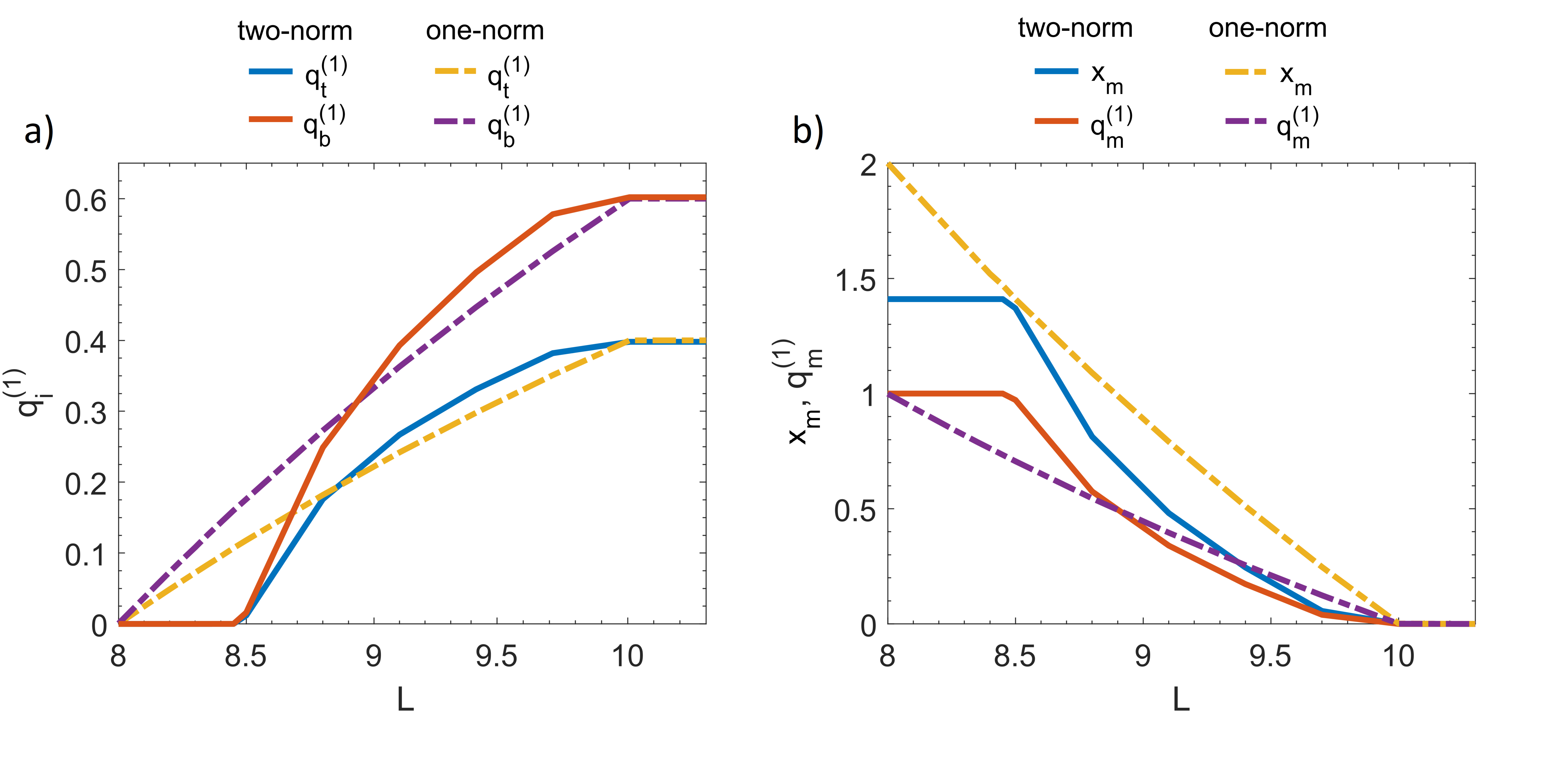}
        \end{center}
        \caption{\label{BowTie Simulation} Simulation of the Bow-Tie Graph: The plot depicts the quantities $q^{(1)}_b$, $q^{(1)}_m$, and $q^{(1)}_t$ (= the split-up of the flow from node 0 to node 1 across the three horizontal edges bottom, middle, and top) and $x_m$ (= the capacity of the middle edge) as a function of $L$
        in the range $[8, 10.3]$. For $L < 8$, the quantifies are as for $L = 8$, and for $L > 10.3$, the quantities are as for $L = 10.3$. For $q^{(2)}$ the flow across the middle edge is the same and the flow across the other edges is reversed. For all $L$, $q^{(1)}_b + q^{(1)}_m + q^{(1)}_t = 1$. 
        (1) The image on the left shows $q^{(1)}_b$ and $q^{(1)}_t$. 
        (2) The image on the right shows the capacity $x_m$ and the flow $q^{(1)}_m$ across the middle edge. We have $x_m = \sqrt{2 (q^{(1)}_m)^2} = \sqrt{2} \cdot q^{(1)}_m$ in the case of the two-norm and $x_m = 2 q^{(1)}_m$ in the case of the one-norm. }
        \end{figure}

Consider the graph shown in Figure~\ref{BowTie}; we refer to this graph as a bow-tie. The edge costs are as shown and we are sending one unit each between nodes 0 and 1 and nodes 4 and 5, i.e., $b_0 = (1,-1,0,0,0,0)$ and $b_1 = (0,0,0,0,1,-1)$. For each pair the direct path connecting the pair has length $10$, the path using the middle edge has length $L+2$ and the path using the edge connecting the other pair has length $14$. Figures~\ref{BowTie Simulation} and~\ref{BowTie Simulation Cost-Energy} show the results of a simulation. Initial $x$-values were chosen randomly in the interval $[1,10]$. We observe:
\begin{itemize}
  \item For $L \le 8$, both dynamics generate essentially the same solution. All flow is essentially routed through the middle edge. 
  \item For the two-norm dynamics: For $L < 8.5$, the sharing effect is strong and basically all flow is routed through the middle edge. Note that for $L > 8$, the path through the middle edge is not the shortest path for either demand. Starting at $L = 8.5$, the top and the bottom edge are also used. For $L \ge 10$, only the top and the bottom edge are used and this may give the impression that there is no sharing effect for large $L$. This is not the case. The solution for $L = \infty$ is easily computed analytically. Because of symmetry, a fraction $a$ of each flow is routed the short way (length 10) and a fraction $1 - a$ is routed the long way (length 14). So through each edge, we have a flow of value $a$ and a flow of value $1 - a$ and hence all edges will have the same capacity in equilibrium; call it $x$. Therefore the flows must be in the same ratio as the costs, i.e., $a/(1 - a) = 10/14$. This solves to $a = 7/12$. Then $x = \sqrt{a^2 + (1 - a)^2} = \sqrt{74}/12 \approx 8.6023$. The cost of the network is then $24 \cdot \sqrt{74}/12 = 2 \sqrt{74} \approx 17.2$ and the dissipated energy is the same. Assume now that we delete the vertical edges. Then each demand is routed separately and the bottom and the top edge will have a capacity of one each. The cost of the network will be 20 and the dissipated energy will also be 20. This is considerably more than the cost of the network constructed by our dynamics. 
  \item For the one-norm dynamics: Starting at $L = 8.05$, the top and the bottom edge are also used. For $L \ge 10.3$, only the top and the bottom edge are used.
  \item For the two-norm dynamics, the cost $\C$ and the dissipated energy $\E$ are equal in the limit; see Figure~\ref{BowTie Simulation Cost-Energy}.
    \end{itemize}

\begin{figure}
  \centering
  \begin{tabular}{|r|r|r|r|r|r|r|r|r|r|r|r|r|}
    \hline
$L$ & 6.5 & 6.8 & 7.1 & 7.4 & 7.7 & 8.0 & 8.3 & 8.6 & 8.9& 9.2 & 9.5 & 9.8  \\ \hline
$\C$ & 13.2 & 13.6 & 14.0 &  14.5  &  14.9  &  15.3  &  15.7  &  16.1  &  16.4  &  16.6  &  16.7  & 16.7   \\ \hline
    $\E$ & 13.2 & 13.6 & 14.0 &  14.5  &  14.9  &  15.3  &  15.7  &  16.1  &  16.4  &  16.6  &  16.7  & 16.7   \\ \hline
 \end{tabular}
  \caption{\label{BowTie Simulation Cost-Energy}. Simulation results for the two-norm dynamics for the bow-tie graph. The cost $\C = c^T x$ and the energy $\E = \sum_i (\iof[i]{b})^T \iof[i]{p}$ for the limit states for different values of $L$. Note that $\C = \L$ always. }
\end{figure}

      \subsection{A Case Study Inspired by~\cite{Tero-Takagi-etal}} \label{Tokyo Railroad}
  
      In~\cite{Tero-Takagi-etal} the slime molds ability to construct elegant networks in investigated. The slime is allowed to grow in a region that is shaped according to the greater Tokyo region and food is provided at many different places. Figure~\ref{Wet-Lab Experiments} shows the results of the wet-lab experiment and compares a network constructed by the slime to the railroad network around Tokyo. The paper also reports about a computer experiment. Repeatedly a pair of food sources was chosen at random and a step of the shortest path dynamics was executed. Figure 4 in~\cite{Tero-Takagi-etal} shows the results of the computer experiment. No details are given in the paper and also the positions of the food sources are not given in detail.

      We tried to repeat the experiment with the two-norm dynamics. For this purpose, we digitized the boundary of the Greater Tokyo region in the form of a polygonal region and overlayed a regular grid in which each node is connected to its up to eight neighbors (north, northwest, west, southwest, south, southeast, east, northeast) inside the region. The edge lengths are 1 for the horizontal and vertical edges and $1.41$ for the diagonal edges. We perturbed the edge lengths slightly by adding $r \cdot 0.05$ for a random integer $r \in [-3,3]$ so as to avoid many equal length path. We chose the terminals in two different ways.
      \begin{description}
      \item[First choice:] We chose the largest 25 cities cities Greater Tokyo region according to Wikipedia and generated 140 demands. Each city was connected to all other cities whose distance is below a certain threshold. For the threshold we chose about 1/2 times the diameter of the region. The left side of Figure~\ref{Tokyo Example} shows the input and Figure~\ref{Tokyo Output} shows the output of a computer simulation. 
      \item[Second choice:] We mimicked the choice of sites used in~\cite{Tero-Takagi-etal}. We generated 282 demands again between any pair of sites whose distance is below a certain threshold. The demands are 1, except if one of the terminals corresponds to Tokyo. Then the demand is seven; this is as in~\cite{Tero-Takagi-etal}. The right side of Figure~\ref{Tokyo Example} shows the input and Figure~\ref{Tero Output} shows the output of a computer simulation. 
 \end{description}

   \begin{figure}[th]
     \centerline{\includegraphics[width=0.50\textwidth,clip=true,trim=60 08 10 250]{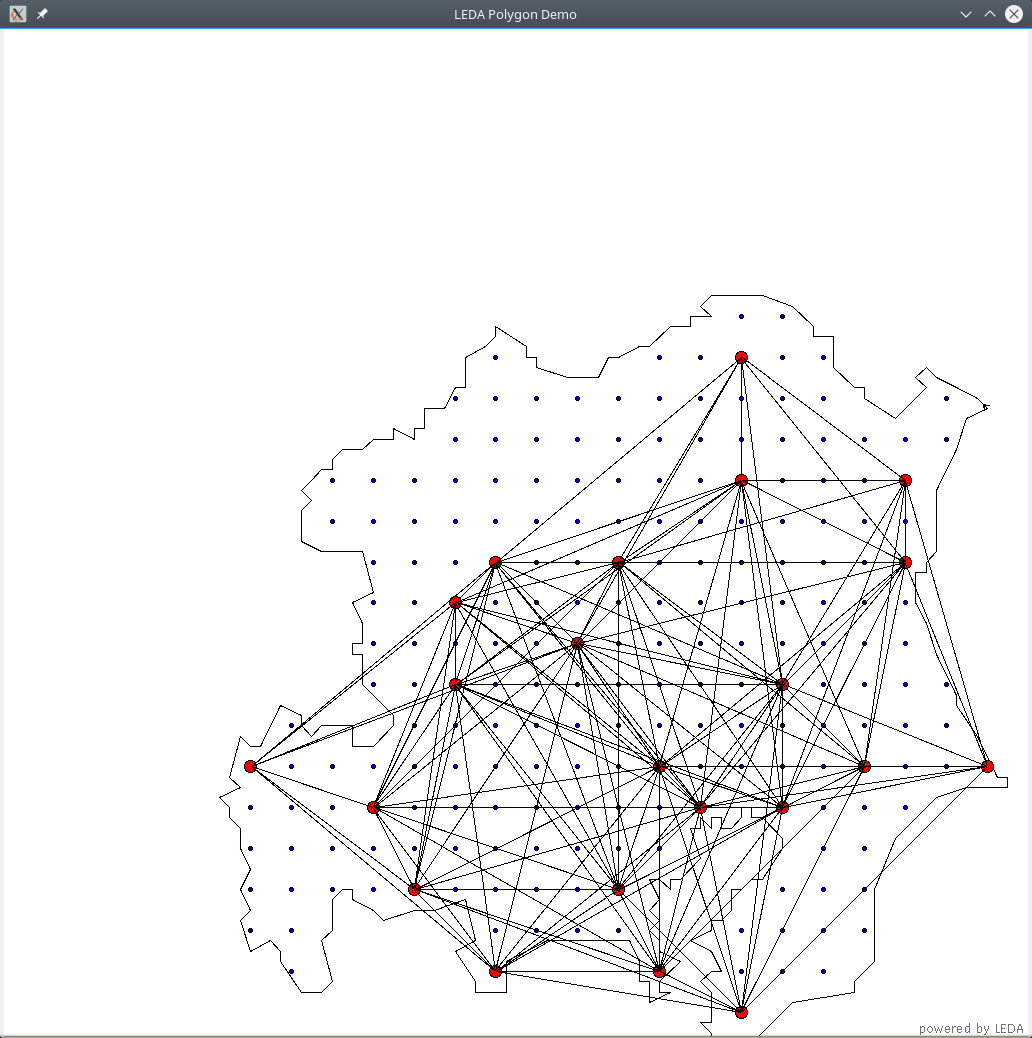}\quad\includegraphics[width=0.48\textwidth,clip=true,trim=360 70 10 380]{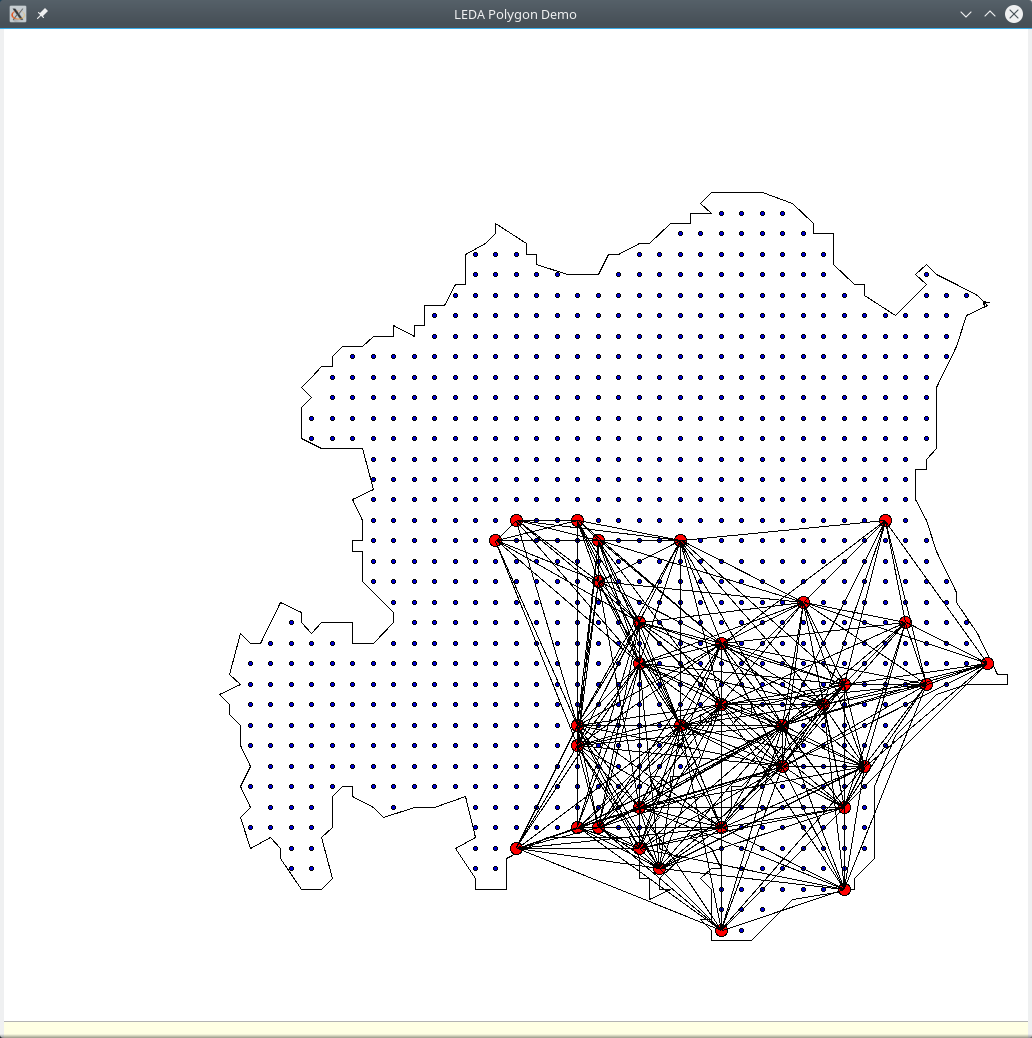}}
     \caption{The polygonal region on the left is a digitization of the Greater Tokyo Region. The red dots indicate major cities. We set up 140 demands. For each red city, we created a demand of one unit to any other red city within a certain distance threshold. The threshold is about 1/2 the distance between the topmost and the bottommost red point. The region on the right is approximately the right lower quadrant of the region on the left. For the placement of the terminals we tried to copy the placement shown in Figure~\ref{Wet-Lab Experiments}. We set up 282 demands, again between cities below a certain distance threshold. The demands are one, except if one of the terminals corresponds to Tokyo. Then the demand is seven; this is as in~\cite{Tero-Takagi-etal}.\label{Tokyo Example}}
   \end{figure}

   \begin{figure}[th]
     \centerline{\includegraphics[width=0.42\textwidth,clip=true,trim=5 30 5 50]{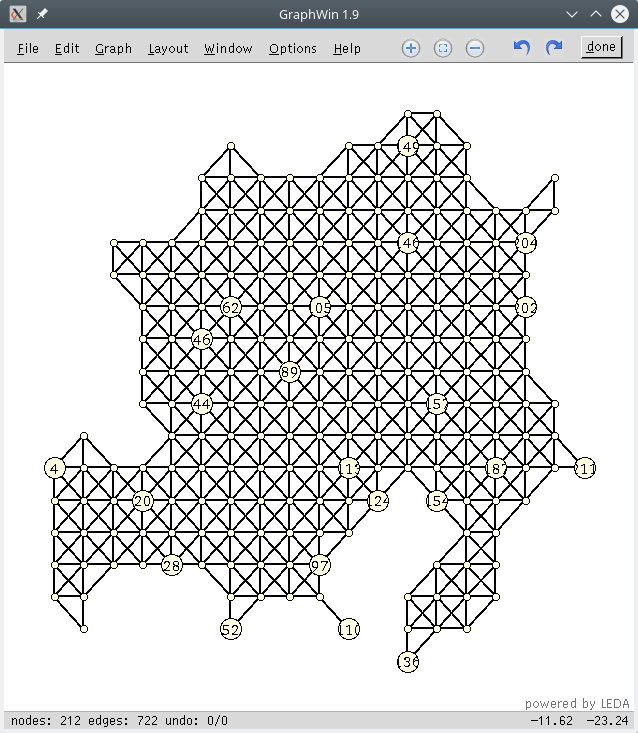}\quad \includegraphics[width=0.42\textwidth,clip=true,trim=5 30 5 50]{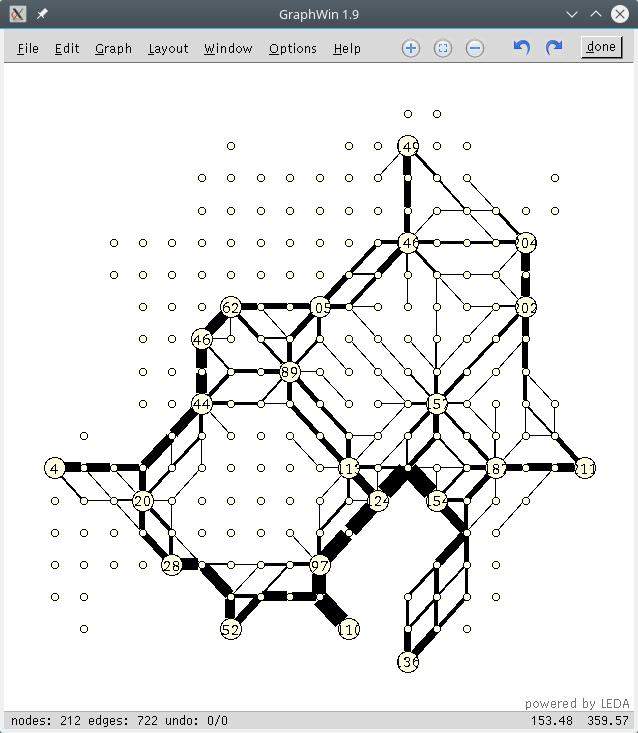}}\smallskip
    %
   
 \centerline{\includegraphics[width=0.42\textwidth,clip=true,trim=5 30 5 50]{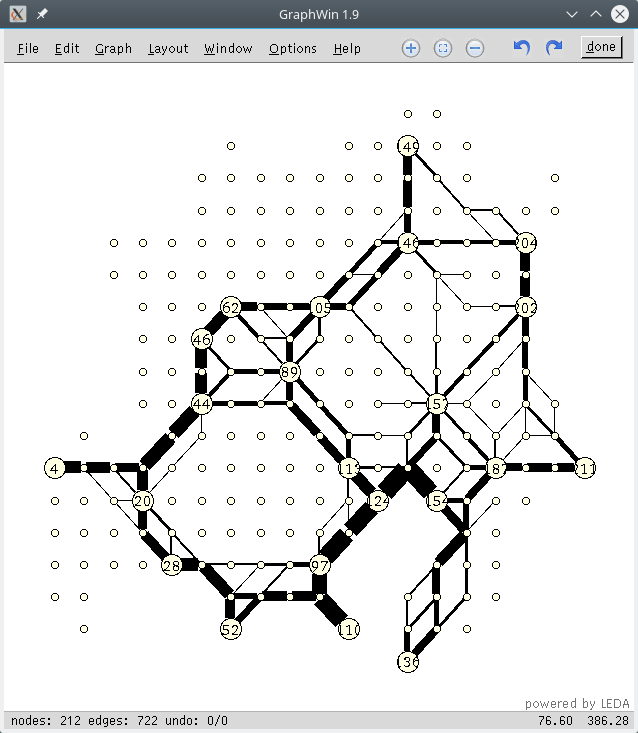}\quad \includegraphics[width=0.42\textwidth,clip=true,trim=5 30 5 50]{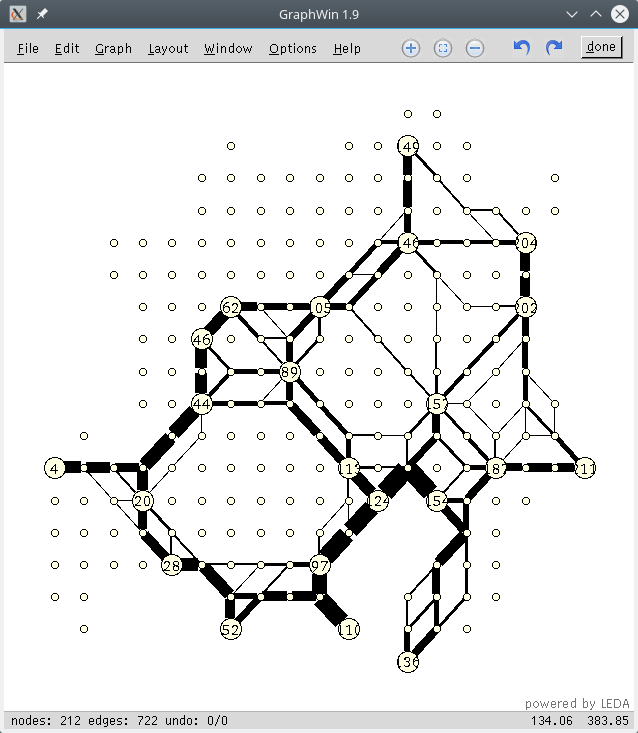}}
     
     \caption{An output of a simulation of the two-norm dynamics on the left instance in Figure~\ref{Tokyo Example}. The graph in the upper left corner shows the initial graph. Each node is connected to its up to 8 neighbors. The length of the horizontal and vertical edges is approximately 1, the length of the diagonals is approximately 1.41. All capacities are 0.5 initially and the capacity of an edge is indicated by its thickness. The following figures show the state after 1950 and 4875 iterations. For the situation after 4875 iterations, we also show the reduced graph where we iteratively removed nodes of degree one (which are not terminals). The numbers inside the nodes are unique identifiers; they have no meaning beyond this. \label{Tokyo Output}}
   \end{figure}

   \begin{figure}
     \centerline{\includegraphics[width=0.42\textwidth,clip=true,trim=5 30 5 50]{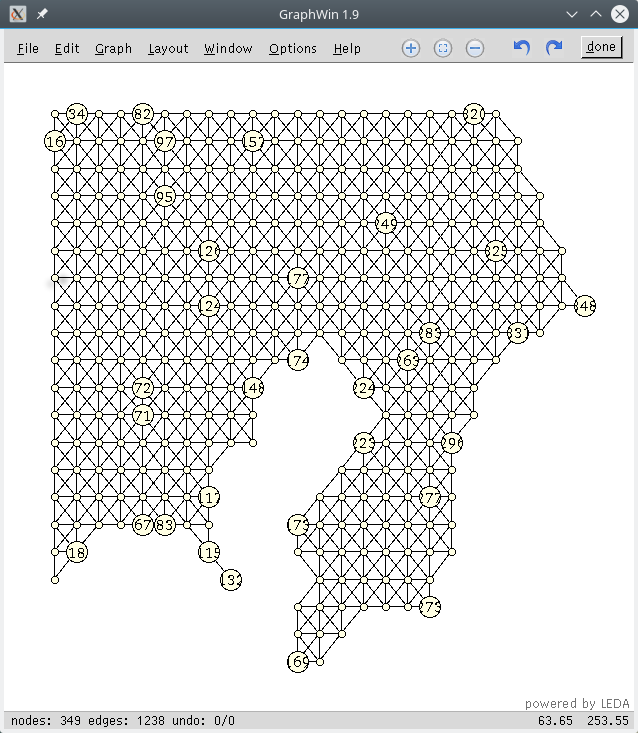}\quad \includegraphics[width=0.42\textwidth,clip=true,trim= 40 30 5 50]{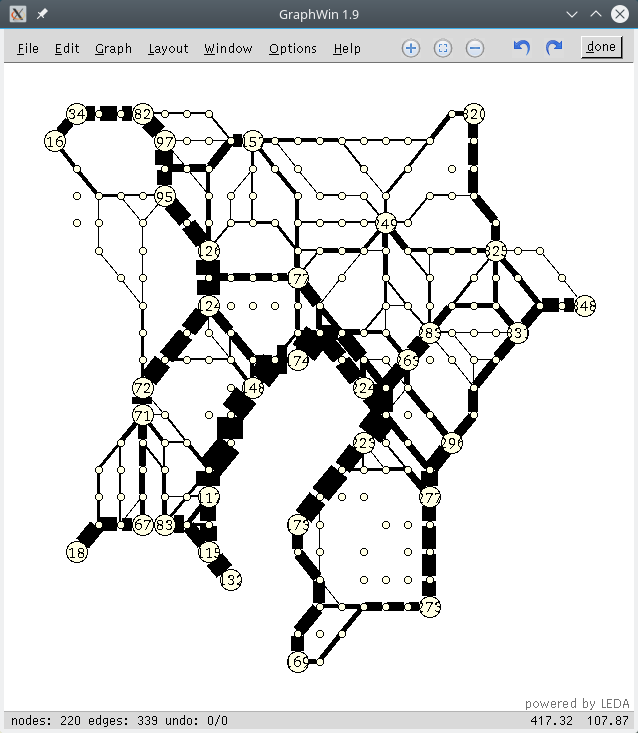}
     }

     
     
     \caption{An output of a simulation of the two-norm dynamics on the right instance in Figure~\ref{Tokyo Example}. The graph in the upper left corner shows the initial graph. Each node is connected to its up to 8 neighbors. The length of the horizontal and vertical edges is approximately 1, the length of the diagonals is approximately 1.41. All capacities are 0.5 initially and the capacity of an edge is indicated by its thickness. The figure on the right show the state after 16000 iterations where we iteratively removed nodes of degree one. \label{Tero Output}}
   \end{figure}

\section{Preliminaries}\label{Preliminaries}
We recall the definition of energy dissipation and cost. For a capacity vector $x \in \Rplus^m$
and a vector $f \in \R^m$ with $\supp(f) \subseteq \supp(x)$, we use  \[ E_x(f) = \begin{cases} \sum_e (c_e/x_e) f_e^2  &\text{if $\supp f \subseteq \supp x$},\\
    \infty &\text{if $\supp f \not\subseteq \supp x$.} \end{cases}\]  to denote the \emph{energy dissipation} of $f$ with respect to $x$. Strictly speaking we should sum only over the $e$ in $\supp x$. We use the convention $0^2/0 = 0$ to justify summing over all edges $e$. Further, we use
  \[ \C(f)= \sum_e c_e \abs{f_e} = c^T \abs{f}\]  to denote the \emph{cost} of $f$. Note that
  \[ E_x(x) = \sum_e (c_e/x_e) x_e^2 = \sum_e c_e x_e = \C(x). \]

We use $R$ to denote the diagonal matrix with entries $c_e/x_e$. Energy-minimizing solutions are induced by node potentials $p \in \R^n$ according to the following equations:
\begin{align}
  b &= Aq \label{feasibility},\\
  q &= R^{-1} A^T p  \label{definition of q},\\
  AR^{-1} A^T p &=b.    \label{definition of p}
\end{align}
We give a short justification why the equations above characterize the energy minimizing solution to the linear system. The energy minimizing solution $q$ minimizes the quadratic function $\sum_e (c_e/x_e) q_e^2$ subject to the constraints $Aq = b$ and $\supp(q) \subseteq \supp(x)$.  The KKT conditions (see~\cite[Subsection 5.5]{Boyd-Vandenberghe}) state that at the optimum, the gradient of the objective is a linear combination of the gradients of the constraints, i.e., 
\[       2 (c_e/x_e) q_e = \sum_i \iof[i]{p} A_{i,e}  \quad \text{for all $e \in supp(x)$}\]
for some vector $p \in \R^n$ and $q_e = 0$ for $e \not\in \supp(x)$. Absorbing the factor $2$ into $p$ yields equation \lref{definition of q}. Substitution of \lref{definition of q} into \lref{feasibility} gives \lref{definition of p}. The energy-minimizing solution is unique. It exists if and only if $b \in \Im A$. Node potentials $p$ are not unique, but the values of $b^Tp$ and $p^T L(x)p$ are equal fo all solutions of~\lref{definition of p}.

\begin{lemma}\label{Basics} Assume $x > 0$. Then $\Ker L(x) = \Ker A^T$ and $\Im L(x) = \Im A$. The values $b^T p$, $p^T L(x) p$ and $q = X C^{-1} A^T p$ do not depend on the particular solution of $L(x)p = b$. \end{lemma}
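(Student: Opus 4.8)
The plan is to exploit that, for $x > 0$ (and recall $c > 0$), the matrix $L(x) = A X C^{-1} A^T$ is symmetric and positive semidefinite, and then read off the kernel, image, and well-definedness claims from elementary linear algebra. The starting point is the identity of quadratic forms: for every $p \in \R^n$,
\[ p^T L(x) p = (A^T p)^T X C^{-1} (A^T p) = \sum_e \frac{x_e}{c_e}\,(\row[e]{(A^T)} p)^2 \ge 0, \]
where the inequality uses $x_e > 0$ and $c_e > 0$ term by term. This shows $L(x)$ is positive semidefinite and, crucially, that $p^T L(x) p = 0$ forces $\row[e]{(A^T)} p = 0$ for all $e$, i.e.\ $A^T p = 0$. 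Now $\Ker A^T \subseteq \Ker L(x)$ is immediate from $L(x) = A X C^{-1} A^T$; conversely, $L(x) p = 0$ implies $p^T L(x) p = 0$, hence $A^T p = 0$. Thus $\Ker L(x) = \Ker A^T$.

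For the image, I would use symmetry: $\Im L(x) = (\Ker L(x))^\perp$, and combining this with the kernel identity just proved and the standard fact $(\Ker A^T)^\perp = \Im A$ gives $\Im L(x) = \Im A$. (Alternatively, $\Im L(x) \subseteq \Im A$ is obvious because $L(x)$ factors through $A$, and the rank count $\dim \Im L(x) = n - \dim \Ker L(x) = n - \dim \Ker A^T = \dim \Im A$ closes the gap.)

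For the independence statements, suppose $p_1$ and $p_2$ both solve $L(x) p = b$. Then $p_1 - p_2 \in \Ker L(x) = \Ker A^T$, so $A^T p_1 = A^T p_2$; in particular $q = X C^{-1} A^T p$ takes the same value for both. For $b^T p$, write $b = L(x) p_1 = A X C^{-1} A^T p_1$ and compute
\[ b^T (p_1 - p_2) = p_1^T A X C^{-1} A^T (p_1 - p_2) = p_1^T A X C^{-1}(A^T p_1 - A^T p_2) = 0, \]
so $b^T p$ is independent of the chosen solution; and since $p^T L(x) p = p^T b = b^T p$ whenever $L(x) p = b$, the quantity $p^T L(x) p$ inherits the same independence.

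I do not expect a genuine obstacle here: this is the standard electrical-network / graph-Laplacian boilerplate. The only mild subtlety worth flagging is that one must invoke positive semidefiniteness (not merely the factorization $L(x) = A X C^{-1} A^T$) so that the kernel of $L(x)$ is detected \emph{exactly} by the quadratic form, yielding the equality $\Ker L(x) = \Ker A^T$ rather than just the inclusion $\supseteq$; once that is in hand, the image statement and the independence of $b^T p$, $p^T L(x) p$, $q$ follow essentially for free from symmetry.
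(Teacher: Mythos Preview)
Your proof is correct and follows essentially the same approach as the paper: both establish $\Ker L(x) = \Ker A^T$ via positive semidefiniteness of the quadratic form (the paper writes this as $\normtwo{D^{1/2} A^T z}^2 = 0$ with $D^{1/2} = \diag(\sqrt{x_e/c_e})$, which is exactly your sum), and both derive the independence claims from $p_1 - p_2 \in \Ker L(x)$. The one minor difference is the image argument: you use the clean orthogonal-complement identity $\Im L(x) = (\Ker L(x))^\perp = (\Ker A^T)^\perp = \Im A$ for symmetric matrices, whereas the paper argues by contraposition via the Rouch\'{e}--Capelli theorem; your route is more direct, but both are standard and equivalent.
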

\begin{proof} Clearly, $\Ker A^T \subseteq \Ker L(x)$. So assume $z \in \Ker L(x)$. Then $L(x) z = 0$ and hence $z^T L(x) z = 0$. Let $D^{1/2}$ be the diagonal matrix with entries $\sqrt{x_e/c_e}$. Then
  \[ 0 = z^T L(x) z = z^T A D^{1/2} D^{1/2} A^T z = \normtwo{D^{1/2} A^T z}^2\]
  and hence $D^{1/2} A^T z = 0$ and further $0 = A^T z$. So $z \in \Ker A^T$.

  Clearly, $\Im L(x) \subseteq \Im A$. So assume $b \not\in \Im L(x)$. Then the rank of the matrix obtained by augmenting $L(x)$  by the column $b$ is larger than the rank of $L(x)$ (Rouch\'{e}-Capelli theorem) and hence there is a vector $r$ such that $r^T b \not= 0$ and $r^T L(x) = 0$. Since $L(x)$ is symmetric, $L(x) r = 0$ and hence $r \in \Ker L(x) = \Ker A^T$. So $0 = A^T r = (r^T A)^T$. Thus $r$ also proves $b \not\in \Im A$. 

Let $p$ and $\bar{p}$ be node potentials. Then $L(x) p = b = L(x) \bar{p}$ and hence $\bar{p} - p \in \Ker L(x)$. Then 
  \[ b^T \bar{p} = b^T p + b^T (\bar{p} - p) = b^T p + p^T L(x)^T (\bar{p} - p) = b^T p + p^T L(x) (\bar{p} - p) = b^T p\]
  and
  \[ X C^{-1} A^T \bar{p} =  X C^{-1} A^T p + X C^{-1} A^T (\bar{p} - p) = X C^{-1} A^T p.\]
  Finally, $b^T p = p^T L(x) p$. \end{proof}

For the arc-node incidence matrix $A$ of a connected graph, the kernel $\Ker A^T$ consists of the all-ones vector in $\R^n$. We can make the node potential unique by requiring $p_v = 0$ for some fixed node $v$, i.e., by grounding node $v$.

\begin{lemma} Let $\ell$ be the dimension of $\Ker A^T$ and let $K \in \R^{n\times \ell}$ be a matrix whose columns form a basis of $\Ker A^T$. Let $V' \subseteq [n]$ with $\abs{V'} = \ell$ be such that the submatrix of $K$ with rows selected by $V'$ is nonsingular. Then the solution $p$ to $L(x) p = b$ with $p_v = 0$ for all $v \in V'$ is unique, i.e. ``grounding all nodes in $V'$ makes the potential unique''. 
\end{lemma}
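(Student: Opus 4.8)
The plan is to reduce the claim to elementary linear algebra, using the structural facts already established in Lemma~\ref{Basics}. Recall from that lemma that $\Ker L(x)=\Ker A^T$, and by hypothesis the columns of $K$ form a basis of $\Ker A^T$; hence $\Ker L(x)=\set{Kw}{w\in\R^\ell}$ and $K$ has full column rank $\ell$. The submatrix $K_{V'}\in\R^{\ell\times\ell}$ obtained by selecting the rows of $K$ indexed by $V'$ is, by assumption, nonsingular. These are the only ingredients needed.

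First I would settle existence. Since $b\in\Im A=\Im L(x)$ by Lemma~\ref{Basics}, the system $L(x)p=b$ has at least one solution $p_0$, and every solution is of the form $p_0+Kw$ for some $w\in\R^\ell$, because any two solutions differ by an element of $\Ker L(x)$. To enforce the grounding constraint we need $(p_0+Kw)_v=0$ for all $v\in V'$, i.e.\ $K_{V'}\,w=-(p_0)_{V'}$, where $(p_0)_{V'}$ is the subvector of $p_0$ indexed by $V'$. As $K_{V'}$ is nonsingular, this equation has a solution $w^*$, so $p:=p_0+Kw^*$ is a solution of $L(x)p=b$ that vanishes on $V'$.

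For uniqueness, suppose $p$ and $\bar p$ both solve $L(x)p=b$ and both vanish on every $v\in V'$. Then $\bar p-p\in\Ker L(x)=\set{Kw}{w\in\R^\ell}$, so $\bar p-p=Kw$ for some $w$; restricting to the rows in $V'$ gives $K_{V'}\,w=(\bar p-p)_{V'}=0$, and nonsingularity of $K_{V'}$ forces $w=0$, hence $\bar p=p$.

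I do not anticipate a genuine obstacle here; the only point requiring a word of care is existence, and that is exactly what the nonsingularity hypothesis on $K_{V'}$ provides — it says the grounding constraints are ``transversal'' to $\Ker A^T$. It is also worth remarking that such a $V'$ always exists: for the arc--node incidence matrix of a graph, $\Ker A^T$ is spanned by the indicator vectors of the connected components, $\ell$ equals the number of components, and picking one node per component makes $K_{V'}$ a permutation of the identity, which recovers the usual ``ground one node per connected component'' recipe.
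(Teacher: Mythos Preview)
Your proof is correct and follows essentially the same approach as the paper's: both establish existence by starting from an arbitrary solution and shifting by an element $Kw$ of $\Ker L(x)$ chosen to kill the entries on $V'$, and both establish uniqueness by writing the difference of two grounded solutions as $Kw$ and using the nonsingularity of $K_{V'}$ to force $w=0$. Your write-up is in fact slightly cleaner in that it invokes the nonsingularity of $K_{V'}$ explicitly at both steps, whereas the paper's uniqueness argument phrases this as ``since the columns of $K$ are independent''; your concluding remark about the graph case is a nice addition but not in the paper.
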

\begin{proof} Observe first that such a solution exists. Let $p$ be an arbitrary solution to $L(x) p = b$. Then there is a vector $\lambda \in \R^\ell$ such that $(K \lambda)_v = p_v$ for all $v \in V'$ and hence $p - K \lambda$ is the desired node potential. Assume now that we have two solutions $p$ and $p'$ with $p_v = p_v'$ for all $v \in V'$. Then $p - p' \in \Ker L(x) = \Ker A^T$ and $(p - p')_v = 0$ for all $v \in V'$. Since $p - p' \in \Ker A^T$ there is a $\lambda \in \R^\ell$ such that $p - p' = K \lambda$. Then $(K \lambda)_v = 0$ for all $v \in V'$. Since the columns of $K$ are independent, this implies $\lambda = 0$ and hence $p = p'$.
\end{proof}

The next Lemma gives alternative expressions for the energy $E_x(q)$ of the minimum energy solution. 

\begin{lemma}$E_x(q) = \sum_e (c_e/x_e) q_e^2 = b^T p = p^T L(x) p$, where $p$ is any solution of~\lref{definition of p}. \end{lemma}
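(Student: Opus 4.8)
The plan is to derive both equalities by a single substitution, using the characterisation of the minimum-energy solution $q$ via a node potential $p$ recorded in \lref{feasibility}--\lref{definition of p}. Write $R$ for the diagonal matrix with entries $c_e/x_e$ on the coordinates in $\supp(x)$; on those coordinates $R^{-1} = X C^{-1}$, and $L(x) = A X C^{-1} A^T$ agrees with $A R^{-1} A^T$ read on $\supp(x)$. By \lref{definition of q}, $q = X C^{-1} A^T p = R^{-1} A^T p$, with $q_e = 0$ for $e \notin \supp(x)$.

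First I would rewrite the energy as a quadratic form on the support: since $q_e = 0$ off $\supp(x)$, the convention $0^2/0 = 0$ gives
\[
  E_x(q) \;=\; \sum_e \frac{c_e}{x_e}\, q_e^2 \;=\; \sum_{e \in \supp(x)} \frac{c_e}{x_e}\, q_e^2 \;=\; q^T R q .
\]
Next, substituting $q = R^{-1} A^T p$ and cancelling the $R$'s,
\[
  q^T R q \;=\; p^T A R^{-1} R R^{-1} A^T p \;=\; p^T A R^{-1} A^T p \;=\; p^T L(x)\, p ,
\]
and finally \lref{definition of p}, i.e.\ $L(x) p = b$, yields $p^T L(x) p = p^T b = b^T p$. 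Chaining the three displays gives $E_x(q) = p^T L(x) p = b^T p$.

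It remains only to note well-posedness: the potential $p$ is not unique, but Lemma~\ref{Basics} already guarantees that $b^T p$, $p^T L(x) p$, and $q = X C^{-1} A^T p$ are independent of the chosen solution of $L(x) p = b$, so each occurrence of $p$ above may be taken to be an arbitrary such solution. When $x$ has vanishing coordinates one simply reads all the matrix identities on $\supp(x)$; the flow and its energy contribution vanish there by the support condition, so nothing is lost. I do not anticipate a genuine obstacle here — this is the standard identity ``primal energy $=$ dual quadratic form'' for a symmetric weighted-Laplacian system, and the entire content is the one line of cancellation displayed above.
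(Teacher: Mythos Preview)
Your proof is correct and is essentially the paper's own argument: the paper proves this in one line as $E_x(q) = q^T R q = p^T A R^{-1} R R^{-1} A^T p = p^T A R^{-1} A^T p = p^T L(x) p = p^T b$, which is exactly your substitution chain. Your additional remarks on support and well-posedness are sound but go slightly beyond what the paper records.
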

\begin{proof}
  This holds since \[E_x(q) = q^T R q = p^T A R^{-1} R R^{-1} A^T p = p^T A R^{-1} A^T p = p^T L(x) p = p^T b.\]
  \proofendswithequation
 \end{proof}

Finally, we recapitulate a bound on the components of $q$ established in~\cite{SV-LP} and slightly improved form in~\cite[Lemma 3.3]{BeckerBonifaciKarrenbauerKolevMehlhorn}.

\begin{lemma} Let $D$ be the maximum absolute value of a square submatrix of $A$. Then 
  $\abs{q_e} \le D\onenorm{b}$ for every $e \in [m]$. \end{lemma}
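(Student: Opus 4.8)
The statement is unchanged if I delete the columns of $A$ on which $x$ vanishes: whenever $x_e=0$ we have $q_e=0$ because $\supp q\subseteq\supp x$, the quantity $\onenorm{b}$ is untouched, and passing to a column–submatrix of $A$ can only decrease $D$. So I would assume from the start that $x>0$, hence that $R=\diag(c_e/x_e)$ is a positive definite diagonal matrix. The idea is then to realise $q$ as the solution of a single square linear system and read off the bound via Cramer's rule.

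Concretely, recall from the preliminaries that the minimum energy solution is the unique $q$ with $Aq=b$ and $Rq\in\Im A^{T}=(\Ker A)^{\perp}$. Put $\rho=\operatorname{rank}A$, fix $T\subseteq[n]$ with $|T|=\rho$ such that $A_{T,:}$ has rank $\rho$, and fix a matrix $N\in\R^{m\times(m-\rho)}$ whose columns form a basis of $\Ker A$. Then
\[
   M=\begin{pmatrix}A_{T,:}\\ N^{T}R\end{pmatrix}\in\R^{m\times m}
\]
is nonsingular: if $Mz=0$ then $Az=0$ and $N^{T}Rz=0$, i.e.\ $z\in\Ker A$ and $z\perp_{R}\Ker A$, so $z=0$. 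Moreover $q$ solves $Mq=\binom{b_{T}}{0}$ (the top block is $A_{T,:}q=b_{T}$, the bottom block is $N^{T}Rq=0$), so by uniqueness $q=M^{-1}\binom{b_{T}}{0}$, and Cramer's rule gives $q_{e}=\det M_{e}/\det M$, where $M_{e}$ is $M$ with column $e$ replaced by $\binom{b_{T}}{0}$.

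Next I would expand both determinants by multilinearity in the $m-\rho$ bottom rows of $M$ (the rows of $N^{T}R$), which are resistance–weighted combinations of the standard covectors $e_{1}^{T},\dots,e_{m}^{T}$. A Cauchy--Binet computation collapses this to sums over the $\rho$-element column sets $S'$ that are bases of the column matroid of $A$ (equivalently $\det A_{T,S'}\neq0$, equivalently $\det N_{[m]\setminus S',:}\neq0$): one obtains $\det M=\sum_{S'}\bigl(\prod_{f\notin S'}r_{f}\bigr)\theta_{S'}$ with $\theta_{S'}=\pm\det A_{T,S'}\cdot\det N_{[m]\setminus S',:}$, and $\det M_{e}=\sum_{S'\ni e}\bigl(\prod_{f\notin S'}r_{f}\bigr)\theta'_{e,S'}$ where $\theta'_{e,S'}$ is obtained from $\theta_{S'}$ by replacing column $e$ of $A_{T,S'}$ by $b_{T}$. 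Expanding that last determinant along its $b_{T}$-column gives $|\theta'_{e,S'}|\le\onenorm{b}\,D\,|\det N_{[m]\setminus S',:}|$, while $|\det A_{T,S'}|\le D$ always and $|\det A_{T,S'}|\ge1$ for the bases $S'$ when $A$ is integral (in particular for an incidence matrix). The one structural ingredient I need is that all the numbers $\theta_{S'}$ share a common sign $\varepsilon$; this is precisely the positivity behind the generalized matrix--tree theorem, and it holds because the row space of $A_{T,:}$ is the orthogonal complement of the column space of $N$, which forces the Cauchy--Binet signs to align (alternatively: $\det M$ is continuous and nowhere zero on the connected set $\{r>0\}$, so specializing to $r=\vOnes$ pins down the sign pattern). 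Granting this, $|\det M|=\sum_{S'}\bigl(\prod_{f\notin S'}r_{f}\bigr)|\theta_{S'}|$, every term of $\det M_{e}$ is at most $\onenorm{b}\,D$ times the matching term of $\det M$, and hence $|q_{e}|=|\det M_{e}|/|\det M|\le\onenorm{b}\,D$.

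The main obstacle is the sign-consistency of the $\theta_{S'}$: without it both determinants could involve cancellation and the term-by-term comparison in the last step would be vacuous; the rest is bookkeeping. A secondary subtlety is that the clean bound $D\onenorm{b}$ (rather than $D\onenorm{b}$ divided by the smallest nonzero subdeterminant) uses that nonzero subdeterminants of $A$ have absolute value at least one, which is automatic in the incidence-matrix setting motivating the paper. An alternative route that avoids the determinantal machinery is to restrict $q$ to its support, use that $q$ is there the $R$-orthogonal projection of the origin onto the affine flow space, reduce to a vertex of a polyhedron $\{f:A'f=b,\ f\ge0\}$, and invoke the classical bound on basic feasible solutions; this needs the same integrality input but replaces matrix--tree positivity by the standard LP vertex estimate.
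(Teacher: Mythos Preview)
The paper does not give its own proof of this lemma; it simply cites it as a known bound from Straszak--Vishnoi and Becker--Bonifaci--Karrenbauer--Kolev--Mehlhorn (Lemma~3.3 there). So there is no in-paper argument to compare against.

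Your approach is essentially the one used in those references: write $q$ as the solution of a square system, expand numerator and denominator via Cauchy--Binet, use the matrix--tree sign consistency to turn both into sums of nonnegative terms, and compare termwise. The alternative you sketch at the end---writing $q$ as a convex combination of basic feasible solutions and bounding each by Cramer on an $\rho\times\rho$ subsystem---is actually closer to how the cited Lemma~3.3 is phrased, and is slightly cleaner because it sidesteps the explicit sign-alignment argument.

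Your caveat about integrality is well taken and worth flagging: the clean bound $D\onenorm{b}$ (as opposed to $D\onenorm{b}$ divided by the minimum nonzero absolute subdeterminant) does rely on $A$ having integer entries, so that every nonzero $\rho\times\rho$ minor has absolute value at least~$1$. The paper's model section allows $A$ to be an arbitrary real matrix, but the cited sources state the bound for integer $A$, and that hypothesis is tacitly in force here. With that understood, your plan is correct.
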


\section{Existence of a Solution}\label{Existence}

We prove Theorem~\ref{thm: Existence}. The right-hand side~\lref{eq:GenDyn} is locally Lipschitz-continuous in $x$.  The function $g_e$ is locally Lipschitz by assumption, the $\qi$'s are infinitely often differentiable rational functions in the $x_e$ and hence locally Lipschitz. Furthermore, locally Lipschitz-continuous functions are closed under additions and multiplications.
Thus $x(t)$ is defined and unique for $t \in [0,t_0)$ for some $t_0$.

Since $g_e$ is non-negative, we have $\dot{x}_e \ge -x$ and thus $x_e \ge x_e(0) e^{-t}$. Hence, $x(t) > 0$ for all $t$. By assumption $\iof[i]{b} \in \Im A$ for all $i$, and hence whenever $x(t) > 0$, we have solutions $\qi$ with $\supp(\qi) \subseteq \supp(x)$. 

In  Section~\ref{Lyapunov Function}, we will show that $\L$ is a Lyapunov function for the dynamics~\lref{eq:GenDyn}. Thus
\[  c^Tx \le \L(x) \le \L(x(0)) \]
and hence $x$ stays in a bounded domain.

It now follows from general results about the solutions of ordinary differential equations~\cite[Corollary 3.2]{Hartman} that $t_0 = \infty$.

\section{Fixed Points}\label{Fixed Points}

A point $x$ is a fixed point iff $\dot{x} = 0$. We use $\F_g$ for the set of fixed points of~\lref{eq:GenDyn}.

   \begin{lemma}[The fixed points of the generalized Physarum dynamics]\label{lem:GenPhyDin}
   	 $x \in \F_g$ iff for all
     $e$ either $x_e = 0$ or $\twonorm{\row[e]{\Lambda}}= 1$. The latter condition is equivalent to $x_e = \twonorm{Q_e}$ or $\twonorm{\row[e]{(A^T)} P}= c_e$. For $x \in \F_g$, $\C(x) = \E(x)$. 
   \end{lemma}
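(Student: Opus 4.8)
The plan is to argue directly from the definition of a fixed point together with the hypotheses on $g_e$, and then push through the short algebra relating $\Lambda$, $Q$, and $P$. Since the right-hand side of~\lref{eq:GenDyn} is only defined on $\Omega$, every point of $\F_g$ lies in $\Omega$, so a matrix $P$ with $L(x)P = B$ exists and with it $\Lambda = C^{-1}A^T P$ and $Q = X\Lambda$. By definition, $x \in \F_g$ means $x_e\bigl(g_e(\twonorm{\row[e]{\Lambda}}) - 1\bigr) = 0$ for every edge $e$, i.e.\ $x_e = 0$ or $g_e(\twonorm{\row[e]{\Lambda}}) = 1$. Because $g_e$ is increasing with $g_e(1) = 1$, the only solution of $g_e(z) = 1$ is $z = 1$, so the second alternative is exactly $\twonorm{\row[e]{\Lambda}} = 1$. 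This gives the first characterization.

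For the equivalent reformulations, fix an edge $e$ (with $x_e \neq 0$ where needed). Since $c_e > 0$ and $\row[e]{\Lambda} = \row[e]{(A^T)}P / c_e$, we have $\twonorm{\row[e]{\Lambda}} = \twonorm{\row[e]{(A^T)}P}/c_e$, so $\twonorm{\row[e]{\Lambda}} = 1$ is equivalent to $\twonorm{\row[e]{(A^T)}P} = c_e$. Since $\row[e]{Q} = x_e\,\row[e]{\Lambda}$, we have $\twonorm{\row[e]{Q}} = x_e\twonorm{\row[e]{\Lambda}}$; when $x_e \neq 0$ this equals $x_e$ precisely when $\twonorm{\row[e]{\Lambda}} = 1$, and when $x_e = 0$ both $\twonorm{\row[e]{Q}}$ and $x_e$ vanish. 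Hence ``$x_e = 0$ or $\twonorm{\row[e]{\Lambda}} = 1$'' for all $e$ is the same as $\twonorm{\row[e]{Q}} = x_e$ for all $e$.

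For the cost--energy identity I would evaluate $\E(x)$ edge by edge. By the energy lemma of Section~\ref{Preliminaries}, summed over the $k$ demands, $\E(x) = \sum_i E_x(\qi) = \sum_e (c_e/x_e)\twonorm{\row[e]{Q}}^2$, with the convention $0^2/0 = 0$ on edges where $x_e = 0$. At a fixed point $\twonorm{\row[e]{Q}} = x_e$ for every $e$, so each summand equals $(c_e/x_e)x_e^2 = c_e x_e$ (and $0$ when $x_e = 0$), and therefore $\E(x) = \sum_e c_e x_e = \C(x)$; combined with $\E(x) = \Tr(B^T P)$ this gives the displayed chain of equalities.

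I do not expect a genuine obstacle: the proof is essentially definitional, resting on the fact that $g_e$ is increasing and normalized at $1$. The only points needing care are (i) recording that a fixed point lies in the domain $\Omega$, so that $P$, $\Lambda$, and $Q$ are meaningful, and (ii) the bookkeeping on edges with $x_e = 0$, where $\dot x_e = 0$ holds automatically and the convention $0^2/0 = 0$ must be used consistently in the edgewise energy sum.
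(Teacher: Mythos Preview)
Your proposal is correct and follows essentially the same route as the paper's proof: both read off the alternative $x_e=0$ or $g_e(\twonorm{\row[e]{\Lambda}})=1$ from the dynamics, use monotonicity of $g_e$ to reduce to $\twonorm{\row[e]{\Lambda}}=1$, rescale by $c_e$ and $x_e$ to obtain the equivalent formulations, and plug $\twonorm{\row[e]{Q}}=x_e$ into the edgewise energy sum to get $\E(x)=\C(x)$. Your version is slightly more careful about the domain $\Omega$ and the $x_e=0$ edges, but there is no substantive difference.
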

   \begin{proof} We have $\dot{x} = 0$ iff we have $x_e = 0$ or $g_e(\twonorm{\Lambda_e}) = 1$ for all $e$. Since $g_e$ is increasing and $g_e(1) = 1$, the latter condition is tantamount to $\twonorm{\Lambda_e} = 1$ which expands to $\sum_i (\row[e]{(A^T)} \col[i]{P})^2 = c_e^2$. Multiplying both sides by $(x_e/c_e)^2$ yields 
$x_e^2 = \sum_i (\entry[e,i]{Q})^2$. 

For $x \in \F_g$, we have 
     \[ \E(x) = \sum_{e} \sum_i \frac{c_e}{x_e} (\entry[e,i]{Q})^2 = \sum_{e} \frac{c_e}{x_e} \cdot x_e^2 = \sum_e c_e x_e=  \C(x).\]
     \end{proof}

\section{Lyapunov Function}\label{Lyapunov Function}
Let 
\[
\L(x )=\frac{1}{2}\left(c^{T}x +\sum_{i=1}^{k} (\iof[i]{b})^{T} \iof[i]{p} \right).
\]
We will show that $\L$ is a Lyapunov function for the dynamics~\lref{eq:GenDyn}. The function $\L$ was introduced in~\cite{Facca-Daneri-Cardin-Putti}. For $k = 1$, \cite{Facca-Cardin-Putti} shows that $\L$ is a Lyapunov function for the one-norm dynamics and~\cite{Karrenbauer-Kolev-Mehlhorn:NonUniformPhysarum} shows that this holds true also for the generalized Physarum dynamics. The calculations below generalize the calculations in these papers. They are similar to the calculations in~\cite[Lemma 2.6]{B19}.

\begin{lemma}[Gradient of $\L$]\label{Gradient of phi} For all $e \in E$,
  \begin{equation}
    \frac{\partial}{\partial x_{e}}\L(x) = \frac{c_{e}}{2}( 1- \twonorm{\row[e]{\Lambda}}^2).\label{eq:gradPhi}
  \end{equation} \end{lemma}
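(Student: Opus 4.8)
The statement to prove is the gradient formula
\[
  \frac{\partial}{\partial x_{e}}\L(x) = \frac{c_{e}}{2}\left(1- \twonorm{\row[e]{\Lambda}}^2\right),
\]
where $\L(x) = \tfrac{1}{2}\bigl(c^{T}x + \sum_{i=1}^{k}(\iof[i]{b})^{T}\iof[i]{p}\bigr)$. The term $\tfrac{1}{2}c^{T}x$ contributes $\tfrac{1}{2}c_{e}$ trivially, so the whole problem reduces to showing that $\frac{\partial}{\partial x_{e}}\sum_{i}(\iof[i]{b})^{T}\iof[i]{p} = -c_{e}\twonorm{\row[e]{\Lambda}}^{2} = -c_{e}\sum_{i}\entry[e,i]{\Lambda}^{2}$. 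Since the sum over $i$ is finite, I would prove this for a single demand $b$ with potentials $p$ satisfying $L(x)p = b$, i.e.\ show $\frac{\partial}{\partial x_{e}}\,b^{T}p = -c_{e}(\row[e]{(A^T)}p)^{2}/c_{e}^{2}\cdot c_{e}$, which matches $-c_{e}\entry[e,i]{\Lambda}^{2}$ after recalling $\entry[e,i]{\Lambda} = (\row[e]{(A^T)}p)/c_{e}$.

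First I would note that $b$ does not depend on $x$, so $\frac{\partial}{\partial x_{e}}(b^{T}p) = b^{T}\frac{\partial p}{\partial x_{e}}$. The key identity is obtained by differentiating the defining relation $L(x)p = b$ with respect to $x_{e}$: since the right-hand side is constant, $\frac{\partial L(x)}{\partial x_{e}}\,p + L(x)\frac{\partial p}{\partial x_{e}} = 0$, hence $L(x)\frac{\partial p}{\partial x_{e}} = -\frac{\partial L(x)}{\partial x_{e}}\,p$. Because $L(x) = A X C^{-1}A^{T}$ and $X = \diag(x)$, we have $\frac{\partial L(x)}{\partial x_{e}} = A\,E_{e}\,C^{-1}A^{T} = \frac{1}{c_{e}}(\col[e]{A})(\col[e]{A})^{T}$, where $E_{e}$ is the matrix with a single $1$ in position $(e,e)$ and $\col[e]{A}$ is the $e$-th column of $A$ (the one indexed by edge $e$), i.e.\ $(\col[e]{A})^{T} = \row[e]{(A^{T})}$. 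Therefore $L(x)\frac{\partial p}{\partial x_{e}} = -\frac{1}{c_{e}}(\col[e]{A})\bigl(\row[e]{(A^{T})}p\bigr)$, a scalar multiple of $\col[e]{A}$.

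Now I would plug this back in: $b^{T}\frac{\partial p}{\partial x_{e}} = p^{T}L(x)\frac{\partial p}{\partial x_{e}}$ using $b = L(x)p$ and symmetry of $L(x)$. By the previous display this equals $-\frac{1}{c_{e}}\bigl(p^{T}\col[e]{A}\bigr)\bigl(\row[e]{(A^{T})}p\bigr) = -\frac{1}{c_{e}}\bigl(\row[e]{(A^{T})}p\bigr)^{2}$. Recognizing $\entry[e,i]{\Lambda} = (\row[e]{(A^{T})}\col[i]{P})/c_{e}$ from the definition $\Lambda = C^{-1}A^{T}P$, this is exactly $-c_{e}\,\entry[e,i]{\Lambda}^{2}$. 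Summing over $i = 1,\dots,k$ gives $\frac{\partial}{\partial x_{e}}\sum_{i}(\iof[i]{b})^{T}\iof[i]{p} = -c_{e}\sum_{i}\entry[e,i]{\Lambda}^{2} = -c_{e}\twonorm{\row[e]{\Lambda}}^{2}$, and combining with the $\tfrac{1}{2}c_{e}$ from the cost term yields the claim.

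The main obstacle is essentially a bookkeeping one: $p$ is only defined up to an element of $\Ker L(x) = \Ker A^{T}$ (Lemma~\ref{Basics}), so one must be slightly careful that $b^{T}p$ and its derivative are well defined and that the computation $b^{T}\frac{\partial p}{\partial x_{e}} = p^{T}L(x)\frac{\partial p}{\partial x_{e}}$ is legitimate. This is handled exactly as in Lemma~\ref{Basics}: $b^{T}p$ is independent of the choice of solution, and for a fixed smooth selection of $p(x)$ (e.g.\ obtained by grounding a fixed set of nodes so that $p$ is a rational, hence differentiable, function of $x$), the chain rule applies and the final expression depends only on the potential drops $\row[e]{(A^{T})}p$, which are themselves independent of the grounding choice. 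One should also remark that on $\Omega$ with $x > 0$ the matrix $L(x)$ restricted to the orthogonal complement of its kernel is invertible, which is what makes the implicit differentiation of $L(x)p = b$ valid; this is precisely the content of Lemma~\ref{Basics}.
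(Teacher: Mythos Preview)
Your proof is correct and follows essentially the same route as the paper: differentiate $L(x)p=b$ with respect to $x_e$, compute $\partial L(x)/\partial x_e = c_e^{-1}\col[e]{A}\row[e]{(A^T)}$, and then use $b^T\partial p/\partial x_e = p^T L(x)\,\partial p/\partial x_e$ to obtain $-c_e^{-1}(\row[e]{(A^T)}p)^2$, summing over $i$ at the end. Your additional remarks on the well-definedness of $p$ and the legitimacy of the implicit differentiation are not spelled out in the paper's proof but are a welcome clarification.
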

\begin{proof} Recall $L(x) = A X C^{-1} A^T$. Let $e \in [m]$ be arbitrary. Then $\frac{\partial}{\partial x_e} L(x) = \frac{1}{c_e} \col[e]{A} \row[e]{(A^T)}$. From $L(x) p = b$ and $\frac{\partial}{\partial x_e} b = 0$,  we obtain
\[  0 = \frac{\partial}{\partial x_e} L(x) p =  \frac{\partial L(x)}{\partial x_e} p + L(x) \frac{\partial p}{\partial x_e}\]
and thus
\[   L(x) \frac{\partial p}{\partial x_e} = - \frac{1}{c_e} \col[e]{A} \row[e]{(A^T)} p.\]
Hence, we have
\[ \frac{\partial}{\partial x_{e}}b^{T}p=b^{T}\frac{\partial p}{\partial x_{e}}=p^{T}L(x)\frac{\partial p}{\partial x_{e}}=-\frac{1}{c_{e}}p^{T} \col[e]{A} \row[e]{(A^T)}p=-c_{e}\left(\frac{\row[e]{(A^T)}p}{c_{e}}\right)^{2},\]
and more generally, 
\[
\frac{\partial}{\partial x_{e}}\sum_{i}(\iof[i]{b})^{T}\iof[i]{p}=-c_{e}\sum_{i}\left(\frac{\row[e]{(A^T)}\iof[i]{p}}{c_{e}}\right)^{2}=-c_{e}\twonorm{\row[e]{\Lambda}}^{2}.
\]
The claim follows. 
\end{proof}

\begin{theorem}\label{lem:GenEqSet} 
	The function $\L: \Omega \mapsto\mathbb{R}$
	is a Lyapunov function for the dynamics (\ref{eq:GenDyn}), i.e., $\L(x) \ge 0$ for all $x \in \Omega$, and $\frac{d}{dt} \L(x(t)) \le 0$ for all $t$. Let
        \[      \V = \set{x \in \Omega}{\langle \nabla \L(x), \dot{x} \rangle = 0}. \]
        Then $\V = \mathcal{F}_g$.
\end{theorem}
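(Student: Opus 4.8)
The plan is to verify the two defining properties of a Lyapunov function and then identify $\V$ with $\F_g$. For nonnegativity of $\L$ on $\Omega$, note that $\C(x) = c^Tx \ge 0$ since $c,x \ge 0$, and $\E(x) = \sum_i (\iof[i]{b})^T\iof[i]{p} = \sum_i (\iof[i]{p})^T L(x) \iof[i]{p} \ge 0$ because $L(x) = A X C^{-1} A^T$ is positive semidefinite (it factors as $(D^{1/2}A^T)^T(D^{1/2}A^T)$ with $D^{1/2} = \diag(\sqrt{x_e/c_e})$, as in Lemma~\ref{Basics}). Hence $\L(x) = \tfrac12(\C(x)+\E(x)) \ge 0$.

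For the monotonicity, I would compute $\tfrac{d}{dt}\L(x(t)) = \langle \nabla\L(x), \dot x\rangle = \sum_e \tfrac{\partial}{\partial x_e}\L(x)\,\dot x_e$ using the gradient formula from Lemma~\ref{Gradient of phi}, namely $\tfrac{\partial}{\partial x_e}\L(x) = \tfrac{c_e}{2}(1 - \twonorm{\row[e]{\Lambda}}^2)$, together with the dynamics $\dot x_e = x_e(g_e(\twonorm{\Lambda_e}) - 1)$. This gives
\[
\frac{d}{dt}\L(x(t)) = \sum_e \frac{c_e}{2}\,x_e\,(1 - \twonorm{\Lambda_e}^2)\,(g_e(\twonorm{\Lambda_e}) - 1).
\]
Write $z_e = \twonorm{\Lambda_e} \ge 0$. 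Then $1 - z_e^2 = (1 - z_e)(1 + z_e)$, and since $g_e$ is increasing with $g_e(1) = 1$, the factor $g_e(z_e) - 1$ has the same sign as $z_e - 1$, i.e.\ the opposite sign of $1 - z_e$. Because $c_e, x_e, (1+z_e) \ge 0$, every summand is $\le 0$, so $\tfrac{d}{dt}\L(x(t)) \le 0$. This is the crux of the Lyapunov property; it is short once the gradient lemma is in hand.

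Finally, for $\V = \F_g$: from the displayed sum, $\langle \nabla\L(x), \dot x\rangle = 0$ iff each summand vanishes, i.e.\ for every $e$ one of $x_e = 0$, $1 + z_e = 0$ (impossible since $z_e \ge 0$), $1 - z_e = 0$, or $g_e(z_e) - 1 = 0$ holds. Since $g_e$ is increasing with $g_e(1) = 1$, the condition $g_e(z_e) = 1$ is equivalent to $z_e = 1$; so vanishing of the $e$-th term is equivalent to $x_e = 0$ or $\twonorm{\Lambda_e} = 1$. By Lemma~\ref{lem:GenPhyDin} this is exactly the condition for $x \in \F_g$. Hence $\V = \F_g$. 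The one genuinely delicate point is that all of this is carried out on $\Omega$ (where $P$ exists so that $\Lambda$, $Q$ are well-defined) and that the solution $x(t)$ from Theorem~\ref{thm: Existence} stays in $\Omega$ with $x(t) > 0$; I would simply invoke that existence result, noting that the argument there already used this Lyapunov bound, so there is no circularity provided the Lyapunov inequality is established pointwise in $\Omega$ first, independently of global existence.
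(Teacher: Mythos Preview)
Your proposal is correct and follows essentially the same approach as the paper: both compute $\langle \nabla \L, \dot x\rangle$ via Lemma~\ref{Gradient of phi}, observe that $(1-\twonorm{\Lambda_e}^2)$ and $(g_e(\twonorm{\Lambda_e})-1)$ have opposite sign, and identify the vanishing condition with the fixed-point characterization. Your explicit factorization $1-z_e^2 = (1-z_e)(1+z_e)$, the spelled-out positive-semidefiniteness argument for $\L \ge 0$, and the remark about non-circularity with Theorem~\ref{thm: Existence} are welcome clarifications the paper leaves implicit.
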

\begin{proof} $\L(x) \ge 0$ for all $x \in \Omega$ is obvious. 

  Since $\frac{d}{dt} \L(x(t)) = \langle \nabla \L(x), \dot{x} \rangle$, we obtain
  \[ \frac{d}{dt} \L(x(t))  = \sum_e \frac{c_e}{2} (1 - \twonorm{\row[e]{\Lambda}}^2)\cdot x_e (g_e(\twonorm{\row[e]{\Lambda}}) - 1) \le 0,\]
where the inequality holds since $g_e(\twonorm{\row[e]{\Lambda}}) - 1$ and $\twonorm{\row[e]{\Lambda}} - 1$ have the same sign, as $g_e$ is a non-negative and increasing function with $g_e(1) = 1$.

We have equality if and only if for all $e$ either $x_e = 0$ or $\row[e]{\Lambda} = 1$. Thus $x \in \V$ if and only if $x \in \mathcal{F}_g$. 
\end{proof}






      \section{Further Properties of the Lyapunov Minimum}\label{Further Properties}

We give two alternative characterizations for the minimum of the Lyapunov function. This extends~\cite[Proposition 2]{Facca-Cardin-Putti} from $k = 1$ to arbitrary $k$. 
                      
\begin{theorem}\label{MQ = MP = ML} The following quantities $\MQ$, $\MP$, and $\ML$ are equal. 
\begin{align}
  \MQ &= \min_{Q \in \R^{m \times k}} \set{\sum_e c_e  \normtwo{Q_e} } { A Q = B }  \label{minQ},\\
  \MP &= \max_{P \in \R^{n \times k}}\set{ \Tr[B^TP] }{ \normtwo {\row[e]{(A^T)} P} \le c_e \text{ for all $e$}}  \label{maxP},\\
  \ML &= \min_{x \in \R^m_{\ge 0}}  \L(x). \label{minL}
  \end{align}
  Moreover, there are optimizers $Q^*$, $P^*$ and $x^*$ such that
  \begin{align*}  x^*_e &= \normtwo{\row[e]{Q^*}} \quad\text{ for all $e$},\\
    L(x^*) P^* &= B ,\\
    Q^* &= X^* C^{-1} A^T P^*.
  \end{align*}
\end{theorem}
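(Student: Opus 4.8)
The plan is to recognize $\MQ$ and $\MP$ as a primal--dual pair of a second-order cone program and to trap $\ML$ between them by a single arithmetic--geometric-mean inequality; the optimizers $x^{\ast},P^{\ast},Q^{\ast}$ will then be read off from complementary slackness. \emph{First I would establish $\MQ=\MP$ by conic duality.} Dualizing the linear constraint $AQ=B$ in the definition of $\MQ$ with a multiplier $P\in\R^{n\times k}$ gives the Lagrangian $\Tr(B^{T}P)+\sum_{e}\bigl(c_{e}\twonorm{\row[e]{Q}}-\langle\row[e]{(A^T)}P,\row[e]{Q}\rangle\bigr)$, whose infimum over $Q$ decouples across edges. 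For each $e$ the quantity $\inf_{q\in\R^{k}}\bigl(c_{e}\twonorm{q}-\langle u,q\rangle\bigr)$ equals $0$ if $\twonorm{u}\le c_{e}$ and $-\infty$ otherwise (the Fenchel conjugate of $c_{e}\twonorm{\cdot}$ is the indicator of the radius-$c_{e}$ dual ball, and the Euclidean norm is self-dual), so the Lagrange dual of $\MQ$ is precisely the program defining $\MP$. Weak duality gives $\MP\le\MQ$, and since $\MQ$ minimizes the finite, everywhere-defined convex function $Q\mapsto\sum_{e}c_{e}\twonorm{\row[e]{Q}}$ over a nonempty affine set, strong duality holds and both problems attain their optima (the standard no-gap result for convex programs with affine equality constraints, cf.~\cite{Boyd-Vandenberghe}). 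Fixing optimizers $Q^{\ast}$ and $P^{\ast}$, stationarity reads $\row[e]{(A^T)}P^{\ast}\in c_{e}\,\partial\twonorm{\cdot}\bigl(\row[e]{Q^{\ast}}\bigr)$ for every $e$: i.e.\ $\row[e]{(A^T)}P^{\ast}=c_{e}\,\row[e]{Q^{\ast}}/\twonorm{\row[e]{Q^{\ast}}}$ when $\row[e]{Q^{\ast}}\neq0$ and $\twonorm{\row[e]{(A^T)}P^{\ast}}\le c_{e}$ when $\row[e]{Q^{\ast}}=0$; in particular $\twonorm{\row[e]{(A^T)}P^{\ast}}\le c_{e}$ for all $e$, so $P^{\ast}$ is $\MP$-feasible, hence $\MP$-optimal.

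\emph{Next I would build $x^{\ast}$ from the dual solution.} Put $x^{\ast}_{e}:=\twonorm{\row[e]{Q^{\ast}}}$ for all $e$. The stationarity relation says precisely that $\row[e]{Q^{\ast}}=(x^{\ast}_{e}/c_{e})\,\row[e]{(A^T)}P^{\ast}$, both when $x^{\ast}_{e}>0$ and (trivially) when $x^{\ast}_{e}=0$; hence $Q^{\ast}=X^{\ast}C^{-1}A^{T}P^{\ast}$, and left-multiplying by $A$ together with $AQ^{\ast}=B$ yields $L(x^{\ast})P^{\ast}=B$. Thus $x^{\ast}\ge0$ lies in $\Omega$, the pair $P^{\ast},Q^{\ast}$ consists of electrical potentials and flows for the capacity vector $x^{\ast}$, and the three displayed identities of the theorem already hold for $x^{\ast},P^{\ast},Q^{\ast}$.

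\emph{Then I would close the loop with an AM--GM estimate.} For any $x\in\Omega$, write $Q=Q(x)=XC^{-1}A^{T}P$ for its electrical flows; since $\sum_{i}(\iof[i]{b})^{T}\iof[i]{p}=\sum_{i}E_{x}(\iof[i]{q})=\sum_{e}(c_{e}/x_{e})\twonorm{\row[e]{Q}}^{2}$ (convention $0^{2}/0=0$), we obtain $\L(x)=\tfrac12\sum_{e}\bigl(c_{e}x_{e}+(c_{e}/x_{e})\twonorm{\row[e]{Q}}^{2}\bigr)\ge\sum_{e}c_{e}\twonorm{\row[e]{Q}}\ge\MQ$, the first inequality being termwise AM--GM (tight, by Lemma~\ref{lem:GenPhyDin}, exactly on $\F_{g}$) and the second holding because $AQ=B$ makes $Q$ feasible for $\MQ$. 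Hence $\ML\ge\MQ$. Evaluating at $x^{\ast}$ gives $\C(x^{\ast})=\sum_{e}c_{e}x^{\ast}_{e}=\sum_{e}c_{e}\twonorm{\row[e]{Q^{\ast}}}=\MQ$ since $Q^{\ast}$ is $\MQ$-optimal, and $\E(x^{\ast})=\Tr(B^{T}P^{\ast})=\MP=\MQ$, so $\L(x^{\ast})=\MQ$. Therefore $\ML=\MQ=\MP$, the minimum defining $\ML$ is attained at $x^{\ast}$, and $x^{\ast},P^{\ast},Q^{\ast}$ are the required optimizers.

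\emph{The delicate part is Step~1}: invoking strong duality and attainment for the conic program $\MQ$ and extracting its KKT conditions. This is routine for second-order cone / convex programs with affine equality constraints, but it must be cited with care, and one must keep in mind that $x^{\ast}$ may have zero coordinates, so that $L(x^{\ast})$ is singular and $\Omega$ is not closed. This is exactly why I would read the optimality relations off complementary slackness for the conic program rather than off first-order conditions for $\min_{x\ge0}\L(x)$: at a coordinate $e$ with $x^{\ast}_{e}=0$ the vector $\row[e]{(A^T)}P^{\ast}$ is not determined by $L(x^{\ast})P^{\ast}=B$ alone, whereas Step~1 pins it down and forces $\twonorm{\row[e]{(A^T)}P^{\ast}}\le c_{e}$ there. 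A duality-free proof of $\ML\le\MQ$ is also available---take any $\MQ$-optimal $Q^{\circ}$, set $x^{\circ}_{e}=\twonorm{\row[e]{Q^{\circ}}}$, and bound $\E(x^{\circ})\le\C(x^{\circ})=\MQ$ by routing demand $i$ along the $i$th column of $Q^{\circ}$---but recovering an $\MP$-optimal potential tied to $x^{\ast}$ is cleanest through the dual.
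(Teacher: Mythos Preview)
Your proof is correct. The overall skeleton matches the paper's: both extract KKT conditions from the $\MQ$ problem to manufacture a potential $P^{\ast}$ and capacities $x^{\ast}_{e}=\twonorm{\row[e]{Q^{\ast}}}$, and both verify the three displayed relations $L(x^{\ast})P^{\ast}=B$, $Q^{\ast}=X^{\ast}C^{-1}A^{T}P^{\ast}$ via the stationarity condition. Where you use the subdifferential of the norm directly, the paper instead introduces auxiliary variables $x_{e}\ge\twonorm{\row[e]{Q}}$ to make the objective smooth before applying KKT; the resulting optimality relations are identical.

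The genuine difference is in how the loop is closed. The paper argues the cycle $\ML\le\MQ\le\MP\le\ML$: the first two inequalities come from the KKT lemma, and the last, $\MP\le\ML$, is obtained by a \emph{second} Lagrangian dualization, this time of the $\MP$ problem, whose inner supremum over $P$ is recognized as $\L(x)$. You instead prove $\MQ=\MP$ in one shot by conic duality and then sandwich $\ML$ via the termwise AM--GM bound $c_{e}x_{e}+(c_{e}/x_{e})\twonorm{\row[e]{Q}}^{2}\ge 2c_{e}\twonorm{\row[e]{Q}}$, together with direct evaluation $\L(x^{\ast})=\MQ$. Your AM--GM step is more elementary and more transparent than the paper's second dualization (and it makes the tightness condition $x_{e}=\twonorm{\row[e]{Q}}$, i.e.\ $x\in\F_{g}$, visible at a glance), while the paper's route has the advantage of identifying $\L(x)$ itself as the Lagrange dual function of $\MP$, which is conceptually pleasing. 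Your caveat about attainment and about zero coordinates of $x^{\ast}$ is apt and handled correctly; the paper makes the same observation implicitly by passing through the reformulated smooth problem.
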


\begin{lemma}\label{lem:KKT} 
	Let $Q^*$ be a minimizer of~\lref{minQ} and let $x^*$ be defined by $x_e^*= \normtwo{\row[e]{Q^*}}$ for all $e$. Then $x^* \in \mathcal{F}_g$. Moreover, there is a potential matrix $P \in \R^{n \times k}$ such that $L(x^*) P = B$ and $Q = X^* C^{-1} A^T P$, $\sum_e c_e \normtwo{\row[e]{Q}} = \Tr[B^TP] = \L(x^*)$, and  $\normtwo{\row[e]{(A^T)} P} \le c_e$ for all $e$. The objective values of~\lref{minQ} to~\lref{minL} satisfy $\ML \le \MQ \le \MP$. 
\end{lemma}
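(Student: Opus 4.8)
The plan is to treat \lref{minQ} as a convex minimization problem and read off the whole statement from its Karush--Kuhn--Tucker conditions. The objective $Q \mapsto \sum_e c_e \normtwo{\row[e]{Q}}$ is a finite sum of continuous convex functions, and the feasible set $\{Q \in \R^{m\times k} : AQ = B\}$ is a nonempty affine subspace since each $A f = \iof[i]{b}$ is solvable; as $c > 0$, the objective is coercive along this subspace, so a minimizer $Q^*$ exists. Introduce a multiplier matrix $P \in \R^{n\times k}$ for the equality constraint and form the Lagrangian $\sum_e c_e \normtwo{\row[e]{Q}} - \Tr[P^T(AQ - B)]$. Since the constraints are affine, no Slater-type condition is needed and the KKT conditions hold at $Q^*$: stationarity in $Q$ says that $A^TP$ is a subgradient of $Q \mapsto \sum_e c_e\normtwo{\row[e]{Q}}$ at $Q^*$, which, since the summands depend on disjoint rows, amounts to $\row[e]{(A^T)}P$ being a subgradient of $c_e\normtwo{\cdot}$ at $\row[e]{Q^*}$ for every $e \in [m]$.

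First I would unpack these subgradient conditions. On a row with $\row[e]{Q^*} \neq 0$ the subgradient of $c_e\normtwo{\cdot}$ is unique, so $\row[e]{(A^T)}P = c_e\,\row[e]{Q^*}/\normtwo{\row[e]{Q^*}}$, whence $\normtwo{\row[e]{(A^T)}P} = c_e$ and $\row[e]{Q^*} = (\normtwo{\row[e]{Q^*}}/c_e)\,\row[e]{(A^T)}P$; on a row with $\row[e]{Q^*} = 0$ one gets only $\normtwo{\row[e]{(A^T)}P} \le c_e$. Set $x^*_e = \normtwo{\row[e]{Q^*}}$ and $\Lambda = C^{-1}A^TP$ (so $\row[e]{\Lambda} = \row[e]{(A^T)}P/c_e$). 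In both cases $\normtwo{\row[e]{\Lambda}} \le 1$, with $\normtwo{\row[e]{\Lambda}} = 1$ whenever $x^*_e > 0$, and $\row[e]{Q^*} = (x^*_e/c_e)\,\row[e]{(A^T)}P$ for every $e$ (both sides are zero when $x^*_e = 0$); equivalently $Q^* = X^*C^{-1}A^TP$. Hence $L(x^*)P = A X^* C^{-1} A^T P = A Q^* = B$, so (as also $x^* \ge 0$) we have $x^* \in \Omega$ and $P$ is a genuine potential matrix for $x^*$. By Lemma~\ref{lem:GenPhyDin}, the requirement that for every $e$ either $x^*_e = 0$ or $\normtwo{\row[e]{\Lambda}} = 1$ is exactly $x^* \in \mathcal{F}_g$, and the same lemma gives $\C(x^*) = \E(x^*)$.

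Next I would evaluate the three quantities. For each $e$ the stationarity relation yields $\langle \row[e]{Q^*},\, \row[e]{(A^T)}P\rangle = c_e x^*_e$ (both sides vanish when $x^*_e = 0$, and when $x^*_e > 0$ this is $\langle \row[e]{Q^*},\, c_e\,\row[e]{Q^*}/\normtwo{\row[e]{Q^*}}\rangle = c_e\normtwo{\row[e]{Q^*}}$). Summing over $e$ and rewriting the left-hand side as a trace, $\Tr[B^TP] = \Tr[(AQ^*)^TP] = \sum_e \langle \row[e]{Q^*},\, \row[e]{(A^T)}P\rangle = \sum_e c_e x^*_e = \C(x^*)$, while also $\sum_e c_e\normtwo{\row[e]{Q^*}} = \sum_e c_e x^*_e = \C(x^*)$. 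Since $x^* \in \mathcal{F}_g$ we have $\L(x^*) = \tfrac{1}{2}(\C(x^*) + \E(x^*)) = \C(x^*)$, so $\sum_e c_e\normtwo{\row[e]{Q^*}} = \Tr[B^TP] = \L(x^*)$, which is the claimed chain of equalities. Finally $x^* \ge 0$ is feasible for \lref{minL}, so $\ML \le \L(x^*) = \MQ$; and $P$ is feasible for \lref{maxP} because $\normtwo{\row[e]{(A^T)}P} \le c_e$ for all $e$, so $\MP \ge \Tr[B^TP] = \MQ$. Hence $\ML \le \MQ \le \MP$.

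The trace and norm manipulations are routine; the one subtle point is the KKT step --- invoking that affine equality constraints require no constraint qualification, and handling the non-smoothness of $\normtwo{\cdot}$ at the rows where $\row[e]{Q^*} = 0$ by working with subgradients rather than gradients (together with the mild remark that a minimizer $Q^*$ exists at all). I do not expect a genuine obstacle beyond this bookkeeping.
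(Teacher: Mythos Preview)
Your proof is correct and follows the same overall KKT strategy as the paper, but with one genuine technical difference worth noting. The paper explicitly avoids the non-differentiability of $\normtwo{\cdot}$ at the origin by reformulating~\lref{minQ} as a smooth problem with auxiliary variables $x_e$ and constraints $x_e^2 \ge \normtwo{\row[e]{Q}}^2$, $x_e \ge 0$, and then applies the classical (smooth) KKT conditions with multipliers $P$, $\alpha$, $\beta$. You instead work directly with the non-smooth objective and invoke the subgradient form of KKT, using that the subdifferential of $c_e\normtwo{\cdot}$ at the origin is the ball of radius $c_e$. Both routes arrive at exactly the same relations $\row[e]{(A^T)}P = c_e\,\row[e]{Q^*}/\normtwo{\row[e]{Q^*}}$ on active rows and $\normtwo{\row[e]{(A^T)}P} \le c_e$ on zero rows, and the remainder of the argument is identical. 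Your approach is more direct and avoids the extra multipliers $\alpha_e$, $\beta_e$; the paper's reformulation buys a proof that needs only smooth first-order conditions, at the price of a few more lines of bookkeeping.
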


\begin{proof} We start by slightly reformulating the minimization problem~\lref{minQ}. This is necessary since the function $\row[e]{Q} \mapsto \normtwo{\row[e]{Q}}$ is not differentiable for $\row[e]{Q} = 0$ and hence the KKT-conditions cannot be applied. We formulate equivalently:
\[ 
  \min \sum_e c_e x_e \text{ subject to } AQ = B,\ x_e^2 \ge \normtwo{\row[e]{Q}}^2,\ x_e \ge 0 \text{ for all $e$},
\]
with variables $Q \in \R^{m \times k}$ and $x \in \R^m$. 
%
%
Let $Q^*$ and $x^*$ be an optimal solution. Then clearly $x^*_e = \normtwo{\row[e]{Q^*}}$ for all $e$. 
Using the Lagrange multipliers $P \in \R^{n\times k}$ for the equations $AQ = B$, and $\alpha \in \R_{\ge 0}^m$ and $\beta \in \R_{\ge 0}^m$ for the inequalities, the KKT conditions~\cite[Subsection 5.5]{Boyd-Vandenberghe} become
\begin{align} c_e - 2 \alpha_e x^*_e - \beta_e &= 0 \quad\text{for all $e$}\label{condition 1} ,\\
  P^T  \row[e]{(A^T)} + 2 \alpha_e \row[e]{Q^*} &= 0 \quad\text{for all $e$}\label{condition 2} ,\\
  \alpha_e ((x_e^*)^2 - \normtwo{\row[e]{Q^*}}^2 )&= 0 \quad\text{for all $e$}\label{condition 4},\\
  \beta_e x_e^* & = 0 \quad \text{for all $e$}.  \label{condition 5}
\end{align}
Here the first two conditions state that at the optimum, the gradient of the objective with respect to the variables $x_e$ and $\entry[e,i]{Q}$ must be linear combinations of the gradients of the active constraints and the last two conditions are complementary slackness (= a Lagrange multiplier can only be non-zero if the constraint is tight). We also have the feasibility constraints
\begin{align}
        A Q^*  &= B \label{condition 3},\\
  x_e^*  & \ge 0 \text{ and } x_e^* \ge \normtwo{\row[e]{Q}} \quad\text{for all $e$.}
\end{align}

Separating the two terms in~\lref{condition 2}, squaring and summing over $i$, and using~\lref{condition 2} and~\lref{condition 1}, we obtain 
\[   \normtwo{\row[e]{(A^T)} P}^2 =  \sum_i  ((\iof[i]{p})^T \entry[e,i]{A})^2 = 4 \alpha_e^2 \normtwo{Q^*_e}^2 = 4\alpha_e^2(x^*_e)^2 = (c_e - \beta_e)^2 \le c_e^2,\]
where the last inequality uses $\beta_e = 0$ if $x^*_e > 0$ by~\lref{condition 5} and $\beta_e = c_e$ if $x^*_e = 0$ by~\lref{condition 1}.

If $\row[e]{Q^*} \not= 0$, then $x^*_e \not= 0$ and hence $\beta_e = 0$ and $c_e = 2 \alpha_e x^*_e$ or $2 \alpha_e = c_e/x^*_e$. In particular, $\alpha_e \not= 0$ and hence~\lref{condition 2} implies
\begin{equation}\label{reproduces Q}           \row[e]{Q^*} = \frac{1}{2 \alpha_e} \sum_v \row[v]{P} A_{v,e} = \frac{x^*_e}{c_e} \row[e]{A}^T P. \end{equation}
This equation also holds if $\row[e]{Q^*} = 0$ and hence $x^*_e = 0$. 
Multiplying by $\row[e]{(A^T)}$ from the left and summing over $e$ yields
\begin{equation}\label{P is potential}
  B =  A Q^*= A X C^{-1} A^T P.\end{equation}
Thus $\col[i]{P}$ is a potential for the $i$-th problem with respect to $x^*$ and, by~\lref{reproduces Q} $\col[i]{(Q^*)}$ is the corresponding electrical flow. Thus $x^* \in \F_g$ by Lemma~\ref{lem:FixPoints}. Moreover, 
\begin{align*}
\sum_i (\col[i]{P})^T \iof[i]{b} &= \sum_i (\col[i]{P})^T A \col[i]{(Q^*)}\\
& = \sum_{i,v,e} P_{v,i} A_{v,e} Q^*_{e,i} \\
&= \sum_{i,v,e,\ \row[e]{Q} \not= 0} \entry[v,i]{P} \entry[v,e]{A} \entry[e,i]{Q^*} \\
&= \sum_{i,e,\ \row[e]{Q} \not= 0} c_e \frac{\entry[e,i]{Q^*} \entry[e,i]{Q^*}}{\normtwo{\row[e]{Q^*}}} \\
&= \sum_e c_e \normtwo{\row[e]{Q^*}}.
\end{align*}
Here the fourth equality comes from~\lref{reproduces Q} and $x^*_e \not= 0$ if $\row[e]{Q^*}\not= 0$; note that $$\sum_v A_{ve}P_{vi} = \row[e]{(A^T)} P_i = \frac{c_e}{x_e^*} \entry[e,i]{Q^*} = c_e \frac{\entry[e,i]{Q^*}}{\normtwo{\row[e]{Q^*}}}.$$
We conclude that $P$ is a feasible solution to~\lref{maxP}. Thus $\MP \ge \MQ$. 

  Since $x^* \in \F_g$, $\L(x^*) = c^T x^* = \Tr[P^T L(x^*) P]$. Also,
  $x^*_e = \normtwo{\row[e]{Q^*}}$ by definition of $x^*$.  Thus
  \[   \sum_e c_e \normtwo{\row[e]{Q^*}} = c^T x^* = \L(x^*)\]
  and hence $\ML \le \MQ$. 
\end{proof}

\begin{lemma} $\MP \le \ML$. 

\end{lemma}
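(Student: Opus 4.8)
The plan is to establish weak duality directly: for \emph{every} $P$ feasible for~\lref{maxP} and \emph{every} $x \in \R^m_{\ge 0}$ I will show $\Tr[B^TP] \le \L(x)$. Taking the maximum over $P$ and the minimum over $x$ then gives $\MP \le \ML$, which together with the chain $\ML \le \MQ \le \MP$ from Lemma~\ref{lem:KKT} forces $\MQ = \MP = \ML$; the optimizers $Q^*,P^*,x^*$ and the three stated relations among them are exactly what Lemma~\ref{lem:KKT} already produced, so Theorem~\ref{MQ = MP = ML} follows.

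First dispose of the degenerate case: if $x \notin \Omega$, then some demand $\iof[i]{b}$ admits no feasible flow supported on $\supp x$, so $E_x(\qi) = \infty$, hence $\E(x) = \infty$ and $\L(x) = \infty$, and the inequality is trivial. So assume $x \in \Omega$, let $P^x$ satisfy $L(x)P^x = B$, and let $Q = X C^{-1} A^T P^x$ be the induced electrical flows, so that $AQ = B$. Expanding the trace,
\[
  \Tr[B^TP] = \Tr[(AQ)^TP] = \Tr[Q^T (A^TP)] = \sum_e \bigl\langle \row[e]{Q},\ \row[e]{(A^T)}P \bigr\rangle ,
\]
where $\row[e]{(A^T)}P$ is row $e$ of $A^TP \in \R^{m\times k}$. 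Now apply two elementary edgewise inequalities. By Cauchy--Schwarz over the commodity index and the feasibility constraint $\normtwo{\row[e]{(A^T)}P} \le c_e$,
\[
  \bigl\langle \row[e]{Q},\ \row[e]{(A^T)}P \bigr\rangle \le \normtwo{\row[e]{Q}}\,\normtwo{\row[e]{(A^T)}P} \le c_e \normtwo{\row[e]{Q}},
\]
so $\Tr[B^TP] \le \sum_e c_e \normtwo{\row[e]{Q}}$. Next, by AM--GM applied to $c_e x_e$ and $\sum_i \tfrac{c_e}{x_e}\entry[e,i]{Q}^2$,
\[
  2\, c_e \normtwo{\row[e]{Q}} = 2\sqrt{ (c_e x_e)\cdot\Bigl(\textstyle\sum_i \tfrac{c_e}{x_e}\entry[e,i]{Q}^2\Bigr)} \le c_e x_e + \sum_i \tfrac{c_e}{x_e}\entry[e,i]{Q}^2 ,
\]
where for an edge with $x_e = 0$ both sides vanish (there $\entry[e,i]{Q}=0$ and we use the convention $0^2/0 = 0$). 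Summing over $e$ and using $\C(x) = c^Tx$ together with $\E(x) = \sum_{e,i}\tfrac{c_e}{x_e}\entry[e,i]{Q}^2$ yields $\sum_e c_e \normtwo{\row[e]{Q}} \le \tfrac12(\C(x)+\E(x)) = \L(x)$, which chains with the previous bound to give $\Tr[B^TP]\le\L(x)$, as required.

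I do not expect a genuine obstacle here --- the argument is a weak-duality computation --- but there is one design decision that makes it work: $Q$ must be taken to be the electrical flow \emph{induced by $x$} (not an arbitrary feasible flow with $AQ=B$), so that the AM--GM step lands exactly on $\tfrac12(\C(x)+\E(x))$ rather than on a weaker bound. The only other point requiring care is the treatment of zero-capacity edges, handled by the $0^2/0=0$ convention; working with $Q$ rather than $\Lambda$ avoids having to define $\Lambda_e$ when $x_e=0$.
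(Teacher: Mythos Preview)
Your proof is correct and complete. It differs from the paper's argument, which is worth noting.

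The paper proceeds by Lagrangian duality: it rewrites the constraint $\normtwo{\row[e]{(A^T)}P}\le c_e$ in squared form, introduces nonnegative multipliers $x_e$, and observes that weak duality gives
\[
  \MP \;\le\; \inf_{x\ge 0}\ \sup_P\ \Bigl[\Tr(B^TP) - \tfrac12\sum_e x_e c_e\bigl(\normtwo{\tfrac{1}{c_e}\row[e]{(A^T)}P}^2-1\bigr)\Bigr].
\]
It then computes the inner supremum by differentiation, finding that the maximizing $P$ is precisely the electrical potential $P^x$ with $L(x)P^x=B$, and that the value of the supremum is $\L(x)$.

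You instead bound $\Tr(B^TP)$ directly for each feasible $P$ and each $x$: expand the trace through the electrical flow $Q$ induced by $x$, use Cauchy--Schwarz edge by edge to get $\Tr(B^TP)\le \sum_e c_e\normtwo{\row[e]{Q}}$, then AM--GM to reach $\L(x)$. This is more elementary---no Lagrangian, no differentiation---and your intermediate inequality $\Tr(B^TP)\le \sum_e c_e\normtwo{\row[e]{Q}}$ (which in fact holds for \emph{any} $Q$ with $AQ=B$, not only the electrical one) is a direct proof of the weak duality $\MP\le\MQ$ that the paper only obtains indirectly via the chain $\MQ\le\MP$ (Lemma~\ref{lem:KKT}) together with $\MP\le\ML\le\MQ$. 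The paper's route, on the other hand, explains \emph{why} $x$ appears as the natural dual variable and why $\L$ is the relevant dual function, a structural insight your argument does not surface. Your remark that $Q$ must be the electrical flow for the AM--GM step to land on $\L(x)$ rather than a larger quantity is exactly right.
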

\begin{proof} The constraint $\normtwo{\row[e]{(A^T)} P} \le c_e$ in~\lref{maxP} can be equivalently written as
\[ \frac{c_e}{2}\left(\normtwo{\frac{1}{c_e}\row[e]{(A^T)} P}^2 - 1\right) \le 0.\]
Then the Lagrange dual with non-negative multipliers $x_e$ is an upper bound for $\MP$, i.e, 
\[  
\MP \le \inf_{x \ge 0} \sup_{P} \sum_i (\iof[i]{b})^T \col[i]{P} - \sum_e  \frac{x_e c_e}{2}\left(\normtwo{\frac{1}{c_e} \row[e]{(A^T)} P}^2 - 1\right).
\]
The inner supremum can be reformulated as
\begin{equation}\label{inner sup} 
\sup_{P} \sum_i (\iof[i]{b})^T\col[i]{P} - \frac{1}{2}\sum_i (\col[i]{P})^{T}L(x)\col[i]{P} + \frac{1}{2} c^T x, 
\end{equation}
since $(x_e/c_e) \sum_i (\row[e]{(A^T)} \col[i]{P})^2 = \sum_i (\col[i]{P})^T \col[e]{A} (x_e/c_e) \row[e]{(A^T)} \col[i]{P}$. 
Only the first two terms in~\lref{inner sup} depend on $P$. We want to determine the maximizer\footnote{In the proof of Lemma~\ref{Basics}, we have seen that $L(x) = A D^{1/2} D^{1/2} A^T$ and hence
$b_i^T \col[i]{P} - (\col[i]{P})^T L[x] \col[i]{P} = b_i^T \col[i]{P} - \twonorm{D^{1/2} A^T \col[i]{P}}^2$. Thus the maximizer is a finite point.}$P(x)$. Taking partial derivatives 
with respect to the vectors $\col[i]{P}$ leads to the system 
\[    A X C^{-1} A\cdot \col[i]{P}(x) = \iof[i]{b} \quad \text{for all $i$,} \]
i.e. $\col[i]{P}(x)$ is a solution to $L(x) \col[i]{P}(x) = \iof[i]{b}$ for each $i$. 
Since
\[
\sum_{i}(\iof[i]{b})^{T}\col[i]{P}(x)=\Tr[B^{T}P(x)]=\Tr[P(x)^{T}L(x)P(x)]=\sum_i (\col[i]{P}(x))^{T}L(x)\col[i]{P}(x)
\]
substituting into~\lref{inner sup} yields
\[
\sup_{P}\sum_{i}(\iof[i]{b})^{T}\col[i]{P}-\frac{1}{2}\sum_i (\col[i]{P})^{T}L(x)\col[i]{P}+\frac{1}{2}c^{T}x=\frac{1}{2}\left(\Tr[B^{T}P(x)]+c^{T}x\right)=\L(x).
\]
\end{proof}

\begin{lemma} Let $x^* \in \R^m_{\ge 0}$ be a minimizer of $\L(x)$. Then $x^* \in \F_g$. Let $P$ be a solution to $L(x^*)P = B$ and let $Q = X^* C^{-1} A^T P$. Then $\sum_e c_e \normtwo{\row[e]{Q}} = \L(x^*)$ and hence $\MQ \le \ML$.

\end{lemma}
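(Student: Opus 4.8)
The plan is to read off $x^{*}\in\F_{g}$ from the first-order optimality conditions for $\min_{x\ge 0}\L(x)$ together with the gradient formula of Lemma~\ref{Gradient of phi}, and then to evaluate $\sum_{e}c_{e}\twonorm{\row[e]{Q}}$ directly, using the fixed-point characterization of Lemma~\ref{lem:GenPhyDin} and the identity $\C(x)=\E(x)$ valid for fixed points.

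First I would note $x^{*}\in\Omega$: a positive capacity vector such as $\vOnes$ has a finite $\L$-value, while $\L(x)=+\infty$ whenever some demand $\iof[i]{b}$ cannot be routed inside $\supp(x)$, so the minimizer must satisfy $B\in\Im L(x^{*})$. Put $S=\supp(x^{*})$ and $\Lambda=C^{-1}A^{T}P$. Since (as in the proof of Lemma~\ref{Basics}) $\Ker L(x)$ for $x\ge 0$ is determined by $\supp(x)$ alone, the restriction of $\L$ to the face $\{x:\,x_{e}=0\text{ for }e\notin S\}$ is finite and differentiable on the relatively open set $\R^{S}_{>0}$, and there the partial derivative in the $x_{e}$ direction with $e\in S$ is still $\frac{c_{e}}{2}(1-\twonorm{\row[e]{\Lambda}}^{2})$ by the computation in Lemma~\ref{Gradient of phi}. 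As $x^{*}$ minimizes $\L$ over $\R^{m}_{\ge 0}$, it minimizes this restriction over $\R^{S}_{>0}$, hence is a critical point there, so $\twonorm{\row[e]{\Lambda}}=1$ whenever $x^{*}_{e}>0$. Thus for each edge $e$ either $x^{*}_{e}=0$ or $\twonorm{\row[e]{\Lambda}}=1$, which by Lemma~\ref{lem:GenPhyDin} is exactly the statement $x^{*}\in\F_{g}$.

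Given $x^{*}\in\F_{g}$, Lemma~\ref{lem:GenPhyDin} gives $\C(x^{*})=\E(x^{*})$, so $\L(x^{*})=\frac{1}{2}(\C(x^{*})+\E(x^{*}))=c^{T}x^{*}$. Since $\row[e]{Q}=x^{*}_{e}\,\row[e]{\Lambda}$, we have $\twonorm{\row[e]{Q}}=x^{*}_{e}\twonorm{\row[e]{\Lambda}}$, which equals $x^{*}_{e}$ when $x^{*}_{e}>0$ (as then $\twonorm{\row[e]{\Lambda}}=1$) and equals $0=x^{*}_{e}$ otherwise; hence $\sum_{e}c_{e}\twonorm{\row[e]{Q}}=c^{T}x^{*}=\L(x^{*})$. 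Finally $AQ=AX^{*}C^{-1}A^{T}P=L(x^{*})P=B$, so $Q$ is feasible for~\lref{minQ}, and since $x^{*}$ minimizes $\L$ we have $\L(x^{*})=\ML$; therefore $\MQ\le\sum_{e}c_{e}\twonorm{\row[e]{Q}}=\ML$, as claimed.

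The main obstacle is the first step: justifying rigorously that $\frac{\partial}{\partial x_{e}}\L(x^{*})=0$ for $e\in\supp(x^{*})$ even when $x^{*}$ lies on the boundary of $\R^{m}_{\ge 0}$, which needs $\L$ to be differentiable at $x^{*}$ along the support directions. This is handled by observing that membership in $\Omega$ and the values of the potential drops on support edges depend only on $\supp(x^{*})$, so one may argue entirely within the face spanned by $S$; the remaining ingredients (the fixed-point identity $\C=\E$, the gradient formula, and feasibility of $Q$) are then routine.
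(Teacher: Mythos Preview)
Your proof is correct. The second half (computing $\sum_e c_e\twonorm{\row[e]{Q}}=c^T x^*=\L(x^*)$ via $\twonorm{\row[e]{Q}}=x^*_e\twonorm{\row[e]{\Lambda}}=x^*_e$ and then invoking feasibility $AQ=B$) matches the paper almost verbatim.

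The first half differs in flavor. The paper argues that $x^*\in\V$ because $\L$ is a Lyapunov function for the dynamics: at a global minimizer, the quantity $\langle\nabla\L(x^*),\dot{x}(x^*)\rangle$ cannot be negative, but it is always $\le 0$, so it vanishes, and then $\V=\F_g$ by Theorem~\ref{lem:GenEqSet}. You instead bypass the dynamics entirely and read off the fixed-point condition from first-order optimality: restricting to the face determined by $S=\supp(x^*)$, the point $x^*$ is interior, so $\frac{\partial}{\partial x_e}\L(x^*)=0$ for every $e\in S$, and Lemma~\ref{Gradient of phi} then forces $\twonorm{\row[e]{\Lambda}}=1$. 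Your route is slightly more self-contained (no $g_e$, no trajectories) and you are explicitly careful about differentiability on the boundary face, which the paper's one-line Lyapunov argument leaves implicit; the paper's route, on the other hand, gets the termwise conclusion for free from the sign structure already established in Theorem~\ref{lem:GenEqSet}. Both arguments are short and ultimately rest on the same gradient formula.
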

\begin{proof} Since $\L(x(t))$ is a Lyapunov function of the generalized Physarum dynamics we have $x^* \in \V$. Since $\V = \F_g$, $x^*$ is a fixed point and hence for all $e$, either $x^*_e = 0$ or $\twonorm{\row[e]{\Lambda}} = 1$. Since $x^*$ is a fixed point, we have $\L(x^*) = c^T x^* = \Tr[P^T L(x^*) P]$ and $x^*_e = \normtwo{\row[e]{Q}}$ for all $e$. Thus
  \[   \sum_e c_e \normtwo{\row[e]{Q}} = c^T x^* = \L(x^*) \]
  and hence $\MQ \le \ML$. 

  \end{proof}

\section{Convergence to the Lyapunov Minimizer}

      We show that the dynamics converges to the minimizer $x^*$ of the Lyapunov function under the assumption that the set of fixed points of the dynamics is a discrete set.

      \begin{assumption}[Discrete Set of Fixed Points] $\F_g$ is a finite set of points. For any two points in $\F_g$, the values of $\L$ are distinct.\label{Discrete Set of Fixed Points} \end{assumption}

      \begin{theorem}\label{thm:GenPhyDyn_conv_to_OPT} Let $x^* = \argmin_{x \ge 0} \L(x)$. Under the additional assumption~\ref{Discrete Set of Fixed Points}, the generalized Physarum dynamics $x(t)$ converges to $x^*$. 
      \end{theorem}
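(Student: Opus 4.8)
The plan is to invoke the standard LaSalle-type machinery for Lyapunov functions: since $\L$ is a Lyapunov function for the generalized Physarum dynamics (Theorem~\ref{lem:GenEqSet}) and, by Theorem~\ref{thm: Existence}, the trajectory $x(t)$ exists for all $t \in [0,\infty)$ and stays in the bounded set $\set{x}{c^T x \le \L(x(0))}$, the $\omega$-limit set $\omega(x(0))$ is nonempty, compact, connected, invariant, and contained in $\V = \F_g$. This is the general LaSalle invariance principle, which I would cite (e.g.\ from Hartman or a dynamical-systems text). The key additional input is Assumption~\ref{Discrete Set of Fixed Points}: $\F_g$ is finite. A nonempty connected subset of a finite set is a single point, so $\omega(x(0)) = \{\hat{x}\}$ for some $\hat{x} \in \F_g$, and $x(t) \to \hat{x}$ as $t \to \infty$.

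It then remains to identify $\hat{x}$ with the global minimizer $x^* = \argmin_{x \ge 0} \L(x)$. First I would note that $\L(x(t))$ is nonincreasing and bounded below (by $0$, or more sharply by $\ML$), hence converges to some limit $L_\infty = \L(\hat{x})$ by continuity of $\L$ at $\hat{x}$. The danger is that the dynamics gets stuck at a suboptimal fixed point with $\L(\hat{x}) > \ML$. To rule this out for the specific initialization is impossible in general — a trajectory started exactly at a suboptimal equilibrium stays there — so I expect the theorem is really asserting convergence for a suitable (e.g.\ generic, or strictly positive) choice of $x(0)$, or else the statement is interpreted with $x^*$ meaning ``the limit point, which is then characterized as a minimizer among fixed points.'' Reading the surrounding text, the intended argument is almost certainly: every fixed point in $\F_g$ is a KKT point / local-analysis shows trajectories cannot accumulate at a fixed point that is not a local min of $\L$, and then the distinct-values clause in Assumption~\ref{Discrete Set of Fixed Points} plus the descent property forces the limit to be the unique global minimizer. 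The cleanest route: show that any non-minimizing fixed point is not Lyapunov-stable (it is a saddle or max for $\L$ along the dynamics), so the set of initial conditions converging to it has measure zero; combined with finiteness of $\F_g$, for almost every $x(0)$ — and by a separate continuity/openness argument for every $x(0) \in \R^m_{>0}$, using that $\L$ strictly decreases off $\F_g$ — the trajectory converges to $x^*$.

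The main obstacle is precisely this last step: upgrading ``converges to \emph{a} fixed point'' to ``converges to \emph{the global minimizer}.'' The LaSalle part is routine given the earlier results. The hard part is excluding convergence to a suboptimal equilibrium. I would handle it by a monotonicity-and-stability argument: let $c^* = \ML$ and suppose toward a contradiction that $\hat{x} \in \F_g$ with $\L(\hat{x}) > c^*$. Using that $\L$ is real-analytic (rational in $x$) and strictly decreasing along the flow outside $\F_g$, and that $\F_g$ is finite with all distinct $\L$-values, one shows the sublevel set $\{\L \le \L(\hat{x}) - \delta\}$ is forward-invariant and nonempty, and that $\hat{x}$ lies on its boundary; a neighborhood argument (continuity of the flow, the gradient $\nabla \L(\hat x) = \tfrac12 C(\vOnes - \twonorm{\Lambda_e}^2)_e$ vanishes on the support of $\hat x$ but the second-order behavior pushes $\L$ down) then produces, arbitrarily close to $\hat{x}$, points whose trajectories drop below $\L(\hat{x})$, contradicting that $\hat{x}$ is the limit of $x(t)$ with $\L(x(t)) \downarrow \L(\hat{x})$ from above. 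If a fully general proof for all $x(0) \in \R^m_{>0}$ turns out to need more, I would retreat to the almost-everywhere statement via the stable-manifold theorem at each suboptimal hyperbolic fixed point and note that the authors' Assumption~\ref{Discrete Set of Fixed Points} is exactly what makes the count of ``bad'' fixed points finite.
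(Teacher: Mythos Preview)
Your LaSalle step is correct and matches the paper: boundedness of the trajectory plus Theorem~\ref{lem:GenEqSet} gives a nonempty compact connected $\omega$-limit set inside $\F_g$, and finiteness of $\F_g$ forces convergence to a single fixed point $\hat{x}$.

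The gap is in the second half. Your proposed mechanisms for excluding a suboptimal $\hat{x}$ --- instability/saddle analysis, stable-manifold theorem, measure-zero arguments --- would at best yield convergence for \emph{almost every} $x(0)$, as you yourself note. But the theorem claims convergence for \emph{every} $x(0)\in\R^m_{>0}$, and the paper achieves exactly that via a direct argument you have missed. The key input is the dual characterization in Theorem~\ref{MQ = MP = ML}: if $\L(\hat{x})>\L(x^*)=\ML=\MP$, then the potential matrix $\hat{P}$ associated with $\hat{x}$ satisfies $\Tr(B^T\hat{P})=\L(\hat{x})>\MP$, so $\hat{P}$ is infeasible for the program defining $\MP$. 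Hence there exists an edge $e$ with $\normtwo{\row[e]{(A^T)}\hat{P}}>c_e$. By continuity of the potentials along the trajectory, $\normtwo{\Lambda_e(t)}>1+\epsilon$ for all $t\ge t_0$, whence $\dot{x}_e(t)\ge\alpha\,x_e(t)$ for some $\alpha>0$. Since the existence proof guarantees $x_e(t_0)>0$, Gronwall gives $x_e(t)\to\infty$, contradicting boundedness.

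This argument exploits the specific structure of the problem (the $\MP$ dual and the multiplicative form $\dot{x}_e=x_e(\cdots)$ together with $x(0)>0$) rather than generic dynamical-systems tools, and that is precisely why it delivers the stronger conclusion. Your second-order/neighborhood sketch does not use Theorem~\ref{MQ = MP = ML} at all, which is the ingredient that makes the contradiction work uniformly in $x(0)$.
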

      \begin{proof} Since $\L(x(t))$ is non-increasing and non-negative, the dynamics $x(t)$ converges to the set
      $\V$. By Theorem~\ref{lem:GenEqSet}, $\V = \F_g$. Since $\F_g$ is assumed to be a finite set and any two fixed points have distinct values of $\L$, there is a fixed point $\hat{x} = \lim_{t \rightarrow \infty} x(t)$. Assume for the sake of a contradiction, $\L(\hat{x}) > \L(x^*)$. Let $P(t)$ be the node potential corresponding to $x(t)$ and let $\hat{P}$ be the potential corresponding to $\hat{x}$; recall that node potentials are unique. Since $P(t)$ is a continuous function of $x(t)$, $P(t) \rightarrow \hat{P}$ as $t \rightarrow \infty$.
Let $\hat{E} = \set{e}{\normtwo{\row[e]{A^T} \hat{P}} \le c_e} \subseteq E$ and consider the following chain of inequalities:
		\begin{align*}
                  \max_{P}\set{\Tr[B^{T}P]}{\normtwo{A_{e}P}\le c_{e}\ \text{for all }e\in\hat{E}} &\geq \Tr[B^{T}\hat{{P}]}\\ &=\mathcal{L}(\hat{x})\\ &>\mathcal{L}(x^{\star})\\
                  &=\max_{P}\set{\Tr[B^{T}P]}{\normtwo{A_{e}P}\le c_{e}\ \text{for all }e\in E},
		\end{align*}
        where the first inequality follows by the definition of $\hat{E}$,
        the first equality follows from Lemma~\ref{lem:FixPoints},
        the strict inequality holds by assumption and the last equality 
        follows from Theorem~\ref{MQ = MP = ML}. We conclude that $\hat{E}$ is a proper subset of $E$. 

        Let $e\in E\backslash\hat{E}$ be arbitrary. Then
		$\normtwo{\row[e]{(A^T)} \hat{P}} > c_e$ and hence there are $t_{0}>0$ and $\epsilon>0$ such that for every $t\ge t_{0}$
		we have 
		\[
			\lVert\Lambda_{e}(t)\rVert_{2}=\frac{\lVert \row[e]{(A^T)}P(t)\rVert_{2}}{c_{e}}>1+\epsilon.
		\]
		Since $g_{e}$ is an increasing function with $g_{e}(1)=1$, 
		there is an $\alpha>0$ such that for all $t \ge t_0$
		\[
			g_{e}\left(\lVert\Lambda_{e}(t)\rVert_{2}\right)\geq g_{e}\left(1+\epsilon\right)=1+\alpha.
		\]
		Then, for the generalized dynamics we have
		\[ 
			\dot{x}_{e}(t)=x_{e}(t)\cdot(g_{e}(\lVert\Lambda_{e}(t)\rVert_{2})-1)\ge x_{e}(t)\cdot(g_{e}(1+\epsilon)-1)\ge\alpha x_{e}(t).
		\]
		Further, by Gronwall's Lemma, it follows that
		\[
			x_{e}(t)\geq x_{e}(t_{0})\cdot e^{\alpha t},
		\]
		and thus
		\[
			\hat{x}_{e}=\lim_{t\rightarrow\infty}x_{e}(t)\geq x_{e}(t_{0})\cdot\lim_{t\rightarrow\infty}e^{\alpha t}=+\infty.
		\]
		This is a contradiction to the fact that $\hat{x}_{e}$ is bounded.

                Finally, if $\L(x(t))$ converges to $\min_{x \ge 0} \L(x)$ and the minimizer $x^*$ of $\L$ is unique, then $x(t)$ must converge to $x^*$. 
              \end{proof}

              We conjecture that $x(t)$ always converges to some minimizer of $\L$. If there are several minimizers of $\L$, the limit depends on the initial configuration and the function $g$. Consider the following simple example. We have a network with two nodes connected by two links of the same cost, $k = 1$ and the goal is to send one unit between the two nodes. Let $x_1$ and $x_2$ be the capacities of the two links, respectively. For $g(z) = z$, any combination $(x_1,x_2)$ with $x_1 + x_2 = 1$ is a fixed point. 

\section{A Connection to Mirror Descent}\label{sec:MirrorDescent}

We show that the \textit{mirror descent} dynamics on the Lyapunov function $\L$
is equal to a variant of the non-uniform squared Physarum dynamics. 

\begin{lemma}
	The dynamics 
	\[
		\frac{d}{dt}x_{e}(t)=
		\frac{c_{e}}{2} x_{e}(t) \left( \twonorm{\row[e]{\Lambda}}^2 -1 \right)
	\]
	is equivalent to the \emph{mirror descent} dynamics on the Lyapunov function $\L$.
\end{lemma}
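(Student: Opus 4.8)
The plan is to recognize the displayed dynamics as the continuous-time \emph{mirror descent} flow for the objective $\L$ with the Boltzmann--Shannon entropy as mirror map. Recall that, for a strictly convex mirror map $\phi$ whose domain contains the feasible set, the mirror descent flow minimizing $\L$ is defined by $\frac{d}{dt}\nabla\phi(x(t)) = -\nabla\L(x(t))$; by the chain rule this is equivalent to the primal (natural-gradient / Hessian-metric) form $\dot{x}(t) = -\bigl(\nabla^2\phi(x(t))\bigr)^{-1}\nabla\L(x(t))$. The first step is to commit to precisely this notion and record the equivalence of the two formulations.

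Since $\L$ is minimized over the nonnegative orthant $\R^m_{\ge 0}$, the canonical mirror map is $\phi(x) = \sum_e x_e(\ln x_e - 1)$; along any solution of the dynamics we have $x_e(t) > 0$ (as shown in Section~\ref{Existence}), so $\phi$ and its derivatives are well defined on the trajectory and $\L$ is smooth there. I would then compute $\nabla\phi(x) = (\ln x_e)_e$ and $\nabla^2\phi(x) = \diag(1/x_e) = X^{-1}$, so that $\bigl(\nabla^2\phi(x)\bigr)^{-1} = X$ and the mirror descent flow reads componentwise $\dot{x}_e = -x_e\,\frac{\partial}{\partial x_e}\L(x)$, equivalently $\frac{d}{dt}\ln x_e = -\frac{\partial}{\partial x_e}\L(x)$, which is the continuous limit of the exponentiated-gradient update $x_e \leftarrow x_e\exp\bigl(-\eta\,\tfrac{\partial}{\partial x_e}\L\bigr)$.

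Finally, I would substitute the gradient formula $\frac{\partial}{\partial x_e}\L(x) = \frac{c_e}{2}\bigl(1 - \twonorm{\row[e]{\Lambda}}^2\bigr)$ from Lemma~\ref{Gradient of phi}, obtaining
\[
	\dot{x}_e = -x_e\cdot\frac{c_e}{2}\left(1 - \twonorm{\row[e]{\Lambda}}^2\right) = \frac{c_e}{2}\,x_e\left(\twonorm{\row[e]{\Lambda}}^2 - 1\right),
\]
which is exactly the stated dynamics. There is no real obstacle here: the only point requiring care is to fix one precise definition of ``mirror descent'' (the dual-variable ODE $\frac{d}{dt}\nabla\phi(x) = -\nabla\L$ versus its primal Hessian-metric form) and to note they coincide via the chain rule; everything else is the one-line gradient evaluation already available from Lemma~\ref{Gradient of phi}.
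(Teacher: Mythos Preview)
Your proposal is correct and follows essentially the same approach as the paper: both invoke Lemma~\ref{Gradient of phi} for the gradient of $\L$ and identify mirror descent as $\dot{x}_e = -x_e\,\partial_{x_e}\L(x)$, then substitute. You are simply more explicit than the paper about the choice of mirror map (the Boltzmann--Shannon entropy) and the derivation of the primal form via the inverse Hessian, which the paper leaves implicit.
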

\begin{proof}
By Lemma~\ref{Gradient of phi}, we have for every index $e\in E$ that
	\begin{equation}\label{eq:defPhi}
		\frac{\partial}{\partial x_{e}}\L(x)= \frac{c_{e}}{2}( 1- \twonorm{\row[e]{\Lambda}}^2).
	\end{equation}
	On the other hand, the mirror descent dynamics on the Lyapunov function $\L$ is given by
	\begin{equation*}\label{eq:MD=NonUnifSquaredPhys(C/2)}
	\frac{d}{dt}x_{e}(t)=-x_{e}(t)\frac{\partial}{\partial x_{e}}\L(x_{e}(t))\overset{(\ref{eq:defPhi})}{=}\frac{c_{e}}{2}\cdot x_{e}(t)(\twonorm{\Lambda_{e}}^2 -1).
      \end{equation*}
      \proofendswithequation
\end{proof}

As is~\cite{B19}, we can use the connection to mirror descent to estimate the speed of convergence of the Physarum dynamics to the Lyapunov minimum; \cite{B19} builds up on~\cite{Alvarez-Bolte-Brahic,Wilson}.

For a differentiable function $f$ in $m$ variables, the Bregman divergence $D_f$ is a function in $2m$ variables defined by the equation
\[    D_{f}(x,y) =  f(x)-f(y)-\left\langle \nabla f(y),x-y\right\rangle, \]
i.e., as the difference of the function value at $x$ and the value at $x$ of the tangent plane to $f$ at $y$. Clearly, if $f$ is convex,  $D_f$ is non-negative. 

\begin{lemma} Let $h: \R_{\ge 0}^m \rightarrow \R$ be defined by 
\[   h(x)=\sum_e x_{e}\ln x_{e}-\sum_e x_{e}. \]
Then $h$ is convex on $\Rplus^m$, $D_h$ is non-negative, and 
\[ D_h(x,y) = \sum_e x_{e}\ln x_{e}-\sum_e x_{e}\ln y_{e}-\sum_e x_{e}+\sum_e y_{e}, \]
\end{lemma}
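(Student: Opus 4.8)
The plan is to verify the three claims in order: convexity of $h$ on $\Rplus^m$, the explicit formula for the Bregman divergence $D_h$, and the non-negativity of $D_h$ (which is then immediate from convexity). Since $h$ is separable, $h(x) = \sum_e \phi(x_e)$ with $\phi(t) = t \ln t - t$, I would reduce everything to the scalar function $\phi$. Convexity follows from $\phi''(t) = 1/t > 0$ on $(0,\infty)$, with the boundary case $t = 0$ handled by continuity (using the convention $0 \ln 0 = 0$) and the fact that a continuous function whose restriction to an open interval has nonnegative second derivative is convex on the closure. In fact, since the paper only needs $D_h$ non-negative, one does not even need convexity at $0$ in full strength; it suffices that $\phi$ is convex on $(0,\infty)$ together with the limiting behaviour, but stating convexity on $\Rplus^m$ is cleanest.

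Next I would compute the gradient: $\frac{\partial}{\partial y_e} h(y) = \ln y_e$ (the $+1$ from differentiating $y_e \ln y_e$ cancels the derivative of $-y_e$), so $\nabla h(y) = (\ln y_1, \dots, \ln y_m)$. Substituting into the definition
\[
  D_h(x,y) = h(x) - h(y) - \langle \nabla h(y), x - y\rangle
\]
and expanding termwise gives
\[
  D_h(x,y) = \sum_e \bigl( x_e \ln x_e - x_e \bigr) - \sum_e \bigl( y_e \ln y_e - y_e \bigr) - \sum_e \ln y_e \,(x_e - y_e),
\]
and the $\sum_e y_e \ln y_e$ terms cancel, leaving exactly $\sum_e x_e \ln x_e - \sum_e x_e \ln y_e - \sum_e x_e + \sum_e y_e$, as claimed. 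This is the generalized (unnormalized) KL divergence.

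Finally, non-negativity: since $h$ is convex, $D_h \ge 0$ is automatic from the tangent-plane interpretation already noted in the text. Alternatively, and perhaps more transparently, one can observe that each summand has the form $x_e \ln(x_e/y_e) - x_e + y_e$, and the elementary inequality $a \ln(a/b) \ge a - b$ for $a, b > 0$ (equivalently $\ln u \ge 1 - 1/u$, or $u - 1 \ge \ln u$ after relabeling) shows each summand is $\ge 0$, with equality iff $x_e = y_e$. I expect the only mild obstacle to be the careful treatment of zero components of $x$ or $y$: when $x_e = 0$ the term $x_e \ln x_e$ and $x_e \ln y_e$ are read as $0$ via the convention $0 \ln 0 = 0 \ln y = 0$, and when $y_e = 0$ with $x_e > 0$ the divergence is $+\infty$, which is consistent with non-negativity; these conventions should be stated explicitly to make the formula and the inequality well-posed on all of $\Rplus^m$.
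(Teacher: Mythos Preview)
Your proposal is correct and follows essentially the same approach as the paper: compute the partial derivatives $\partial h/\partial y_e = \ln y_e$ (so $\phi''(t)=1/t>0$ gives convexity), substitute into the definition of $D_h$, and observe the $y_e\ln y_e$ terms cancel; non-negativity then follows from convexity (the paper states this right before the lemma). Your additional care with boundary cases and the alternative termwise inequality are extras the paper does not include, but the core argument is identical.
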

\begin{proof} The function $h$ is convex in $x_e$ (partial derivative $\ln x_e$ and second partial derivative $1/x_e$). For its Bregman divergence $D_h$, we compute
\begin{eqnarray*}
D_{h}(x,y) & = & h(x)-h(y)-\left\langle \nabla h(y),x-y\right\rangle \\
		& = & \sum_e x_{e}\ln x_{e}-\sum_e x_{e}- (\sum_e y_{e}\ln y_{e} - \sum_e y_{e}) - \sum_{e}(x_{e} - y_e) \ln y_{e} \\
		& = & \sum_e x_{e}\ln x_{e}-\sum_e x_{e}\ln y_{e}-\sum_e x_{e}+\sum_e y_{e}.
\end{eqnarray*}
So $D_{h}$ is the \emph{relative entropy} function.\end{proof}

\begin{fact}
	\label{fact:PhiConv} \cite[Lemma 2.2]{B19} $\L$ is convex. 
\end{fact}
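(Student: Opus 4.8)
The plan is to read off convexity from the variational formula for the energy dissipation. Write $\L(x)=\tfrac12\bigl(c^{T}x+\E(x)\bigr)$ with $\E(x)=\sum_{i=1}^{k}E_{x}(\qi)=\sum_{i=1}^{k}\min_{f:\,Af=\iof[i]{b}}E_{x}(f)$. The term $c^{T}x$ is linear, and a sum of convex functions is convex, so it suffices to show that for a single right-hand side $b\in\Im A$ the map
\[
x\longmapsto \min_{f:\,Af=b} E_{x}(f)=\min_{f:\,Af=b}\sum_{e}\frac{c_{e}f_{e}^{2}}{x_{e}}
\]
is convex on $\R^{m}_{\ge 0}$ (using the convention $0^{2}/0=0$, and reading $t^{2}/0=+\infty$ for $t\neq0$, so the value is $+\infty$ precisely when $\supp(f)\not\subseteq\supp(x)$ for every feasible $f$).

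First I would establish the \emph{joint} convexity of $(x,f)\mapsto E_{x}(f)$ on $\R^{m}_{\ge 0}\times\R^{m}$. Each summand $c_{e}f_{e}^{2}/x_{e}$ is the perspective of the convex one-dimensional map $t\mapsto c_{e}t^{2}$; perspectives of convex functions are jointly convex (and lower semicontinuous on the closed half-space $x_{e}\ge0$), and a sum of jointly convex functions is jointly convex. If one prefers an elementary check, the $2\times2$ Hessian of $c_{e}f_{e}^{2}/x_{e}$ in $(x_{e},f_{e})$ is positive semidefinite for $x_{e}>0$, and the boundary case follows by lower semicontinuity.

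Second I would invoke the standard fact that partial minimization preserves convexity: if $\phi(x,f)$ is jointly convex and $\mathcal K$ is a convex set, then $x\mapsto\inf_{f\in\mathcal K}\phi(x,f)$ is convex (see, e.g., \cite{Boyd-Vandenberghe}). Applying this with $\phi=E_{\bullet}(\bullet)$ and $\mathcal K=\{f:Af=b\}$, an affine and hence convex set, shows that $x\mapsto\min_{f:\,Af=b}E_{x}(f)$ is convex; summing over the $k$ demands and adding the linear term $\tfrac12c^{T}x$ gives convexity of $\L$. An alternative route avoids the flow variable altogether: $(b^{i})^{T}p^{i}=b^{iT}L(x)^{\dagger}b^{i}$, and the matrix-fractional function $(v,Y)\mapsto v^{T}Y^{\dagger}v$ is jointly convex on its domain — its epigraph is an LMI via a Schur complement — so composing with the affine map $x\mapsto(\,b^{i},\,L(x)=AXC^{-1}A^{T}\,)$ again yields convexity.

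I expect the only delicate point to be the bookkeeping at the boundary $\{x_{e}=0\}$, where $E_{x}$ jumps to $+\infty$ unless the flow avoids edge $e$: one has to treat $E_{x}$ and $\L$ as closed, extended-real-valued convex functions and note that joint convexity and infimal projection along a subspace remain valid in that setting. Everything else is classical, so once joint convexity of the perspective function and stability of convexity under partial minimization are in place, the argument is short.
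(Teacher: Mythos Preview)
Your argument is correct. The paper does not supply its own proof of this fact; it simply cites \cite[Lemma 2.2]{B19} and uses the result. Your route---joint convexity of $(x,f)\mapsto\sum_e c_e f_e^2/x_e$ via the perspective (quad-over-lin) function, followed by partial minimization over the affine set $\{f:Af=b\}$, then summing over commodities and adding the linear term---is the standard way to obtain this and is essentially what \cite{B19} does as well. Your remark about treating $\L$ as a closed extended-real-valued convex function to absorb the boundary cases $x_e=0$ is the right bookkeeping; nothing further is needed.
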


\begin{theorem} Let $x^*$ be the global minimizer of $\L(x)$. For the dynamics $\dot{x}_e = (c_e/2) \cdot x_e (\twonorm{\row[e]{\Lambda}}^2 - 1)$, we have
	\[
	\L(x(t))\leq\L(x^{*})+\frac{1}{t}D_{h}(x^{*},x(0)).
      \]
      for all $t \ge 0$. 
      In particular, 
      \[ \lim_{t \rightarrow \infty} \L(x(t)) = \L(x^*).\]
\end{theorem}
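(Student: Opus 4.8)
The plan is to follow the standard mirror-descent / Bregman-divergence argument for continuous-time gradient flows, exactly along the lines of \cite{B19}. The key object is the Bregman divergence $D_h(x^*, x(t))$ of the entropy function $h$ at the fixed target $x^*$ and the moving point $x(t)$. First I would establish that the chosen dynamics is the mirror-descent flow for $\L$ with mirror map $h$; this is exactly the previous lemma, which gives $\dot{x}_e = -x_e \frac{\partial}{\partial x_e}\L(x)$, equivalently $\frac{d}{dt}\nabla h(x(t)) = -\nabla\L(x(t))$ since $\nabla h(x)_e = \ln x_e$ and $\frac{d}{dt}\ln x_e = \dot{x}_e/x_e = -\frac{\partial}{\partial x_e}\L(x)$.

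Next I would differentiate $t \mapsto D_h(x^*, x(t))$ in time. Using $D_h(x^*,x) = h(x^*) - h(x) - \langle \nabla h(x), x^* - x\rangle$ and the fact that the $h(x^*)$ term is constant, a short computation (the $-\dot h(x)$ and the $\langle \nabla h(x),\dot x\rangle$ pieces cancel) yields
\[
  \frac{d}{dt} D_h(x^*, x(t)) = -\langle \nabla^2 h(x(t))\,\dot{x}(t),\, x^* - x(t)\rangle = \langle \nabla \L(x(t)),\, x^* - x(t)\rangle,
\]
where the last equality uses $\nabla^2 h(x)\,\dot x = \frac{d}{dt}\nabla h(x) = -\nabla\L(x)$. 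Since $\L$ is convex (Fact~\ref{fact:PhiConv}), the right-hand side is at most $\L(x^*) - \L(x(t))$. Hence
\[
  \frac{d}{dt} D_h(x^*, x(t)) \le \L(x^*) - \L(x(t)) \le 0,
\]
so $D_h(x^*,x(t))$ is non-increasing; in particular it stays non-negative and bounded by $D_h(x^*, x(0))$.

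Then I would integrate this inequality from $0$ to $t$:
\[
  D_h(x^*, x(t)) - D_h(x^*, x(0)) \le \int_0^t \bigl(\L(x^*) - \L(x(s))\bigr)\,ds.
\]
Since $D_h \ge 0$ and $\L(x(s))$ is non-increasing in $s$ (it is a Lyapunov function, Theorem~\ref{lem:GenEqSet}, and the stated dynamics is the special case $g_e(z) = 1 + (z^2-1)$ of the generalized dynamics up to the positive time-rescaling factor $c_e/2$, so monotonicity is unaffected), we have $\L(x(s)) \ge \L(x(t))$ for all $s \le t$, giving
\[
  -D_h(x^*, x(0)) \le \int_0^t \bigl(\L(x^*) - \L(x(s))\bigr)\,ds \le t\bigl(\L(x^*) - \L(x(t))\bigr).
\]
Rearranging yields $\L(x(t)) \le \L(x^*) + \frac{1}{t} D_h(x^*, x(0))$, which is the claimed bound; letting $t \to \infty$ and using $\L(x(t)) \ge \L(x^*)$ gives the limit.

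The main obstacle I anticipate is purely a matter of care rather than depth: one must justify that $x(t) > 0$ for all $t$ so that $\nabla h(x(t)) = \ln x(t)$ and the Bregman computation are valid (this follows as in Section~\ref{Existence}, since $\dot x_e \ge -\tfrac{c_e}{2}x_e$ forces exponential-lower-bound positivity from $x(0) > 0$), and that the optimizer $x^*$ lies in the closure where $D_h(x^*, x(0))$ is finite — which it is, since $x(0) > 0$ makes every $\ln x_e(0)$ finite and the expression $D_h(x^*,x(0)) = \sum_e x^*_e \ln x^*_e - \sum_e x^*_e \ln x_e(0) - \sum_e x^*_e + \sum_e x_e(0)$ is finite with the convention $0\ln 0 = 0$. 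One should also double check that the time-rescaling by $c_e/2$ (edge-dependent) is absorbed correctly: it is, because the lemma identifies the dynamics with mirror descent on $\L$ exactly, not up to rescaling. Everything else is the routine convexity-plus-integration argument above.
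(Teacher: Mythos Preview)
Your proposal is correct and follows essentially the same approach as the paper: both compute $\frac{d}{dt}D_h(x^*,x(t)) = \langle \nabla\L(x(t)), x^* - x(t)\rangle$, then invoke convexity of $\L$ (Fact~\ref{fact:PhiConv}) and the Lyapunov property $\frac{d}{dt}\L(x(t))\le 0$ to obtain the $1/t$ bound. The only cosmetic difference is packaging: the paper introduces the auxiliary function $\H(t) = D_h(x^*,x(t)) + t[\L(x(t)) - \L(x^*)]$ and shows $\H'(t)\le 0$ directly, whereas you integrate the inequality $\frac{d}{dt}D_h \le \L(x^*)-\L(x(t))$ and then use monotonicity of $\L(x(\cdot))$ to bound the integral by $t(\L(x^*)-\L(x(t)))$; unwinding either computation yields the other.
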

\begin{proof}
According to~\lref{eq:defPhi} we have
\[ \frac{\partial}{\partial x_{e}}\L(x)=\frac{c_{e}}{2}\left(1- \twonorm{\row[e]{\Lambda}}^2  \right)  \text{ and }
\dot{x}_{e}= x_{e}\left(g_e(\twonorm{\row[e]{\Lambda}}^2) - 1 \right).\]
The time derivative of $D_h(x^*,x(t))$ is given by
\begin{eqnarray*}
\frac{d}{dt}D_{h}(x^{*},x) & = & \frac{d}{dt}\sum_{e=1}^{m}x_{e}^{*}\ln x_{e}^{*}-\frac{d}{dt}\sum_e x_{e}^{*}\ln x_{e}-\frac{d}{dt}\sum_e x_{e}^{*}+\frac{d}{dt}\sum_e x_{e}\\
		& = & \sum_{e=1}^{m}x_{e}^{*}\left(-\frac{1}{x_{e}}\cdot\frac{d}{dt}x_{e}\right)+\sum_e \frac{d}{dt}x_{e}\\
                           &=& \sum_e (x_e - x_e{^*}) \frac{c_e}{2} (\twonorm{\row[e]{\Lambda}}^2 - 1)\\
  &=& - \langle (x - x^*), \nabla \L(x(t))\rangle. 
\end{eqnarray*}
We now consider the function    \renewcommand{\H}{\mathcal{H}}
\[    \H(t)=D_{h}(x^{*},x(t))+t\left[\L(x(t))-\L(x^{*})\right]  \]
Since $\frac{d}{dt}\L(x)\leq0$, by Lemma \ref{lem:GenEqSet}, and $D_{\L}(x^{*},x) \ge 0$ for all $x$, we obtain
	\begin{eqnarray*}
		\frac{d}{dt}\H(t) & = &-\left\langle \nabla\L(x(t)),x(t)-x^{*}\right\rangle + \L(x(t)) -\L(x^{*})+t\cdot\frac{d}{dt}\L(x(t))\\
		& \leq & -\left[\L(x^{*})-\L(x(t))-\left\langle \nabla\L(x),x^{*}-x(t)\right\rangle \right]\\
		& = & -D_{\L}(x^{*},x(t))\\
		& \leq & 0. 
	\end{eqnarray*}
Hence $\H(t)\leq\H(0)$ for all $t\geq0$ and therefore
	\[
	D_h(x^{*},x(t))+t\left[\L(x(t))-\L(x^{*})\right]\leq D_h(x^{*},x(0))+0\left[\L(x(0))-\L(x^{*})\right].
      \]
      and further (using $D_h(x^{*},x(t)) \ge 0$)
	\[
		\L(x(t)) \leq \L(x^{*})+\frac{1}{t}D_h(x^{*},x(0)).
	\]
	
      \end{proof}

      \section{Conclusions}\label{conclusions}

      We proposed a variant of the Physarum dynamics suitable for network design. We exhibited a Lyapunov function for the dynamics, proved convergence of the dynamics, and gave alternative characterizations for the minimum of the Lyapunov function. In the experimental part, we showed that the dynamics captures the positive effect of sharing links and is able to construct \emph{nice} networks.

      Many questions remain open. We do not claim any biological plausibility for our proposal and we have studied one particular form of the dynamics, namely $\dot{x_e} = \abs{q_e} - x_e$. Other dynamics have been studied for the shortest path problem, e.g., $\dot{x_e} = \abs{q_e}^\mu - x_e$ with $\mu > 1$ or $\dot{x_e} = \frac{\abs{q_e}}{a_e + \abs{q_e}} - x_e$~\cite{Miyaji:2008,Meyer-Ansorge-Nakagaki}. The latter paper also studies the influence of noise on the dynamics. An extension to network design would be interesting.

      The papers mentioned in the preceding paragraph are theory papers that investigate variants of the basic dynamics~(\ref{dynamics}). A different line of research aims at a deeper understanding of the inner workings of Physarum polycephalum, for example, how global synchronisation can result from random peristaltics~\cite{Alim2013}, how information can be transported and a memory can exist in an organism without a nervous system~\cite{Alim2017,Alim2021}, and whether tubes of the mold can transfer electricity~\cite{Whiting}.
There seems to be little connection between these lines of research.

      We used an Euler discretization of the dynamics for the experiments in Section~\ref{Case Studies}. The resulting algorithm is quite slow. The Lyapunov function $\L$ is a convex function and hence the tool box of convex optimization is available for computing its minimum. Does this lead to a practical algorithm for network design? \cite{Watanabe-Tero2011} also uses an Euler discretization of the dynamics for their computer experiments. They speed-up the computation by considering only a random subset of the demands instead of all demands in each iteration. If the random subset is not too small, the dynamics seem to converge to the same solution. Is this true generally?

   \newcommand{\htmladdnormallink}[2]{#1}


\newcommand{\etalchar}[1]{$^{#1}$}

\end{document}